\documentclass{article}
\usepackage{graphicx} 
\usepackage[all,cmtip]{xy}
\usepackage{xcolor}

\usepackage{subcaption}
\usepackage[T2A]{fontenc}
\usepackage[export]{adjustbox}
\usepackage[utf8]{inputenc}
\usepackage[english]{babel}
\usepackage{amsmath,amssymb,amsfonts,amsthm,pifont,mathtools,mathrsfs}
\usepackage{amsmath}
\usepackage{cite,enumerate,float,indentfirst}
\graphicspath{{images/}{Figures/}}
\usepackage{tikz-cd} 
\usepackage{pgf}

\renewcommand{\Re}{\mathrm{Re}}
\renewcommand{\Im}{\mathrm{Im}}

\usepackage{hyperref}
\newcommand\restr[2]{{
  \left.\kern-\nulldelimiterspace 
  #1 
  \vphantom{\big|} 
  \right|_{#2} 
  }}

\newcommand{\RR}{\mathbb{R}}
\newcommand{\CC}{\mathbb{C}}

\newcommand{\ZZ}{\mathbb{Z}}

\newcommand{\pp}{\partial}

\newcommand{\bb}{\begin{equation}}
\newcommand{\ee}{\end{equation}}
\usepackage[english]{babel}

\usepackage{comment}

\DeclarePairedDelimiter\floor{\lfloor}{\rfloor}

\newtheorem{theorem}{Theorem}

\newtheorem{proposition}{Proposition}

\newtheorem{remark}{Remark}

\title{Limit shapes and harmonic tricks}
\author{Nikolai Kuchumov \thanks{%
    {\small Åbo Akademi University,
      \texttt{nikolai.kuchumov@abo.fi}}}}
\begin{document}

\maketitle
\begin{abstract}
This article has two main goals. First, it provides a self-contained exposition of the tangent plane method for the dimer model — a technique for analyzing arctic curves and limit shapes introduced by R. Kenyon and I. Prause (2020). Second, it extends this method to multiply connected domains through a nontrivial computation of the frozen boundary for the Aztec diamond with a hole. This computation yields the first explicit parametrization in terms of elliptic functions of a family of arctic curves of a multiply-connected region indexed by the height change (hole height). We also derive and visualize the corresponding limit height functions.
\end{abstract}

\section{Introduction}
\label{sect:1}
The planar dimer model has been an active area of research for several decades. It originated in the 1960s in the physics literature on equilibrium statistical mechanics, in the works of Kasteleyn and Temperley–Fisher \cite{Kasteleyn,TF}. The model later attracted the attention of the mathematical community through studies of the Aztec diamond and its connection to alternating sign matrices \cite{Propp_aztec}. This naturally led to the question: What does a typical domino tiling of a large planar domain look like?
The answer for the large Aztec diamond was given by the Arctic Circle Theorem \cite{JPS}. Subsequent work extended this result to other simply connected regions \cite{CKP} and to various integrable domains generalizing the Aztec diamond \cite{BK,Colomo-Sportiello,Di_francesco_guitter_aztec}. The first analysis of a multiply connected region appeared in \cite{BG}, and the corresponding variational principle was extended in our previous work \cite{Kuchumov:dominoes}. In the present paper, we perform the first explicit computation of a one-parameter family of limit shapes, together with the arctic curves of a multiply connected version of the Aztec diamond parametrized by a modular parameter $\tau$. This region is a large Aztec diamond order $N$ with a hole in the center consisting of a smaller Aztec diamond of order $\floor{\varkappa(\tau) N}$. For each $\tau$ we compute the asyptotics of a typical configuration of domino tilings of domain with the hole size $\varkappa(\tau)$. Equivalently, one could numerically reverse the dependence and view $\varkappa$ as the parameter of the family as it is the only "real" physical parameter of the domain.

\begin{figure}
    \centering
    \includegraphics[width=0.5\linewidth]{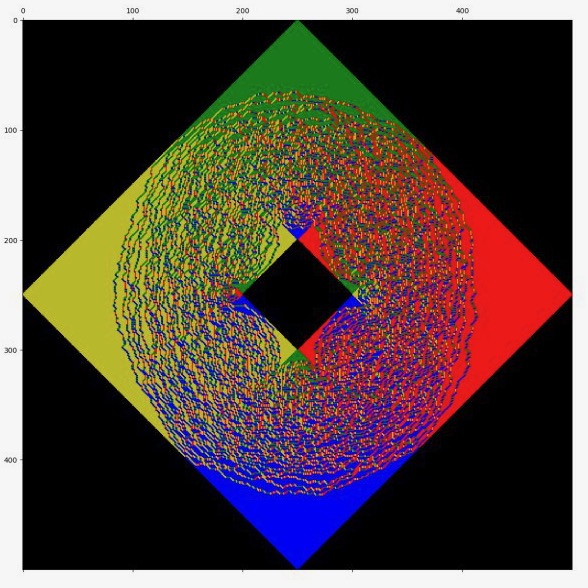}
    \caption{Computer simulation of a random domino tiling of Aztec diamond of order $AD_{500}$ with a centered hole consising of $AD_{125}$.}
    \label{fig:aztec_holey_sim}
\end{figure}

The dimer model deals with a finite bipartite graph \(\Gamma\), and a \emph{dimer configuration} (or \emph{dimer cover}) on \(\Gamma\) is a subset of its edges such that each vertex is incident to exactly one edge in the subset. When \(\Gamma\) is a subgraph of the square grid, dimer covers are in bijection with domino tilings of the dual graph of \(\Gamma\). Abusing notation, we will work with this graph instead. We think of it as a union of unit squares of the square grid. Let us also fix a chessboard coloring, so that we have black and white squares (they coexist).

Then, let us fix a uniform distribution on the set of domino tilings of $\Gamma$.
Turns out that the typical domino tiling of a large region exhibit a phenomenon similar to the Aztec diamond.
The Aztec Circle Theorem states that a random domino tiling of a large Aztec diamond is fixed outside the inscribed unit circle to the rescaled domain, while it remains random inside it.
The inner region is the \emph{rough region} (liquid region), the asymptotically deterministic regions are \emph{frozen regions}, and the curve separating them is the \emph{arctic curve} (frozen curve), which is a unit circle for the Aztec diamond, see \ref{fig:uni_aztec} for a computer simulation. In other regions the same phenomenon(separation of domain into rough and frozen regions) typically occurs.

\begin{figure}
    \centering
    \includegraphics[width=0.5\linewidth]{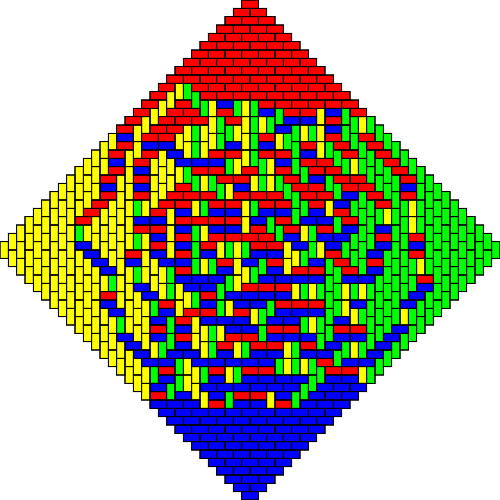}
    \caption{Uniformly random domino tiling of an Aztec diamond of order 30.}
    \label{fig:uni_aztec}
\end{figure}

A common tool to address this question is to encode a domino tiling by the \emph{height function}. It is a function $H_D:V(\Gamma)\to \ZZ$ defined by a certain combinatorial rule.
It turns out that in the limit of a large region, the re-scaled height function approximates a continuous Lipschitz function $\mathfrak{h}$ with overwhelming probability.
Moreover, in the frozen regions this function is linear with the maximal allowed slope, while in the rough region its gradient lies strictly inside the Newton polygon.
The main result of \cite{CKP}, the variational principle for random domino tilings, characterize $\mathfrak{h}$ as a minimizer of a certain variational function of the form $\int_{\Omega}\sigma(\nabla \mathfrak{h})dxdy$. Here $\sigma:\mathcal{N}\to\RR$ is a convex function called \emph{surface tension}, and $\mathcal{N}$ is an integer polygon, the so-called \textit{Newton polygon} of the model \cite{KOS}, which is the set of allowed slopes $\nabla \mathfrak{h}\in\mathcal{N}$.


The generality of the variational approach is offset by the difficulty of its implementation: the singular behavior of 
$\sigma$ complicates the associated Euler–Lagrange equation. The next step of understanding of this this equation was done in \cite{KO:burgers}, where the authors lowered the order of the Euler–Lagrange equation for lozenge tiling model by introducing the so-called \textit{complex gradient}(complex slope) map $z$. It satisfies the complex Burgers equation and the spectral curve condition in the liquid region, where it also defines a non-trivial complex structure. When it is approaching $\pp\mathcal{L}$, it gets a singularity. In the interior of $\mathcal{L}$, expressed in complex-gradient coordinates, the Kenyon--Okounkov conjecture predicts that the fluctuations are governed by the the pullback of the Gaussian free field from the upper-half plane $\mathbb{H}$ \cite{KO:burgers}. The authors also give an argument of existance of an algebraic arctic curve for a family of regions in the lozenge tiling model using methods of tropic geometry. The mathematically regorous work on existance and algebraic arctic curve for the dimer model was done in \cite{ADPZ}.

On the other hand, researchers have found ways to compute the limit shape using various methods, such as the Schur process introduced in \cite{Okounkov_Reshetikhin} and further developed through representation-theoretic approaches in \cite{BF,BG}. More recent developments include the tangent method, which also applies for the six-vertex model \cite{Colomo-Sportiello,D2}.

Then, the work \cite{KP:plane} focuses on a systematic approach to the study of limit shapes from the variational perspective. 
It realizes the graph of $\mathfrak{h}^{\star}$ as the envelope of its tangent planes
\[
\mathcal{P}_{x_0,y_0}
=\{(x,y,z)\in\mathbb{R}^3 \mid s(x_0,y_0)x+t(x_0,y_0)y+c(x_0,y_0)=z\},
\]
supplemented by an appropriate tangency condition.

The authors parametrize points $(x,y)\in\mathcal{L}$ by a single complex variable, the so-called intrinsic coordinate $u=u(x,y)$, which is a ramified cover of the complex slope $z$. This intrinsic coordinate uniformizes the liquid region $\Sigma \to \mathcal{L}$.
Then, by Theorem 1 in \cite{KO:burgers} for random lozenge tilings and Theorem 4.8 \cite{ADPZ} for the generic periodic dimer model, the gradient coordinates $s$ and $t$ are harmonic as functions of $u$. Moreover, the intercept function $c$ is also harmonic by Theorem 3.1 \cite{KP:plane}.

Then, the frozen boundary $\partial\mathcal{L}$ can be obtained parametrically from $\partial\Sigma$ via the tangency condition. One first constructs the harmonic extensions of $s(u)$, $t(u)$, and $c(u)$. Then, for each $u\in\partial\Sigma$, one solves the complex tangency equation.
\[
s_u(u)\,x + t_u(u)\,y + c_u(u) = 0
\]
for the real $(x,y)$. As this is a single complex equation, it imposes two real constraints and determines $x(u)$ and $y(u)$. The solution $(x(u),y(u))$ parametrizes the frozen boundary for $u\in\pp\Sigma$. For internal points $u\in \Sigma\setminus\pp\Sigma$ we obtain $(x(u),y(u))\in\mathcal{L}(\Omega)$ in the liquid region, and the height function as a function of $u$, $\mathfrak{h}(u)=s(u)x(u)+t(u)y(u)+c(u)$, the limit shape can be computed and plotted numerically. Such a function $\mathfrak{h}$ is therefore the natural candidate for the solution to the variational problem. To make this statement rigorous, however, one must also verify that $\mathfrak{h}$ minimizes the functional on the frozen regions. This can be done, for example, by adapting the proof of Theorem 9.3 in \cite{ADPZ}; see Section 9 there for details. Alternatively, one may use Proposition 23 of \cite{Bobenko_dimers}.

\begin{figure}[ht]
    \centering
    \begin{minipage}[t]{0.48\textwidth}
        \centering
        \includegraphics[width=\linewidth]{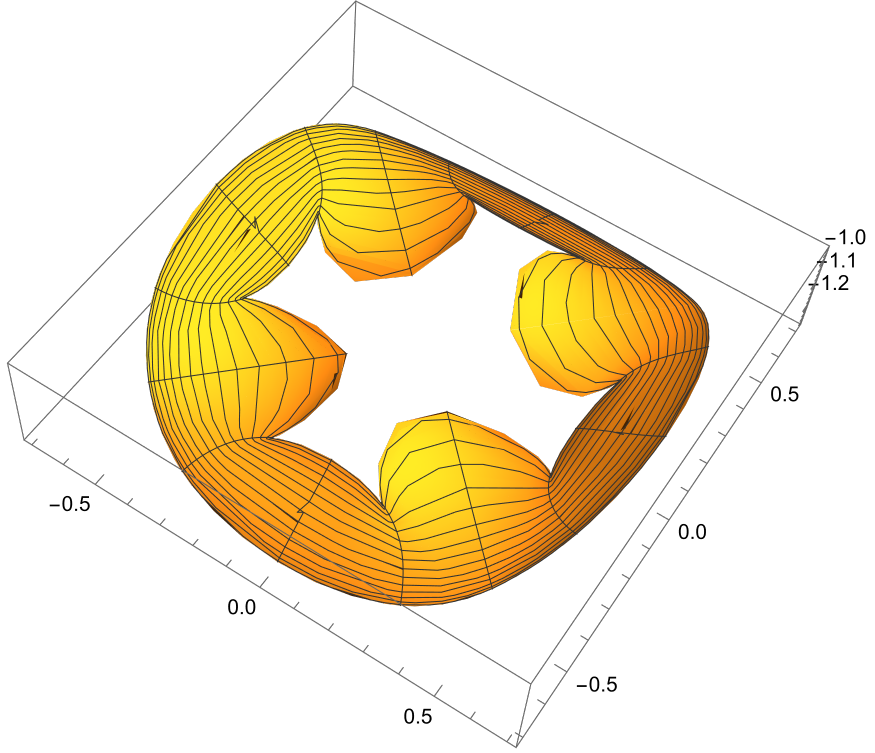}
        \caption{Plot of $\mathfrak{h}$ for the Aztec diamond with a hole}
        \label{fig:placeholder}
    \end{minipage}
    \hfill
    \begin{minipage}[t]{0.48\textwidth}
        \centering
        \includegraphics[width=\linewidth]{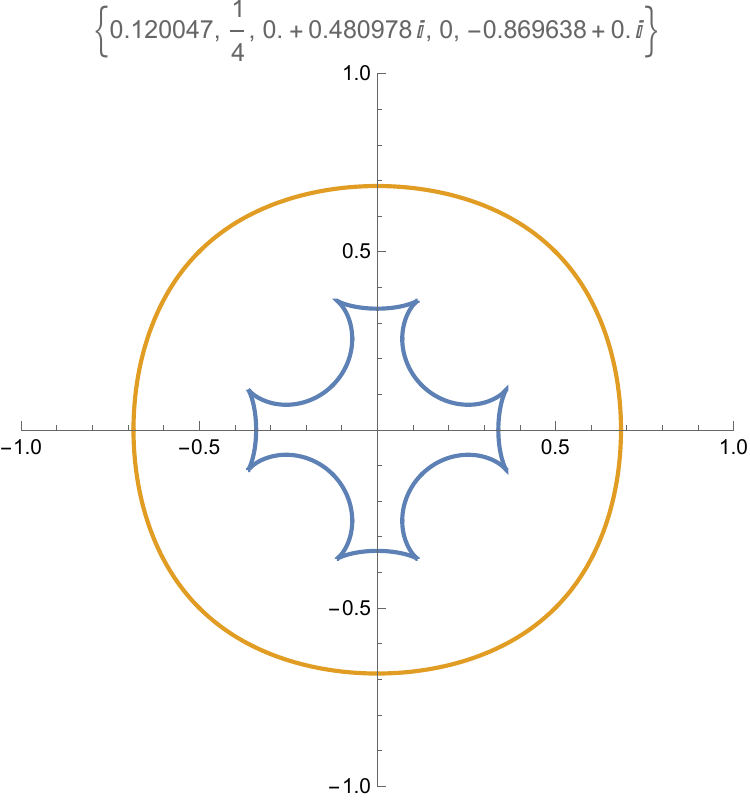}
        \caption{Arctic curve of the Aztec diamond with a hole. The yellow curve is the outer connected component, while the blue is the inner one. The corresponding values of parameters $a,\varkappa,\tau,\delta$ are written above the plot.}
        \label{fig:aztec_holey}
    \end{minipage}
\end{figure}

In the work \cite{KP:plane} the authors demonstrate harmonicity from the so-called trivial potential property of the dimer model. We do it in a slightly different way, and in Proposition~\ref{harmonicity} and Theorem~\ref{theorem:curve} we show it directly from the Harnack property of the associated spectral curve. Thus, our proof is valid as long as the assosiated spectral curve is a Harnack curve, for instance for periodic weights. It provides a clear connection with the work \cite{Borodin-Berggren} later extended to generic spectral curves in \cite{Bobenko_dimers}. The connection is that while we operate with zeros of a differential  $dF=xds+ydt+dc$, in \cite{Borodin-Berggren} the roles of $s$ and $t$ are exchanged compared with our notations. In work \cite{Bobenko_dimers} the corresponding objects are $xds=d\zeta_1$,$ydt=d\zeta_2$ and $dc=d\zeta_3$. These two articles deal with so-called gaze bubbles, which make a liquid region multiply-connected. However, the underlying domain is still simply-connected unlike our situation.

In the present article, we extend the tangent plane method to a multiply connected domain and compute the family of arctic curves for the \textit{Aztec diamond with a hole}. While the variational principle for this case was extended in our previous result \cite{Kuchumov:dominoes}, here we provide a concrete example of a family of multiply-connected limit shapes.
The Tangent plane method gives a solution to the variational principle, and it operates directly with continuous objects. Therefore, we do not need to compute discrete data and then pass to the limit that quite often requires the most technical effort, for instance, in the tangent method \cite{Colomo-Sportiello,D2}. The tradeoff of this advantege is that we need to match the critical points of $s_u(u), t_u(u)$ and $c_u$ inside $\mathcal{L}$, which takes a big effort. One could produce plots of arctic curves without it, however, these curves would be "fake" as they would not correspond to any $3d$ surface minimizing the functional. Note that the difficulty of critical points emerges only for multiply-connected $\Omega$, and we do not see it in gas bubbles(there are no such critical points), where $\mathcal{L}$ is also multiply-connected.

We show that in the limit, both the limit shape $\mathfrak{h}$, and $\varkappa$ are expressed in terms of elliptic functions. In particular, we provide a numeric scheme of the computation in  which works also for the limit shape of lozenge tilings of a hexagon with a hexagonal hole, which will be in the next
For instance, $\varkappa$ is given by the value of an elliptic function evaluated at a particular point, see \eqref{elliptic_size}.
Furthermore, as $\tau\to\infty$ harmonic extensions $s,t,c$ converge to the ones of usual Aztec diamond with cylindrical parametrization due to the asymptotic formula for $\sigma$ function, see table VII \cite{akhiezer}, or by a brownian motion argument.

\subsection{Structure of the paper}
The structure of the paper is the following,
\begin{itemize}
    \item In Section \ref{sect:2} we give a review of the complex gradient map in Theorem \ref{theorem:curve} and then proceed to the conformal coordinates in Proposition \ref{prop2.1}. Then, we show the harmonicity of the functions $s$, $t$, and $c$ in Proposition \ref{harmonicity}. We conclude with the tangent equation \ref{tangent_eq} which will be used to reconstract the arctic curve.
    \item In Section \ref{sect:3} we discuss applications to the domino tilings. In subsection \ref{sect_aztec} we discuss limit shape of the Aztec diamond, and then in subsection \ref{doubly_aztec} we discuss Aztec diamond with doubly-periodic weights with nessasary background on elliptic functions in \ref{elliptic_appendix}. After it, we discuss the main example, Aztec diamond with a hole in subsection \ref{Our_case}.
    \item Appendix \ref{appendix:beltrami} is devoted to derivation of the Beltrami equation in the needed forms.
\end{itemize}
\subsection{Acknowledgements}
We would like to thank Cédric Boutilier for his help throughout this work. This work was supported by the Research Council of Finland (grant 365297). Moreover, we thank Rick Kenyon and István Prause for providing Mathematica code for the doubly periodic Aztec diamond, which was crucial for the work. Furthermore, we thank Tomas Berggren and Mattias Eriksson for the discussions that led to our numerical scheme for Aztec diamond with a hole.


\section{Assumptions and geometry of the tangent plane method}
\label{sect:2}
We assume the number of frozen regions around each connected boundary component, and the topological type of the liquid region. Then, we write a table consisting of columns with boundary planes at each frozen region. After that we perform harmonic extensions for $s,t$ and $c$ depending on parameters to be fixed later. Then, we analize the critical points of derivatives $s_u,t_u $ and $c_u$, and in order to match them we fix the parameters, this is the difficult part, which is partially availible only numerically. After that, we are able to produce the figure of the Arctic curve, and the graph of $\mathfrak{h}$.

Also, we are using the Newton polygon for the square lattice turned by 45 degrees so that the characteristic
polynomial is $P(z,w)=1+z+w-zw$ and the Newton polygon is the unit square with vertices $(0,0),(1,0),(0,1), (1,1)$.

We have the following picture going from the boundary of the region to the bulk: near the boundary $\pp\Omega$, we have frozen, where $\mathfrak{h}$ is linear with the maximal possible slope, $\mathfrak{h}\in\pp\mathcal{N}$. The boundary of this part is the arctic curve $\mathcal{C}$,
which is smooth except at finitely many points.
After crossing the boundary, we end up in the liquid region, where $\nabla
\mathfrak{h}\in\mathring{\mathcal{N}}$. There, we might have another connected
component of the arctic curve $\mathcal{C}$, which bounds a gas phase where the gradient is constant $\nabla\mathfrak{h}=q$.
Now, we know enough properties of $\pp \mathcal{L}$
to perform some computations, and the main tool are conformal coordinates, which help to simplify the analysis of the Euler-Lagrange equations for the dimer model.

The idea is that we start with the liquid region $\mathcal{L}$, and then
assuming that $\mathfrak{h}$ is the minimizer, we map the liquid region to the
Newton polygon, $(x,y)\mapsto (s,t):=\nabla \mathfrak{h}(x,y) \in \mathcal{N}$.
Moreover, in the paper~\cite{KP:plane}, the authors look at $\mathcal{N}$ as a
Riemann surface with the metric $g$ given by the Hessian of surface tension
$\sigma$:
\bb
g=\sigma_{ss} ds^2+2\sigma_{st}ds dt+\sigma_{tt}dt^2.
\ee
Convexity of $\sigma$ implies that the Hessian is positive definite,
and thus corresponds to a (non-degenerate) metric.

Then, they perform the main trick of the method, that is a specific change of coordinates, $(s,t)\mapsto z$,
for complex variables $z$ on the spectral curve $P(z,w)=0$.
The relation between $(s,t)$ and $(z,w)$ can be formulated graphically as on Figure~\ref{fig:rectangle_zw}. We justify this figure in Theorem~\ref{theorem:curve} below.

\begin{figure}
    \centering
    \def\svgwidth{4cm}
\begingroup%
  \makeatletter%
  \providecommand\color[2][]{%
    \errmessage{(Inkscape) Color is used for the text in Inkscape, but the package 'color.sty' is not loaded}%
    \renewcommand\color[2][]{}%
  }%
  \providecommand\transparent[1]{%
    \errmessage{(Inkscape) Transparency is used (non-zero) for the text in Inkscape, but the package 'transparent.sty' is not loaded}%
    \renewcommand\transparent[1]{}%
  }%
  \providecommand\rotatebox[2]{#2}%
  \newcommand*\fsize{\dimexpr\f@size pt\relax}%
  \newcommand*\lineheight[1]{\fontsize{\fsize}{#1\fsize}\selectfont}%
  \ifx\svgwidth\undefined%
    \setlength{\unitlength}{205.79583356bp}%
    \ifx\svgscale\undefined%
      \relax%
    \else%
      \setlength{\unitlength}{\unitlength * \real{\svgscale}}%
    \fi%
  \else%
    \setlength{\unitlength}{\svgwidth}%
  \fi%
  \global\let\svgwidth\undefined%
  \global\let\svgscale\undefined%
  \makeatother%
  \begin{picture}(1,0.99999989)%
    \lineheight{1}%
    \setlength\tabcolsep{0pt}%
    \put(0,0){\includegraphics[width=\unitlength,page=1]{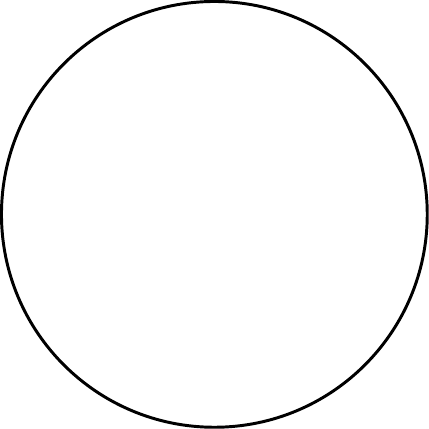}}%
    \put(0.42299461,0.82095331){\color[rgb]{0,0,0}\makebox(0,0)[lt]{\lineheight{1.25}\smash{\begin{tabular}[t]{l}$zw$\end{tabular}}}}%
    \put(0.80878257,0.44563278){\color[rgb]{0,0,0}\makebox(0,0)[lt]{\lineheight{1.25}\smash{\begin{tabular}[t]{l}$z$\end{tabular}}}}%
    \put(0.16323098,0.48820459){\color[rgb]{0,0,0}\makebox(0,0)[lt]{\lineheight{1.25}\smash{\begin{tabular}[t]{l}$w$\end{tabular}}}}%
    \put(0.4301903,0.04970197){\color[rgb]{0,0,0}\makebox(0,0)[lt]{\lineheight{1.25}\smash{\begin{tabular}[t]{l}$1$\end{tabular}}}}%
    \put(0,0){\includegraphics[width=\unitlength,page=2]{angle_new321.pdf}}%
    \put(0.79480155,0.30895818){\color[rgb]{0,0,0}\makebox(0,0)[lt]{\lineheight{1.25}\smash{\begin{tabular}[t]{l}$\pi t$\end{tabular}}}}%
    \put(0.14319493,0.29683768){\color[rgb]{0,0,0}\makebox(0,0)[lt]{\lineheight{1.25}\smash{\begin{tabular}[t]{l}$\pi s$\end{tabular}}}}%
  \end{picture}%
\endgroup%

    \caption{Relation between coordinates $(s,t)$ and $(z,w)$}
    \label{fig:rectangle_zw}
\end{figure}

In a simple example like Aztec diamond, one could use such $z$ as a conformal coordinate and parametrize points of $\pp\mathcal{L}$ by it. In other words, we can encode points of $\in\mathcal{L}$ by the positive half of the spectral curve $\mathcal{C}^{+}\simeq \mathbb{H}$.
However, in a more compicated situatuion, where degree of $\nabla \mathfrak{h}$ is higher than $1$, for instance in a multiply-connected domain, one would need to use a ramified cover of $z$, $\Sigma \to \mathcal{C}$. Then, the conformal model of $\mathcal{L}$ is not $\mathbb{H}$, but an annulus $\Sigma^{+}\simeq[0,2]\times[0,\tau]/\sim $ where we identify $u$ and $u+2$. There $z$ becomes a meromorphic function of $u\in\Sigma$, and $s,t$ with $c$ are harmonic maps of this $u$ (as the first two are harmonic in terms of $z$, and harmonicity of $c$ follows from the Harnack property of the spectral curve).
The following commutative diagram illustrates this:

\[
\begin{tikzcd}
  {\mathcal{L}} \arrow{d}{1:1}\arrow{r}{\nabla \mathfrak{h}} & \mathcal{N} \arrow{d}{1:1} \arrow{r}{\nabla \sigma}& \mathcal{A}\\
  \Sigma^+ \arrow[d, phantom, sloped, "\subset"] \arrow{r}{\text{deg}\, d} & \mathcal{C}^{+} \arrow{ru}[swap, pos=0.4]{(\log |z|,\log |w|)} \arrow[d, phantom, sloped, "\subset"] \\
  \Sigma \arrow{r}{\pi} & \mathcal{C}
\end{tikzcd}
\]

There is also a creterion for an instrinsic coordinate from \cite{KP:plane}, that we show for completness in Proposition \ref{prop2.1}.

We begin our proof of the diagram by showing Theorem \ref{theorem:curve}, which explains the relation between $\mathcal{C}^{+}$ and $\mathcal{L}$.

\begin{theorem}
    In the liquid region $\mathcal{L}$,
    there are defined two functions $z(x,y)$, $w(x,y)$ satisfying the following
    properties,
    \begin{enumerate}
        \item $\nabla \mathfrak{h}=\frac{1}{\pi}(\arg w, -\arg z )$,
        \item Ampère's equation $\frac{z_x}{z}+\frac{w_y}{w}=0$,
        \item spectral curve equation $P(z,w)=0$.
    \end{enumerate}
    \label{theorem:curve}
\end{theorem}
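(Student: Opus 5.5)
Following Kenyon--Okounkov \cite{KO:burgers}, the plan is to build $(z,w)$ out of the gradient map and the amoeba of $P$, and to read the Ampère equation off the Euler--Lagrange equation for $\int_\Omega\sigma(\nabla\mathfrak{h})\,dx\,dy$.

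\textbf{Step 1: construct $z,w$ pointwise.} For a slope $(s,t)\in\mathring{\mathcal{N}}$, Figure~\ref{fig:rectangle_zw} fixes a unique point $(z,w)$ on the upper half $\mathcal{C}^+$ of the spectral curve $P(z,w)=1+z+w-zw=0$. Rewrite $P=0$ as $1+z+w+(-zw)=0$. The four complex numbers $1,z,w,-zw$ then close up into a quadrilateral. Its angles are prescribed in terms of $\pi s$ and $\pi t$ so that $\arg w=\pi s$ and $\arg z=-\pi t$. Such a quadrilateral exists and is unique up to scaling exactly when $(s,t)$ is interior to the Newton polygon. Fixing the side $1$ removes the scaling freedom. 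Setting $(z,w)(x,y)$ to be this point with $(s,t)=\nabla\mathfrak{h}(x,y)$ gives properties 3 and 1 by construction. Property 1 needs the sign convention relating the square turned by $45^\circ$ to the unit square $\mathcal{N}$; I will check this against the frozen boundary values, where $z$ or $w$ becomes real.

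\textbf{Step 2: identify $\nabla\sigma$ with the amoeba map.} I will invoke the Legendre duality from \cite{KOS}: the Legendre dual of $\sigma$ is the Ronkin function $R(X,Y)$ of $P$. Since $\nabla R$ of a point of the amoeba is $\frac{1}{\pi}$ times the arguments of the corresponding point of the spectral curve, inverting gives
\[
\nabla\sigma(s,t)=\bigl(\log|z|,\log|w|\bigr),
\]
up to signs and affine normalizations. This is exactly the upper triangle of the commutative diagram. The Harnack property of $P$ is used here: the logarithmic map $\mathcal{C}^+\to\mathcal{A}$ is a diffeomorphism onto the interior of the amoeba, so the identification is single-valued.

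\textbf{Step 3: Euler--Lagrange equation to Ampère's equation.} In $\mathcal{L}$, $\sigma$ is smooth and strictly convex, so the minimizer satisfies $\operatorname{div}\nabla\sigma(\nabla\mathfrak{h})=0$. By Step 2 this reads
\[
(\log|z|)_x+(\log|w|)_y=0.
\]
Property 1 together with the symmetry of mixed partials $\mathfrak{h}_{xy}=\mathfrak{h}_{yx}$ gives $(\arg w)_y=-(\arg z)_x$, that is,
\[
(\arg z)_x+(\arg w)_y=0.
\]
Combining the real and imaginary parts yields $(\log z)_x+(\log w)_y=0$, which is $\frac{z_x}{z}+\frac{w_y}{w}=0$.

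\textbf{Main obstacle.} The substantive step is Step 2, the Legendre duality $\sigma^*=R$ together with the explicit gradient formula for the Ronkin function. I will cite this from \cite{KOS} rather than reprove it. Beyond that, the remaining care is bookkeeping: matching the signs and orientation conventions of Figure~\ref{fig:rectangle_zw} so that property 1 and the $\nabla\sigma$ formula are stated with compatible signs. I also need the regularity of $\mathfrak{h}$ in $\mathcal{L}$ that makes the Euler--Lagrange equation hold classically, which I will take from \cite{ADPZ}.
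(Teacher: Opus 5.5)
Your proposal is correct and follows the paper's own argument: you attach $(z,w)$ to $\nabla\mathfrak{h}$ through the Ronkin function and Legendre duality, so the Euler--Lagrange equation $X_x+Y_y=0$ becomes $\Re\bigl(\frac{z_x}{z}+\frac{w_y}{w}\bigr)=0$, and $s_y=t_x$ gives the imaginary part. The one real difference is that the paper derives $\nabla\mathcal{R}(\log|z|,\log|w|)=\frac{1}{\pi}(\arg w,-\arg z)$ (up to a translation of slopes) by an explicit argument-principle computation using the Harnack property, whereas you cite it from \cite{KOS}; also, in your Step 1 uniqueness does not come from the angles alone but from the constraint that the fourth side is $-zw$ (equivalently, the ray $\arg w=\pi s$ meets the circular arc on which $\arg\frac{w+1}{w-1}=-\pi t$ exactly once).
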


\begin{proof}
We start with a point $(z_0,w_0)$ on the spectral curve $\{P(z,w)=0\}$. Then, we know
that the real part of its image under the logarithm map $(\log |z_0|,\log |w_0|)$
defines a point in the amoeba $(X,Y):=(\log |z_0|,\log |w_0|)\in\mathcal{A}$, see also \cite{KOS}[3.2.3].
Here, we assume that $(z_0,w_0)$ is a generic point, so that $(X,Y)=(\log|z_0|,\log
|w_0|)$ is in the interior of $\mathcal{A}$. We further assume that
$z_0\in\mathbb{H}$ (so that $w_0$ has a negative imaginary part).~In fact, we know from the Harnack
property of the spectral curve that there is exactly another root of $P$, which
is sent to the same point of the amoeba, given by $(\bar{z_0},\bar{w_0})$.
We write
\begin{equation*}
  z_0 = e^{X+i\theta_0},\qquad w_0 = e^{Y+i\omega_0},
\end{equation*}
with $\theta_0\in(0,\pi)$ and $\omega_0\in(\pi,2\pi)$.

The gradient of the Ronkin function $\mathcal{R}$ evaluated at this point defines the corresponding slope $(s,t)$.
More precisely, we have the following equation,
\bb
(s,t):=\nabla \mathcal{R}(\log|z|,\log |w|).\label{eq:slope}
\ee

Let us compute the derivatives of the $\mathcal{R}$ by definition,
\begin{multline}
s=\frac{\partial}{\partial X} \frac{1}{(2\pi i)^2}\iint_{\mathbb{T}^2} \log P(e^X z, e^Yw)\frac{dzdw}{zw}=\\
\frac{1}{(2\pi i)^2}\int_{|w|=e^Y} \left( \int_{|z|=e^X} \frac{\partial_XP(e^Xz,e^Yw)}{P(e^X z, e^Yw)} \frac{dzdw}{zw} \right)\\
=\frac{1}{2\pi i} \int_{|w|=e^Y}\frac{dw}{w}\left( \frac{1}{2\pi
i}\int_{|z|=e^X}d \log P \right),
\end{multline}
where we look at $\left( \frac{1}{2\pi i}\int_{|z|=e^X}d\log P \right)$ as a function of $w$, which is fixed, and then integrate it over the circle $\{|w|=e^Y\}$.
For a given value of $w$, there is a unique value of $z=z(w)=\frac{w+1}{w-1}$
such that $P(z(w),w)=0$. It turns out that when $w$ moves counterclockwise
around the circle of radius $e^Y$, the root $z(w)$ is inside the disc of radius
$e^X$ when $w$ is on the arc from $\bar{w_0}$ to $w_0$, and outside when $w$ is
on the arc from $w_0$ to $\bar{w_0}$.
As a consequence, the function $w\mapsto\int_{|z|=e^X} d\log P(\cdot,w)$ is a
the indicator function
of the arc $(\bar{w_0}, w_0)$.


Therefore, the remaining integral is 

\begin{equation*}
  s=
\int_{\bar{w}_0}^{w_0} \frac{1}{2 \pi i}
\frac{dw}{w}=\frac{1}{2\pi i} (\log w_0 -\log \bar{w}_0)=\frac{1}{2i\pi}(\log (e^{Y+i
\omega_0})-\log ( e^{Y+i(2\pi- \omega_0)}))=\frac{\omega_0}{\pi}-1.
\end{equation*}

The computation for the other coordinate of the gradient corresponding to $t$ is
the same except that the integral
$\frac{1}{2i\pi}\int_{|w|=e^Y} d\log P(z,\cdot)$ is the indicator function of
the oriented arc from $z_0$ to $\bar{z_0}$. This means that when we integrate
the result over $z$, we get
\begin{equation*}
  t=
\int_{z_0}^{\bar{z_0}} \frac{1}{2 \pi i}
\frac{dz}{z}=\frac{1}{2\pi i} (\log \bar{z_0} -\log z_0)=\frac{1}{2i\pi}(\log (e^{X+i
(2\pi-\theta_0)})-\log ( e^{X+i\theta_0)}))=1-\frac{\theta_0}{\pi}.
\end{equation*}


\begin{figure}
    \centering
    \includegraphics[width=0.5\linewidth]{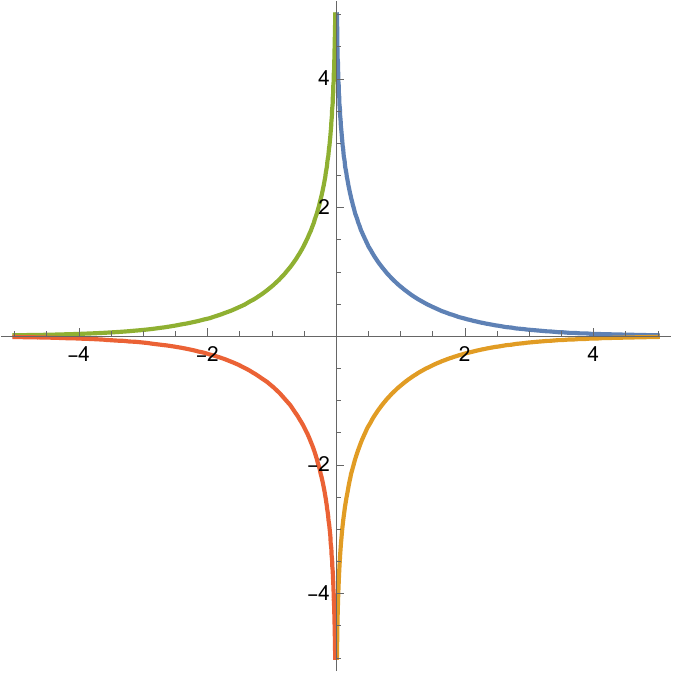}
    \caption{Plot of Amoeba for $\ZZ^2$ and our choice of $P(z, w)$. Quadrants correspond to the
frozen regions.}
    \label{fig:amoeba}
\end{figure}

As a result, we have that

\bb
\label{pre-beltrami0}
\log z_0=X+\pi i (1-t),\qquad
\log w_0= Y+\pi i (1+s).
\ee
By a global translation of the set of slopes by the vector $(1,-1)$ (or by
  replacing $z$ by $-z$ and $w$ by $-w$), we can
assume that the following relation holds
\begin{equation}
\label{pre-beltrami}
  \log z_0 = X-i\pi t,\qquad \log w_0 = Y+i\pi s.
\end{equation}

We now associate to each point $(x,y)$ of the liquid region a pair $(z,w)$, in
such a way that the arguments of $z(x,y)=z_0$ and $w(x,y)=w_0$ match the
(translated) slopes $-t$ and $s$ respectively, times $\pi$ according to
\eqref{pre-beltrami}.

Now, we substitute $X,Y$ from~\eqref{pre-beltrami} to the intrinsic Euler
Lagrange equation, which can be rewritten as $X_x+Y_y=0$
and see that it becomes $\Re
(\frac{z_x}{z}+\frac{w_y}{w})=0$. This is one real equation, which should be
supplemented with a consistency condition $\frac{\partial s}{\partial
y}=\frac{\partial t}{\partial x}=\frac{\partial^2 \mathfrak{ h}}{\partial y
\partial x}$ or $-\frac{\partial s}{\partial y}+\frac{\partial t}{\partial
x}=0$, which by~\eqref{eq:slope} can be written as $\langle(-\frac{\partial}{\partial
y},\frac{\partial}{\partial x}), \nabla \mathcal{R}(\log|z|,\log |w|)\rangle=0$, since
$\nabla \mathcal{R}(\log|z|,\log |w|)=\frac{1}{\pi}(+\arg w, -\arg z)$. Thus, we
end up with the second real equation $\Im (\frac{z_x}{z}+\frac{w_y}{w})=0$. It
completes the derivation of the Ampère's equation. 
\end{proof}

It turns out that the coordinate $z$ itself defines conformal coordinates $(U,V):=(\Re z, \Im z)$. 
In fact, let us look at~\eqref{pre-beltrami}. 
Since logarithm is a holomorphic map,
it satisfies the Cauchy-Riemann equations, and so does the right-hand side.
For example, for $Y+i\pi s$ we have two equations,
\bb
Y_U=\pi s_V,\qquad
Y_V=-\pi s_U.
\ee
and for $X-i\pi t$.

\bb
X_U=-\pi t_V,\qquad
X_V=\pi t_U.
\ee

These equations are combined into complex equations with the help of Wirtinger derivatives,

\begin{equation*}
X_z=\frac{1}{2}(X_U-iX_V)=\frac{1}{2}(-i\pi (s_U-i s_V))=-i\pi s_z,
\end{equation*}
\bb
Y_z=i\pi t_z. \label{Holomorphic_XY}
\ee
We also see that $\frac{Y_z}{s_z}=i\pi=-\frac{X_z}{t_z}$, therefore, we have $\frac{X_z}{t_z}+\frac{Y_z}{s_z}=0$. It is the first criterion for $z$ being an intrinsic variable from proposition 2.1 from~\cite{KP:new}.

Then, recall that $X=\sigma_s (s,t), Y=\sigma_t(s,t)$, which enables us writing
\bb
\pi s_V=Y_U=\sigma_{st}s_U+\sigma_{tt}t_U,
\pi t_V=-X_U=-\sigma_{ss}s_U-\sigma_{st}t_U.
\ee
This is nothing but the real Beltrami equation, a sufficient condition for the
coordinates $(U,V)$ to be conformal/conformal. Thus, $z=U+iV$ is the intrinsic
coordinate according to the definition of~\cite{KP:plane}. The authors there
propose a criterion for a coordinate $\zeta$ to be an intrinsic one, that we
formulate in the following Proposition:
\begin{proposition}[\cite{KP:plane}, Proposition 2.1]
    The following two equations are each equivalent to $\zeta$ being an intrinsic coordinate
    \bb
    \frac{X_{\zeta}}{s_\zeta}+\frac{Y_{\zeta}}{t_\zeta}=0 \label{equivalence}
    \ee
\bb
\frac{s_\zeta}{t_\zeta}=\frac{-\sigma_{st}-i\sqrt{\det H_\sigma}}{\sigma_{ss}}
\ee
And in the case the above equations hold, we have 
\bb
\frac{X_\zeta}{t_\zeta}=-i \det H_\sigma=-\frac{Y_\zeta}{s_\zeta}.
\ee
\label{prop2.1}
\end{proposition}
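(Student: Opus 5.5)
Throughout, $\zeta$ is a local complex coordinate on $\mathcal{L}$. By definition (following \cite{KP:plane}), $\zeta$ is intrinsic exactly when $(\Re\zeta,\Im\zeta)$ are isothermal coordinates for the metric $g$ pulled back by $\nabla\mathfrak{h}$. Since $X=\sigma_s$ and $Y=\sigma_t$, the chain rule gives
\begin{equation*}
X_\zeta=\sigma_{ss}s_\zeta+\sigma_{st}t_\zeta,\qquad Y_\zeta=\sigma_{st}s_\zeta+\sigma_{tt}t_\zeta .
\end{equation*}
All three statements then reduce to algebra in the ratio $r:=s_\zeta/t_\zeta$, together with the conformality condition.

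The plan has three steps.

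\emph{Conformality in terms of $r$.} Conformality of $\zeta$ for $g$ means that $g$ has no $d\zeta^2$ component, i.e.
\begin{equation*}
\sigma_{ss}s_\zeta^2+2\sigma_{st}s_\zeta t_\zeta+\sigma_{tt}t_\zeta^2=0 .
\end{equation*}
Since $\sigma$ is strictly convex, the map $\nabla\mathfrak h$ is a local diffeomorphism in the liquid region, so $t_\zeta\neq0$. Dividing by $t_\zeta^2$ gives the quadratic
\begin{equation*}
\sigma_{ss}r^2+2\sigma_{st}r+\sigma_{tt}=0, \qquad r=\frac{-\sigma_{st}\pm i\sqrt{\det H_\sigma}}{\sigma_{ss}}.
\end{equation*}
The two roots are complex conjugate, and they correspond to $\zeta$ and $\bar\zeta$. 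I fix the sign to the minus root by an orientation convention, chosen so that it agrees with the computation above for $\zeta=z$, where $X_z=-i\pi s_z$ and $Y_z=i\pi t_z$. This gives the equivalence with the second displayed equation.

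\emph{The first equation and the final identities.} Dividing the chain-rule formulas by $s_\zeta$ and $t_\zeta$ gives
\begin{equation*}
\frac{X_\zeta}{s_\zeta}=\sigma_{ss}+\frac{\sigma_{st}}{r}, \qquad \frac{Y_\zeta}{t_\zeta}=\sigma_{st}r+\sigma_{tt}.
\end{equation*}
The sum of these is $r^{-1}\bigl(\sigma_{ss}r^2+2\sigma_{st}r+\sigma_{tt}\bigr)$. It vanishes exactly when the quadratic above holds, which proves the equivalence of \eqref{equivalence}. For the last identity, substitute the root $r$ into the mixed ratios:
\begin{equation*}
\frac{X_\zeta}{t_\zeta}=\sigma_{ss}r+\sigma_{st}, \qquad \frac{Y_\zeta}{s_\zeta}=\sigma_{st}+\frac{\sigma_{tt}}{r}.
\end{equation*}
The first simplifies immediately to $-i\sqrt{\det H_\sigma}$. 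For the second, use $1/r=\bar r\,\sigma_{ss}/\sigma_{tt}$, which holds because $|r|^2=\sigma_{tt}/\sigma_{ss}$; this gives $+i\sqrt{\det H_\sigma}$.

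\emph{Main obstacle: normalization.} This yields the stated identity with $\sqrt{\det H_\sigma}$ rather than $\det H_\sigma$. I expect reconciling the two to be the delicate point. I would handle it by normalizing via the dimer relation established above: the computation $Y_z/s_z=i\pi$ gives $\sqrt{\det H_\sigma}=\pi$ in the paper's units, since the Hessian of the dimer surface tension has determinant $\pi^2$ (Monge--Ampère). With that relation the stated form follows up to this constant. The remaining care is in the sign and orientation conventions, making the choice of root consistent with the convention \eqref{pre-beltrami}.
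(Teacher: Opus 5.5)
Your skeleton is the paper's: the ratio $r=s_\zeta/t_\zeta$, the chain rule through the Hessian, the quadratic $\sigma_{ss}r^2+2\sigma_{st}r+\sigma_{tt}=0$, and the root fixed by orientation. Deriving that quadratic as the vanishing of the $d\zeta^2$-coefficient of $g$ is a useful addition. The gap is in your step for the first equation. From your own formulas,
\[
\frac{X_\zeta}{s_\zeta}+\frac{Y_\zeta}{t_\zeta}=\sigma_{ss}+\sigma_{tt}+\sigma_{st}\bigl(r+r^{-1}\bigr)=r^{-1}\bigl(\sigma_{st}r^2+(\sigma_{ss}+\sigma_{tt})r+\sigma_{st}\bigr),
\]
and not $r^{-1}(\sigma_{ss}r^2+2\sigma_{st}r+\sigma_{tt})$. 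This is not cosmetic. Since $4\sigma_{st}^2<4\sigma_{ss}\sigma_{tt}\le(\sigma_{ss}+\sigma_{tt})^2$, this quadratic has only real roots (and for $\sigma_{st}=0$ the sum is $\sigma_{ss}+\sigma_{tt}>0$). The conformality quadratic, by contrast, has discriminant $4(\sigma_{st}^2-\sigma_{ss}\sigma_{tt})<0$. So the first display, read literally, is \emph{never} satisfied by an intrinsic coordinate. Concretely, for $\zeta=z$ the Cauchy--Riemann equations for $X-i\pi t$ and $Y+i\pi s$ give $X_z=-i\pi t_z$ and $Y_z=i\pi s_z$, so the literal sum is $i\pi(r-r^{-1})\neq0$.

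The identity you wrote does hold for the \emph{mixed} pairing: $X_\zeta/t_\zeta+Y_\zeta/s_\zeta=r^{-1}(\sigma_{ss}r^2+2\sigma_{st}r+\sigma_{tt})$, i.e.\ $s_\zeta X_\zeta+t_\zeta Y_\zeta=g(\partial_\zeta,\partial_\zeta)$. That is the condition the paper's proof actually uses. It computes $X_\zeta/t_\zeta$ and $Y_\zeta/s_\zeta$, whose sum is $\gamma^{-1}$ times the quadratic, and the text before the proposition writes $\frac{X_z}{t_z}+\frac{Y_z}{s_z}=0$. You need to state that the first equation must be read in this form and prove that version; as written, your argument asserts an equivalence that is false.

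Your normalization step does not work either. With $\det H_\sigma=\pi^2$ you get $-i\det H_\sigma=-i\pi^2$, whereas your correct computation gives $-i\sqrt{\det H_\sigma}=-i\pi$, and no choice of units reconciles them. The right conclusion is simply that the final identity holds with $\sqrt{\det H_\sigma}$. Your computation of the two mixed ratios already proves this, and it is also where the paper's proof ends up.

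Two smaller points. First, the relations you quote, $X_z=-i\pi s_z$ and $Y_z=i\pi t_z$, have $s$ and $t$ interchanged. It is $\sigma_{ss}r+\sigma_{st}=X_z/t_z=-i\pi$ that selects the minus root. Second, strict convexity of $\sigma$ does not make $\nabla\mathfrak{h}$ a local diffeomorphism: in the Aztec diamond with a hole there are interior points where $s_u=t_u=0$. So $t_\zeta\neq0$ is a standing assumption implicit in the ratios, not a consequence.
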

Note also that the last relation from \hyperref[prop2.1]{Proposition~\ref{prop2.1}} together with~\eqref{Holomorphic_XY} implies that $\det H_\sigma=\pi^2$.
\begin{proof}
Let us define $\gamma:=\frac{s_\zeta}{t_\zeta}$. Then, due to $X_\zeta=(\sigma_s)_\zeta=\sigma_{ss} s_\zeta+\sigma_{st} t_\zeta$, we write $\frac{X_\zeta}{t_\zeta}=\sigma_{ss}\gamma+\sigma_{st}$. Rewriting this computation for $Y$, we see
$\frac{Y_\zeta}{s_\zeta}=\sigma_{tt} \frac{1}{\gamma}+\sigma_{st}$. Therefore, the first condition of \hyperref[equivalence]{Proposition~\ref{equivalence}} is nothing but

\bb
\sigma_{ss}\gamma^2+2\sigma_{st}\gamma+\sigma_{tt}=0.
\ee
The two roots are

\bb
\gamma=\frac{-\sigma_{st}\pm i\sqrt{\det H_\sigma}{}\sigma_{ss}}{\sigma_{ss}}.
\ee

Since in our convention, $\zeta$ is assumed to be orientation-reversing map, $\Im(\gamma)<0$ and we have the minus sign.
Furthermore, $\frac{X_\zeta}{t_\zeta}=-i\sqrt{\det H_\sigma}=\frac{Y_\zeta}{s_\zeta}$.
\end{proof}

Now, with the help of \hyperref[prop2.1]{Proposition~\ref{prop2.1}} we can prove the intrinsic Euler-Lagrange equation in two forms,
\begin{proposition}
For $(x,y)\in\mathcal{L}$ we have
    \bb
    X_\zeta \zeta_x+Y_\zeta \zeta_y=0,
    \ee
    \bb
    \frac{\zeta_x}{\zeta_y}=\frac{s_\zeta}{t_\zeta}=\gamma.
    \ee
    \label{intrinsic}
\end{proposition}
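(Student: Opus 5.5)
Both identities follow from combining two facts: the Euler--Lagrange equation for $\mathfrak{h}$ in its divergence form $X_x+Y_y=0$, with $X=\sigma_s(\nabla\mathfrak{h})$ and $Y=\sigma_t(\nabla\mathfrak{h})$, and the compatibility condition $s_y=t_x$ coming from $(s,t)=\nabla\mathfrak{h}$. The strategy is to push both equations through the intrinsic coordinate $\zeta$ via the chain rule, and then use Proposition~\ref{prop2.1} to eliminate the antiholomorphic contributions.

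\textbf{Step 1: the chain rule.} Regard $X,Y,s,t$ as real functions of $\zeta$, i.e.\ of $(\zeta,\bar\zeta)$. The chain rule gives
\[
X_x=X_\zeta\zeta_x+X_{\bar\zeta}\bar\zeta_x .
\]
Since $X$ is real, $X_{\bar\zeta}=\overline{X_\zeta}$, so $X_x=2\Re(X_\zeta\zeta_x)$. The same holds for $Y$, $s$ and $t$.

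\textbf{Step 2: the two real equations.} Substituting into the Euler--Lagrange equation gives
\[
\Re\left(X_\zeta\zeta_x+Y_\zeta\zeta_y\right)=0 .
\]
Substituting into the compatibility condition gives
\[
\Re\left(t_\zeta\zeta_x-s_\zeta\zeta_y\right)=0 .
\]

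\textbf{Step 3: converting the second equation via Proposition~\ref{prop2.1}.} The proposition gives $X_\zeta=-i\kappa\,t_\zeta$ and $Y_\zeta=i\kappa\,s_\zeta$, where $\kappa=\sqrt{\det H_\sigma}>0$ is real. (The paper's sign for the $Y$ relation needs care; I use the version consistent with \eqref{Holomorphic_XY}, namely $Y_\zeta/s_\zeta=-X_\zeta/t_\zeta$.) Then
\[
X_\zeta\zeta_x+Y_\zeta\zeta_y=-i\kappa\left(t_\zeta\zeta_x-s_\zeta\zeta_y\right).
\]
Multiplication by $-i\kappa$ turns the real part of the right-hand bracket into the imaginary part of the left-hand side. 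So the compatibility condition is exactly $\Im(X_\zeta\zeta_x+Y_\zeta\zeta_y)=0$. Together with the Euler--Lagrange equation from Step~2, this gives the first claimed identity $X_\zeta\zeta_x+Y_\zeta\zeta_y=0$.

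\textbf{Step 4: the second identity.} From the first identity and the relations in Step~3,
\[
t_\zeta\zeta_x=s_\zeta\zeta_y .
\]
This rearranges to $\zeta_x/\zeta_y=s_\zeta/t_\zeta=\gamma$, provided $\zeta_y\neq0$ and $t_\zeta\neq0$. Both hold in $\mathcal{L}$ because $(x,y)\mapsto\zeta$ is a local diffeomorphism there and $\Im\gamma\neq0$.

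\textbf{Main obstacle.} The delicate point is the sign and orientation bookkeeping in Step~3. The relations $X_\zeta/t_\zeta$ and $Y_\zeta/s_\zeta$ must have opposite signs, with a real nonzero factor $\kappa$, so that the compatibility condition lands exactly on the imaginary part of the Euler--Lagrange combination. I would fix the convention using the Aztec case, where $\zeta=z$ and \eqref{Holomorphic_XY} gives $\kappa=\pi$, and check the orientation-reversing assumption there.
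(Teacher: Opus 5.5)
Your proposal is correct and takes essentially the same route as the paper. The paper uses the chain rule and Proposition~\ref{prop2.1} to write $X_x-i\sqrt{\det H_\sigma}\,t_x=2X_\zeta\zeta_x$ and $Y_y+i\sqrt{\det H_\sigma}\,s_y=2Y_\zeta\zeta_y$, then adds them and invokes $X_x+Y_y=0$ and $t_x=s_y$; this is your real/imaginary-part splitting written in one line. It then gets the second identity by substituting $X_\zeta,Y_\zeta$ proportional to $t_\zeta,s_\zeta$, exactly as in your Step~4.

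Your sign choice $X_\zeta/t_\zeta=-Y_\zeta/s_\zeta$ is the one the paper's computation actually needs. Your justification of $\zeta_y\neq0$ and $t_\zeta\neq0$ covers a point the paper leaves implicit.
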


\begin{proof}
    The last expression of the proof of \hyperref[prop2.1]{Proposition \ref{prop2.1}} implies that (we are using the chain rule for $X_x=X_\zeta \zeta_x+X_{\bar{\zeta}} \bar{\zeta}_x$ and $\bar{f}(\bar{z})=\bar{f(z)})$
    ,
    \bb
    X_x-i\sqrt{\det H_\sigma} t_x=X_\zeta \zeta_x+ X_{\bar{\zeta}}\bar{\zeta}_x-i\sqrt{\det H_\sigma}(t_\zeta \zeta_x+t_{\bar{\zeta}}\bar{\zeta}_x)=
    \ee
    \bb
    =(X_\zeta-i\sqrt{\det H_\sigma} t_\zeta)\zeta_x+ \overline{(X_\zeta+i\sqrt{\det H_\sigma}t_\zeta)\zeta_x}=2X_\zeta \zeta_x.
    \ee
    
    Similarly for $Y$,
\bb
Y_y+i\sqrt{\det H_\sigma}s_y=2Y_\zeta \zeta_y.
\ee
Therefore, since $t_x=s_y$ we have the following intrinsic Euler-Lagrange equation

\bb
X_\zeta \zeta_x+Y_\zeta \zeta_y=0.
\ee

Let us look at \eqref{pre-beltrami}, the left-hand side $X-i\pi t$ is a holomorphic function of $\zeta$, as well as $Y+i\pi s$. Therefore, they satisfy the Cauchy-Riemann equation, which tells us that 
\begin{align}
    X_\zeta &= i \pi t_\zeta,\\
    Y_\zeta &= -i \pi  s_\zeta.
\end{align}
Which in combination with  $X_\zeta \zeta_x+Y_\zeta \zeta_y=0$ gives us that $t_\zeta \zeta_x-s_\zeta \zeta_y=0$ or $\frac{t_\zeta}{s_\zeta}=\frac{\zeta_y}{\zeta_x}=\frac{1}{\gamma}$.

Then, apply the \hyperref[intrinsic]{Proposition \ref{intrinsic}}
 combined with the relation $-\frac{y_{\bar{z}}}{x_{\bar{z}}}=\frac{z_x}{z_y}$.
 It follows from $z=z(x,y)$, where we take the derivative with respect to
 $\bar{z}$ and obtain $0=z_x x_{\bar{z}}+z_y y_{\bar{z}}$. Finally, we get
 $-\frac{y_{\bar{z}}}{x_{\bar{z}}}=\frac{z_x}{z_y}=\frac{s_z}{t_z}$, and we are
 done. 
\end{proof}

However, one can see that $z$ is a conformal coordinate directly from the equation we already have.
Let us show that the metric $g$ in coordinates $(U,V)$ is diagonal, i.e., that the scalar product $g(\frac{\partial}{\partial U},\frac{\partial}{\partial V})=0$ and $g(\frac{\partial}{\partial U},\frac{\partial}{\partial U})=g(\frac{\partial}{\partial V},\frac{\partial}{\partial V})$. For this, let us note the expressions for the vector fields in coordinates $(s,t)$.

\begin{equation*}
   \frac{\partial }{\partial U}
   =s_U \frac{\partial}{\partial s} + t_U \frac{\partial}{\partial t}
   ,\qquad
   \frac{\partial }{\partial V}
   =s_V \frac{\partial}{\partial s}+ t_V \frac{\partial}{\partial t}.
\end{equation*}
Now, the scalar product $g(\frac{\partial}{\partial U},\frac{\partial}{\partial V})$ equals to
\bb
\sigma_{ss} s_U s_V+\sigma_{st}(s_V t_U+s_U t_V)+ \sigma_{tt} t_U t_V.
\label{scalar_offdiag}
\ee
Then, using~\eqref{pre-beltrami} we have
\begin{align}
\pi s_V=Y_U=\sigma_{st}s_U+\sigma_{tt}t_U,
\pi t_V=-X_U=-\sigma_{ss}s_U-\sigma_{st}t_U.
\label{eq:deriv_pi}
\end{align}
It helps us to express the derivatives with respect to $V$ through derivatives
with respect to $U$ in~\eqref{scalar_offdiag}, let us write each term
of~\eqref{scalar_offdiag} and mark by the same color the terms which cancel each
other in the sum~\eqref{scalar_offdiag}.
\begin{align*}
  \pi\sigma_{st}t_U s_V &=\mathcolor{violet}{\sigma_{st}^2s_Ut_U}+\mathcolor{blue}{\sigma_{st}\sigma_{tt}t_U^2},\\
  \pi\sigma_{st}s_U t_V &=-\mathcolor{orange}{\sigma_{ss}\sigma{st}s_U^2}-\mathcolor{violet}{\sigma_{st}^2s_U t_U}\\
  \pi\sigma_{ss}s_U s_V &=\mathcolor{orange}{\sigma_{ss}\sigma_{st}s_U^2}+\mathcolor{olive}{\sigma_{ss}\sigma_{tt}t_U s_U}\\
  \pi\sigma_{tt}t_U t_V &=-\mathcolor{olive}{\sigma_{ss}\sigma_{tt}t_Us_U}-\mathcolor{blue}{\sigma_{st}\sigma_{tt}t_U^2}
\end{align*}

As for the diagonal elements of the metric, we have

\bb
g(\frac{\partial}{\partial U},\frac{\partial}{\partial U})=\sigma_{ss}s_U^2+2s_Ut_U\sigma_{st}+\sigma_{tt}t_U^2,
\label{scalar_U}
\ee

\bb
g(\frac{\partial}{\partial V},\frac{\partial}{\partial V})=\sigma_{ss}s_V^2+2s_Vt_V\sigma_{st}+\sigma_{tt}t_V^2.
\ee

Let us repeat the same steps and express derivatives with respect to $V$, (note that we have an extra multiple $\pi$ in \eqref{eq:deriv_pi}, which appear each time we transform derivative with respect to $V$ into derivative with respect to $U$)

\begin{multline}
\frac{1}{\pi^2}(\sigma_{ss} (\sigma_{st} \mathcolor{olive}{s_U}+\sigma_{tt}\mathcolor{violet}{t_U})^2\\
+2\sigma_{st}(\sigma_{st} \mathcolor{olive}{s_U}+\sigma_{tt}\mathcolor{violet}{t_U})(-\sigma_{ss}s_U-\sigma_{st} t_U)\\
+\sigma_{tt}(\sigma_{ss}\mathcolor{olive}{s_U}+\sigma_{st}\mathcolor{violet}{t_U})^2).
\label{scalar_V}
\end{multline}
Then, comparison of coefficients in front of $s_U^2$ shows that in \eqref{scalar_U} we have $\sigma_{ss}$, while in \eqref{scalar_V} we have
\bb
\frac{1}{\pi^2}\sigma_{ss}(\sigma_{st}^2-2\sigma_{st}^2+\sigma_{tt}\sigma_{ss})=
\sigma_{ss}\frac{\sigma_{ss}\sigma_{tt}-\sigma_{st}^2}{\pi^2}=\sigma_{ss}
\ee
(terms which contribute to the coefficient are in the olive color).
A similar computation done for coefficients in front of $t_U^2$ shows that (the terms are in the red color), on one hand, we get $\sigma_{tt}$, while on the other hand
\bb
\frac{1}{\pi^2}\sigma_{tt}(\sigma_{ss}\sigma_{tt}-\sigma_{st}^2)=\sigma_{tt}.
\ee
Finally, the same computation shows agreement for the coefficient in front of $s_Ut_U$,
$2\sigma_{s,t}$ on one hand, and $\frac{1}{\pi^2}(2\sigma_{s,t}(\sigma_{s,s}\sigma_{t,t}-\sigma_{s,t}^2-\sigma_{t,t}\sigma_{s,s}+\sigma_{t,t}\sigma_{s,s}))=2\sigma_{s,t}$ on the other.

\subsection{An intermediate parameterization}
  \label{sec:param_u}
  It may happen that there are several points in the liquid region with the same
  slope $(s,t)\in\mathring{\mathcal{N}}\setminus\mathscr{G}$. In that case it is
  not possible to parameterize the liquid region by the slope, and thus by the
  coordinate $z$.
  However, the degree $d$ of $\nabla \mathfrak{h}\colon
  \mathcal{L}\rightarrow\mathring{\mathcal{N}}\setminus\mathscr{G}$ is constant
  except many at a finite number of points.
  We can therefore introduce a ramified covering $\Sigma$ of degree $d$ over the
  spectral curve $\{P(z,w)=0\}$ such that the variables $z$ and $w$ are now meromorphic
  functions of the variable $u\in\Sigma$. Let $\Sigma^+$ the preimage by $z$ of
  the upper-half plane. This gives a diffeomorphism from $\Sigma^+$ to the
  liquid region.

  The functions $s$ and $t$, considered before as functions of $z$, can be now
  seen also as functions of $u\in\Sigma^+$.

\subsection{The intercept function}
Let us look at the graph of the minimizer $\mathfrak{h}^{\star}$. More precisely, take
the tangent plane $\mathcal{P}_{x_0,y_0}$ to it at a given point $(x_0,y_0)$. It
has a slope given by $\nabla \mathfrak{h}^{\star}(x_0,y_0)=(s(x_0,y_0),t(x_0,y_0))$. It
also intersects the vertical axis at the
ordinate
$(\mathfrak{h}^{\star}-(sx+ty))$. The
function $c:=\mathfrak{h}^{\star}-sx-ty$ is called \textit{the intercept}. With
its help, we can parametrize the tangent planes to the plot of $\mathfrak{h}^{\star}$ as follows,

\begin{equation}
    \mathcal{P}_{x_0,y_0}=\{(x,y,z)\in\RR^3|\{s(x_0,y_0) x +t(x_0,y_0) y +c(x_0,y_0)=z\}.
    \label{tangent_planes}
\end{equation}

The important properties of $c$ for us are first, it is constant on every frozen
region since on each frozen region $\mathfrak{h}^{\star}$ is linear with a fixed
slope, and so is its tangent plane. Thus, after subtracting the linear part of
$\mathfrak{h}^{\star}$, the difference takes a constant value on each frozen region. Further, the minimizer $\mathfrak{h}^{\star}$ can be reconstructed from $s$, $t$ and $c$ as $\mathfrak{h}^{\star}=sx+ty+c$.
After it, the limit shape is obtained as the envelope of planes given
by~\eqref{tangent_planes}. The main property of these
planes is harmonicity of functions $s$, $t$ and $c$ as functions of the conformal
coordinate, which is discussed in the next section.

\subsection{\texorpdfstring{Harmonicity of $s,t$ and $c$}{Harmonicity of s, t and c}}
The property of functions $s,t$ and $c$ as functions of conformal coordinate
$u$ is that they are harmonic by Theorem~3.1 from~\cite{KP:plane}.


Now, let us re-formulate \hyperref[prop2.1]{Proposition~\ref{prop2.1}} in real terms and prove harmonicity of $s,t$ and $c$ in conformal coordinates.
\begin{proposition}
    The functions $s$, $t$ and $c$ are harmonic in the variable $u\in\Sigma^+$.
    \label{harmonicity}
\end{proposition}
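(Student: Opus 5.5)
Harmonicity of $s$ and $t$ comes first, from \eqref{pre-beltrami}. On $\Sigma^+$ the functions $\log z(u)$ and $\log w(u)$ are holomorphic in $u$, because $z$ and $w$ are meromorphic on $\Sigma$ and nonvanishing on $\Sigma^+$; indeed $z(u)\in\mathbb{H}$ there. By \eqref{pre-beltrami},
\[
X-i\pi t=\log z(u),\qquad Y+i\pi s=\log w(u).
\]
So $-\pi t=\Im\log z(u)$ and $\pi s=\Im\log w(u)$ are imaginary parts of holomorphic functions of $u$, hence harmonic. The same holds for $X$ and $Y$. Concretely, $t=-\frac{1}{\pi}\arg z(u)$ and $s=\frac1\pi \arg w(u)$, which is just item 1 of Theorem~\ref{theorem:curve} read on the cover $\Sigma^+$. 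Passing to the ramified cover of Section~\ref{sec:param_u} changes nothing, because harmonicity is local and composition with a holomorphic map preserves it.

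For $c$ I would use the tangency structure. Write $c=\mathfrak{h}-sx-ty$. Since $d\mathfrak{h}=s\,dx+t\,dy$, we get
\[
dc=-x\,ds-y\,dt .
\]
In the coordinate $u$ this reads $c_u=-x s_u-y t_u$ and $c_{\bar u}=-xs_{\bar u}-yt_{\bar u}$. Then
\[
c_{u\bar u}=-(x_{\bar u}s_u+y_{\bar u}t_u)-(x s_{u\bar u}+y t_{u\bar u}).
\]
The second bracket vanishes by the first step. So it suffices to show $x_{\bar u}s_u+y_{\bar u}t_u=0$. This follows from Proposition~\ref{intrinsic}, which gives $s_u/t_u=\zeta_x/\zeta_y=-y_{\bar u}/x_{\bar u}$ via the identity $0=u_x x_{\bar u}+u_y y_{\bar u}$ obtained by differentiating $u=u(x,y)$ in $\bar u$. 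Hence $s_u x_{\bar u}+t_u y_{\bar u}=0$ and $c_{u\bar u}=0$.

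It remains to check that Proposition~\ref{intrinsic} holds for $u$ and not only for $z$. By the chain rule $s_u=s_z z'(u)$ and $t_u=t_z z'(u)$, so the ratio $s_u/t_u$ is unchanged wherever $z'(u)\neq 0$. Similarly, $u_x/u_y=z_x/z_y$ because $u$ is a local holomorphic function of $z$ there.

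The main obstacle is the finitely many ramification points of $\Sigma^+\to\mathcal{C}^+$, where $z'(u)=0$ and $u$ is not locally a function of $z$. At these points I would argue by removability. $s$, $t$ and $c$ are continuous: $c=\mathfrak{h}-sx-ty$, and $(x,y)$ depends smoothly on $u$ through the diffeomorphism $\Sigma^+\to\mathcal{L}$. So each is a bounded function that is harmonic on a punctured neighbourhood, and isolated singularities of bounded harmonic functions are removable. A further subtlety is that $\log z$ may be multivalued on a non-simply-connected $\Sigma^+$, such as the annulus. This is harmless, since only $\arg z$ is used locally and $s,t$ are single valued by definition.
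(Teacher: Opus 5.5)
Your proposal is correct and follows the paper's proof essentially step by step. You show that $s$ and $t$ are harmonic as multiples of the arguments of the holomorphic, nonvanishing functions $w(u)$ and $z(u)$, and that $c_{u\bar u}=-(s_u x_{\bar u}+t_u y_{\bar u})=0$ by combining $0=u_x x_{\bar u}+u_y y_{\bar u}$ with Proposition~\ref{intrinsic} applied to $u$. Your extra checks add rigor the paper leaves implicit without changing the argument: the invariance of $s_u/t_u$ and $u_x/u_y$ when passing from $z$ to $u$, removability at ramification points (where in fact $s_u=t_u=0$, so $c_{u\bar u}=0$ holds there directly), and the purely local use of $\arg z$ on the annulus.
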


\begin{proof}
Functions $s$ and $t$ are harmonic in the variable $u$ since they are equal to
the argument (imaginary part of the logarithm) of $z$ and $w$, which are
holomorphic functions of $u$ in the interior of $\Sigma^+$.


Next, consider 
$c=(\mathfrak{h}-(sx+ty))$,
and let us apply to it
$\partial_u$
and
$\partial_{\bar{u}}$
remembering that 
\begin{equation*}
  \nabla \mathfrak{h}=(\mathfrak{h}_x,\mathfrak{h}_y)=(s,t),
\end{equation*}
\bb
c_u=(\mathfrak{h}-(sx+ty))_u=
\mathfrak{h}_x x_u+
\mathfrak{h}_y y_u
-(s_u x+ t_u y+ s x_u + t y_u)=-s_u x - t_u y.
\ee

Then, applying
$\partial_{\bar{u}}$
we obtain by harmonicity of $s,t$

\bb
%
\partial_{\bar{u}}(s_u x + t_u y)= s_u x_{\bar{u}}+t_u y_{\bar{u}}.
\ee

Moreover, taking the derivative of $u=u(x,y)$ with respect to $\bar{u}$
(which gives 0), we get by the chain rule:
$0=u_x x_{\bar{u}} + u_y y_{\bar{u}}$. Since $u$ is locally a holomorphic
function of $z$, it is also an intrinsic coordinate. We can therefore
apply~Proposition~\ref{intrinsic}, and get
\begin{equation*}
  -\frac{y_{\bar u}}{x_{\bar u}} = \frac{u_x}{u_y} = \frac{s_u}{t_u},
\end{equation*}
from which it follows that
\begin{equation*}
  \Delta_u c = 4 c_{u,\bar{u}} = 0.
\end{equation*}
\end{proof}

Therefore, we can hope to reconstruct all three functions as long as we are able
to perform explicit harmonic extensions of piecewise constant boundary
conditions. As a corollary, we can obtain the limit shape $\mathfrak{h}^{\star}$
as envelope of harmonically moving planes by Theorem \eqref{limit_shape_envelope}, see also \cite[Theorem~3.2]{KP:plane},

\subsection{Tangent plane equation}

Introduce the main equation of the method, Equation~(20) from~\cite{KP:plane},
which will help us to compute the arctic curve.

\begin{proposition}
  Inside the liquid region, parameterized by $u\in\Sigma^+$, the following
  equation holds:
    \bb
    s_u x+t_u y +c_u=0
    \ee
    \label{tangent_eq}
\end{proposition}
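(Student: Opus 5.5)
The plan is to derive the identity directly from the definition of the intercept $c=\mathfrak{h}-sx-ty$, viewing everything as a function of the intrinsic coordinate $u\in\Sigma^+$. This is the same computation already carried out at the start of the proof of Proposition~\ref{harmonicity}, so the argument should be short.

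By Section~\ref{sec:param_u}, the map $u\mapsto (x(u),y(u))$ is a diffeomorphism from $\Sigma^+$ onto $\mathcal{L}$. Hence $x$, $y$, $s$, $t$, $\mathfrak{h}$ and $c$ are all smooth functions of $u$ in the interior, and the Wirtinger derivative $\partial_u$ may be applied to each of them via the chain rule.

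First, compute $\mathfrak{h}_u$. Since $\mathfrak{h}$ is a function of $(x,y)$,
\[
\mathfrak{h}_u=\mathfrak{h}_x x_u+\mathfrak{h}_y y_u = s x_u + t y_u,
\]
where we used $\nabla\mathfrak{h}=(s,t)$ in $\mathcal{L}$.

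Second, differentiate $c=\mathfrak{h}-sx-ty$ with the product rule:
\[
c_u=\mathfrak{h}_u - s_u x - s x_u - t_u y - t y_u .
\]
Substituting the expression for $\mathfrak{h}_u$, the terms $s x_u$ and $t y_u$ cancel, leaving $c_u=-s_u x - t_u y$. This is exactly the claimed equation $s_u x + t_u y + c_u = 0$.

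Geometrically, the identity says that the family of planes $\mathcal{P}_{x_0,y_0}$ in \eqref{tangent_planes} is stationary to first order along the graph. That is, the point $(x,y)$ lies on the characteristic line of the envelope.

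The only real obstacle is justifying the regularity used above: smoothness of $\mathfrak{h}$, and the fact that $(x,y)$ is a smooth function of $u$. This is where one needs $\mathfrak{h}$ to be $C^1$, and in fact real-analytic, inside $\mathcal{L}$, together with the diffeomorphism property of $\Sigma^+\to\mathcal{L}$ from Section~\ref{sec:param_u}.

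At the finitely many ramification points of $\Sigma\to\mathcal{C}$, the identity still holds. Both sides are continuous there, since $s_u$, $t_u$ and $c_u$ are derivatives of harmonic functions by Proposition~\ref{harmonicity}, so the identity extends by continuity.

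We would also remark that, because $s_u$, $t_u$, $c_u$ are holomorphic and $x,y$ are real, this single complex equation gives two real equations. These can be solved for $(x(u),y(u))$, which is how the equation will be used to recover the arctic curve.
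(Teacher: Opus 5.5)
Your proof is correct and is essentially the paper's own argument: apply $\partial_u$ to $\mathfrak{h}$, using $\nabla\mathfrak{h}=(s,t)$, to get $\mathfrak{h}_u=sx_u+ty_u$; then the product rule on $c=\mathfrak{h}-sx-ty$ cancels the $sx_u+ty_u$ terms. Your extra remarks on regularity and ramification points are harmless but not needed. The continuity extension in particular is superfluous, since $\Sigma^+\to\mathcal{L}$ is already a diffeomorphism and the chain-rule computation holds pointwise everywhere in $\Sigma^+$.
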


What is important is that this complex equation is equivalent to two real
equations for the real and imaginary parts, which gives for every $u\in\Sigma^+$
a linear system for
$(x(u),y(u))$, once we know $s$, $t$, and $c$ as functions of $u$.

\begin{proof}
    Let us apply the Wirtinger derivative $\partial_u $ to the minimizer $\mathfrak{h}$, remembering that $\nabla \mathfrak{h}=(s,t)$, we have the following equalities
\bb
\mathfrak{h}_u=sx_u+ty_u=(sx+ty)_u-(s_u x+t_u y)
\ee

\bb
s_u x+ t_u y +(\mathfrak{h}-(sx+ty))_u=0,
\ee
which is the same as 
\bb
s_u x +t_u y +c_u=0.\qedhere
\ee
\end{proof}

In practice, we deal with harmonic extensions made of linear combinations of $\arg (u)$, the important identities are first $\arg (u)=\Im \log (u)$, which we use to take the Wirtinger derivative as
\bb
\partial_u \arg (u)=\frac{1}{2i} \frac{1}{u}.
\ee

As a corollary of \eqref{harmonicity} and \eqref{tangent_eq}, we obtain the plot of the minimizer $\mathfrak{h}^\star$ over $\Sigma^+$ can be recovered as an envelope of harmonically moving planes with the slope given by $(s(u),t(u))$, $u\in\Sigma^+$.

\begin{theorem}[Theorem 3.2 \cite{KP:plane}]
The graph of the minimizer $\mathfrak{h}^{\star}$ over $u\in\Sigma^+$ is the envelope of harmonically moving planes $\mathcal{P}_u$ that satisfy two conditions:
\begin{itemize}
    \item $\mathcal{P}_u:=\{(x(u),y(u),z(u))\in\RR^3|x s(u)+ y t(u)+c(u)=z(u)\}$
    \item $s_u x +t_u y +c_u=0$
\end{itemize}
    \label{limit_shape_envelope}
    Here, $x$ and $y$ are functions of $u$.
\end{theorem}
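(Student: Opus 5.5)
We prove Theorem~\ref{limit_shape_envelope} by reading it as a statement about the two-parameter family of planes $\{\mathcal{P}_u\}_{u\in\Sigma^+}$ and showing that the graph of $\mathfrak{h}^\star$ is exactly its envelope.

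We start from the definition of the intercept, $c=\mathfrak{h}^\star-sx-ty$, where $(x,y)=(x(u),y(u))$ is the inverse of the diffeomorphism $\Sigma^+\to\mathcal{L}$ of Section~\ref{sec:param_u}. This gives the first condition immediately: the point $(x(u),y(u),\mathfrak{h}^\star(x(u),y(u)))$ lies on $\mathcal{P}_u$, since $\mathfrak{h}^\star=s(u)x(u)+t(u)y(u)+c(u)$ there. The harmonicity of $s$, $t$, $c$ in $u$ is Proposition~\ref{harmonicity}, which justifies calling the family ``harmonically moving''.

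Next we verify the envelope condition. We write $F(x,y,u)=s(u)x+t(u)y+c(u)$ as a function of the fixed spatial point $(x,y)$ and of the parameter $u$. A surface is the envelope of the planes $z=F(x,y,u)$ exactly when, at the point attached to $u$, both of the following hold:
\begin{itemize}
\item the surface passes through the plane, $z=F$;
\item the stationarity condition $\partial_u F=\partial_{\bar u}F=0$ holds, with $(x,y)$ frozen.
\end{itemize}
Because $s,t,c$ are real, $\partial_{\bar u}F=\overline{\partial_u F}$, so the two real conditions reduce to the single complex equation $s_u x+t_u y+c_u=0$. This is precisely Proposition~\ref{tangent_eq}, evaluated at $(x(u),y(u))$. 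Hence each point of the graph satisfies the envelope equations with parameter $u$.

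Conversely, we show that the envelope equations determine the graph. For fixed $u$, the equation $s_u x+t_u y+c_u=0$ is a real $2\times 2$ linear system in $(x,y)$, with determinant $\Im(\bar s_u t_u)$. By Proposition~\ref{intrinsic} we have $s_u/t_u=\gamma$ with $\Im\gamma<0$. So $s_u$ and $t_u$ are $\mathbb{R}$-linearly independent whenever $t_u\neq0$, and in that case the solution is unique and equals $(x(u),y(u))$. Substituting into $\mathcal{P}_u$ then returns $z(u)=\mathfrak{h}^\star(x(u),y(u))$. Finally, the tangency of $\mathcal{P}_u$ to the graph (not mere incidence) follows from $\nabla\mathfrak{h}^\star=(s,t)$.

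The main obstacle is the nondegeneracy $t_u\neq0$, that is, $\nabla\mathfrak{h}^\star$ being a local diffeomorphism in $u$. We would handle it in two steps:
\begin{itemize}
\item Away from the finitely many ramification points of $\Sigma\to\mathcal{C}$, it follows from $z$ being a local holomorphic coordinate together with the relation~\eqref{pre-beltrami}.
\item At the ramification points themselves, we would argue by continuity, since the envelope is the closure of its values on the complement of a finite set.
\end{itemize}
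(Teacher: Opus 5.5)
Your proposal is correct and follows the paper's route. The paper obtains this statement directly as a corollary of Proposition~\ref{harmonicity} and the tangent-plane equation of Proposition~\ref{tangent_eq}: incidence comes from $c=\mathfrak{h}^\star-sx-ty$, and stationarity from $s_ux+t_uy+c_u=0$. Your converse, via the real $2\times 2$ system with determinant $\Im(\bar s_u t_u)\neq 0$, is a welcome extra that the paper leaves implicit.

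One refinement: at interior ramification points $s_u$ and $t_u$ vanish simultaneously, since both are multiples of $z_u$. This forces $c_u=0$ there, which is exactly the critical-point matching condition the paper uses to fix parameters. The characteristic set at such a point is then all of $\mathcal{P}_u$, so your continuity step is really the convention of taking the closure of the regular part of the envelope, not a consequence of the equations.
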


\section{Applications for the random domino tilings}
\label{sect:3}
In the rest of the chapter, we are going to explain the method applied to a simply-connected domain, the Aztec diamond, and to our main example, the Aztec diamond with a hole.
We also need to use assumptions for the number of frozen regions of the particular domains, and values of $s,t$ and $c$ there. This data is inferred from discrete boundary conditions and computer simulations.

\begin{figure}[h!]
    \centering
    \includegraphics[width=0.3\linewidth]{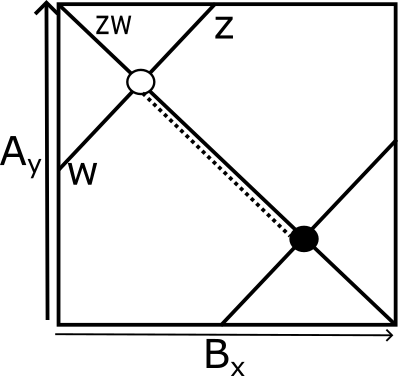}
    \caption{Weights of the fundamental domain of the square grid with $A_y$ and $B_x$ cycles. The reference dimer cover is in dashed color.}
    \label{fig:fundamental_weights}
\end{figure}

For the square grid, we can take the fundamental domain represented
on Figure~\ref{fig:fundamental_weights}, the corresponding Kasteleyn operator is given by
\bb
\mathcal{K}(z,w)=1+z+w-zw.
\ee

\subsection{A case with a simply-connected liquid region: the uniform Aztec
diamond}

Recall that the Aztec diamond of order $N$ is the union of unit squares $S(m,n)$ of the square lattice whose centers $(m,n)$ satisfy
\bb
|m-\frac{1}{2}|+|n-\frac{1}{2}|\leq N.
\ee
It is also convenient to introduce chess-board coloring on the square grid, this
way we obtain $4$ types of dominoes, see Figure~\ref{fig:aztec_color}.
\begin{figure}[t!]
    \centering
    \begin{subfigure}{0.2\linewidth}
    \centering
    \hfill
    \includegraphics[width=\linewidth]{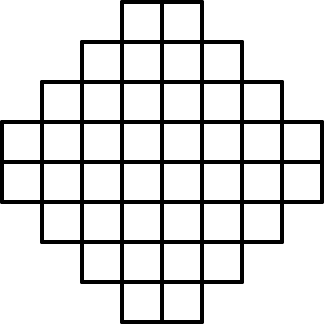}
    \end{subfigure}
    \hfill
    \begin{subfigure}{0.35\linewidth}
        \includegraphics[width=\linewidth]{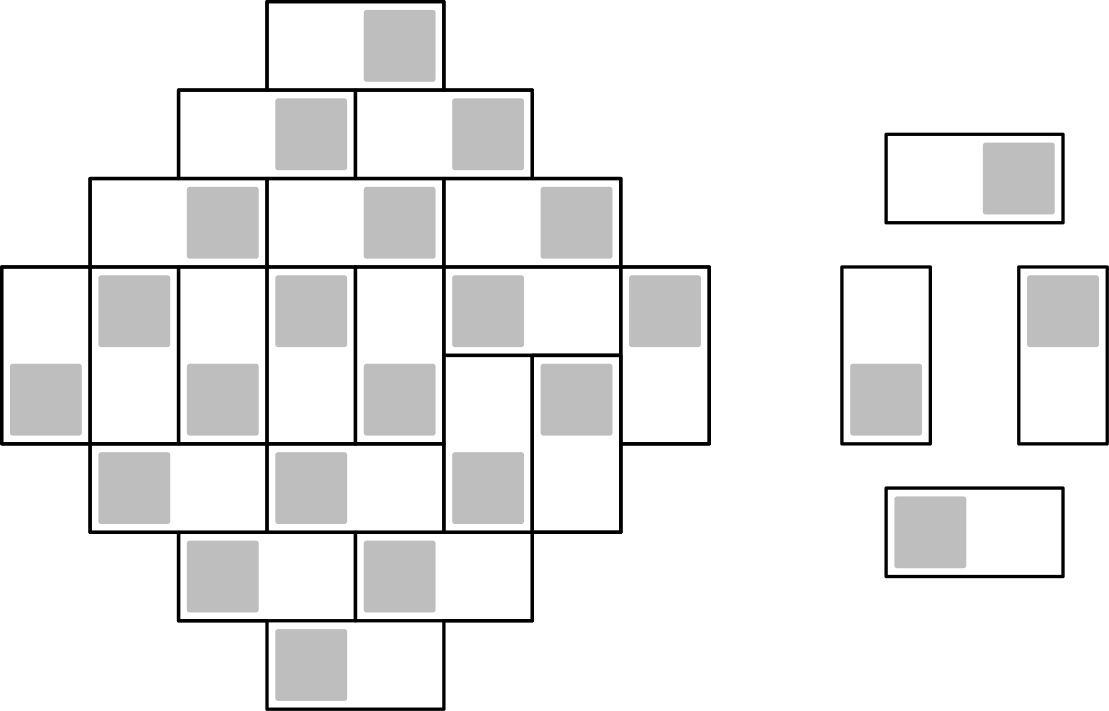}
    \end{subfigure}
    \hfill\mbox{}
    \caption{Example of the Aztec diamond of order $4$ on the left, and a domino
    tiling of it on the right, with four types of dominoes according to the chess-board coloring.}
\end{figure}

\begin{figure}[h!]
    \centering
    \hfill
    \begin{subfigure}{0.4\linewidth}
    \centering
\includegraphics[width=\linewidth]{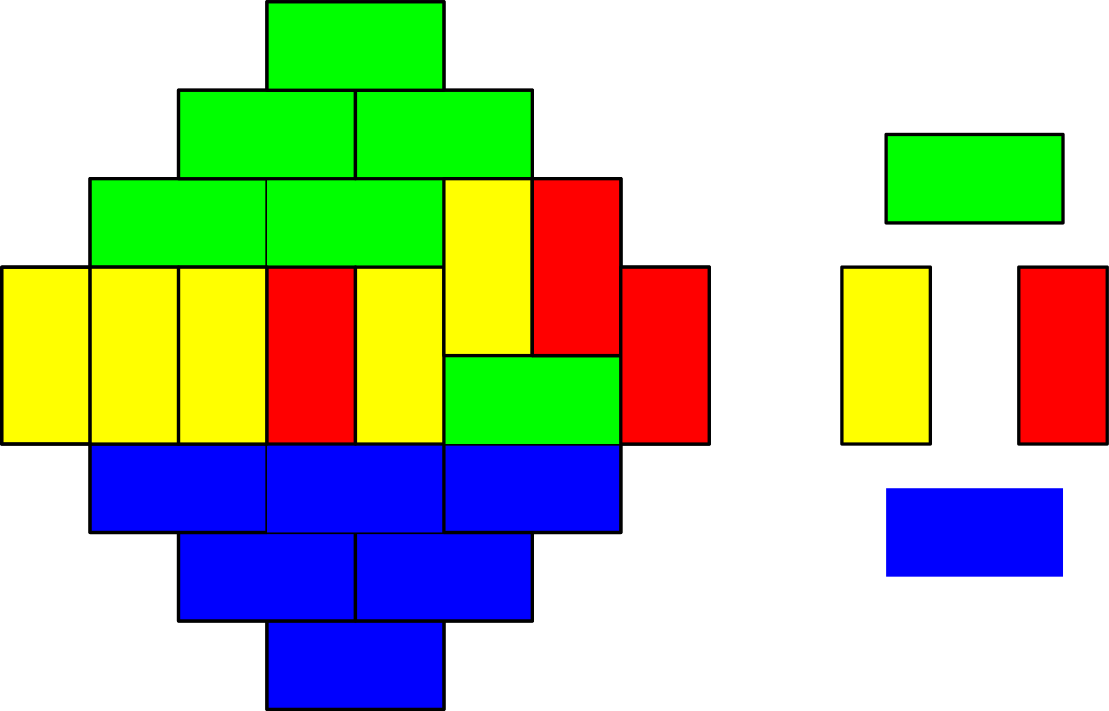}
    \end{subfigure}
    \hfill
    \begin{subfigure}{0.35\linewidth}
        \includegraphics[width=\linewidth]{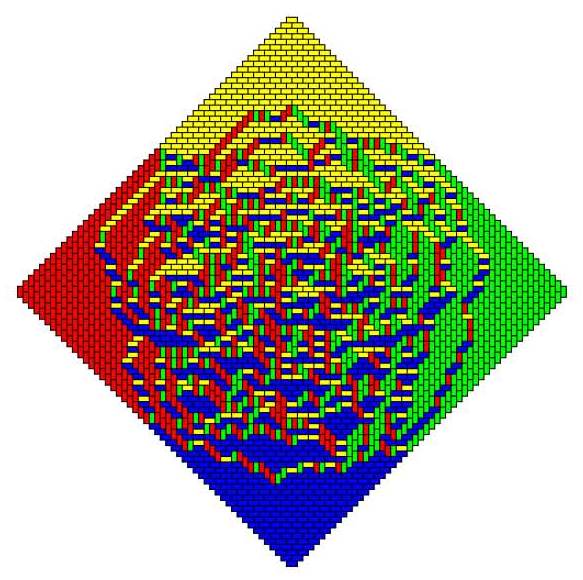}
    \end{subfigure}
    \hfill\mbox{}
    \caption{Example of the same domino tiling in color notation, and domino tiling of Aztec diamond of order $50$ by J.Propp.}
    \label{fig:aztec_color}
\end{figure}

We consider in this section the case of the uniform measure on tilings of
the Aztec diamond of size $N$.
As $N$ goes to infinity, a random height function converges in probability to a
  deterministic function which is linear in the four regions in the limiting
  renormalized square deprived of the inscribed disc (the \emph{frozen
  regions}) and is smooth inside the inscribed disc (the \emph{liquid} region). This
  statement is due to Jockush, Propp and Shore~\cite{JPS}.
  The goal of this section is to recover this result using the tangent plane
  method, described above, following loosely~\cite[Section 6.1]{KP:plane},[Section 2.3]\cite{KP:new}.

\subsubsection{Applying the tangent plane method}

In the case of the uniform Aztec diamond, it is convenient to use a coordinate
system $(x,y)$ in the renormalized domain wich is rotated by 45 degrees with respect to
the horizontal/vertical discrete coordinate axes, so that the renormalized
domain becomes in the limit the unit square $[0,1]\times[0,1]$. See
Figure~\ref{fig:tangent_planes_boundary_aztec} to see the directions of the two
axes for $x$ and $y$.

 In the liquid region $\mathcal{L}$,
there is a unique point where the limiting height function has a given slope in
the interior of the Newton polygon (the fact that each of the four frozen phases
is seen only once on the boundary of $\mathcal{L}$ is a sign that the degree $d$
from $u$ to $z$ or $w$ is 1). This is thus a case when we
can take $u=z$: the liquid region can be paramaterized by
$z\in\mathbb{H}$. We now determine the values for $s$, $t$ and $c$ on
the boundary of the liquid region (which corresponds to
$z\in\mathbb{R}\cup\{\infty\}$, bounded by the four distinct frozen regions.

Each interval $(-\infty,-1)$, $(-1,0)$, $(0,1)$, $(1,\infty)$ corresponds to an
arc of the arctic curve touching a given frozen phase, with slope $(s,t)$ given
respectively (with the convention of the slope given by
Equation~\eqref{pre-beltrami0}) by $(1,0)$, $(0,0)$, $(0,1)$ and $(1,1)$
respectively.

We form a table consisting of four columns for each frozen region, and three
rows for each function $s$, $t$ and $c$.
\begin{center}
\begin{tabular}{c | c | c | c | c}
&  (1) & (2) & (3) & (4) \\
&$z<-1$&$-1<z<0$&$0<z<1$&$1<z$\\
&{\scriptsize $1>w>0$}& {\scriptsize $0>w> -1$}
& {\scriptsize $-1>w$} & {\scriptsize $w>1$}\\\hline
$s$ & $1$ & $0$ & $0$ & $1$\\
$t$ & $0$ & $0$ & $1$ & $1$\\
$c$ & $0$ & $1$& $0$ & $0$
\end{tabular}\end{center}

The value of $c$ is determined up to some additive constant, fixed here to be
zero. Its value in each region is fixed by the condition that the linear parts
of the height functions with the given slopes in the table should match at the
``turning points'', where two frozen regions touch on the boundary of the Aztec
diamond. See Figure~\ref{fig:height_boundary_aztec} for a picture of those
linear pieces.

\begin{figure}[htpb]
  \centering
  \includegraphics[width=6cm]{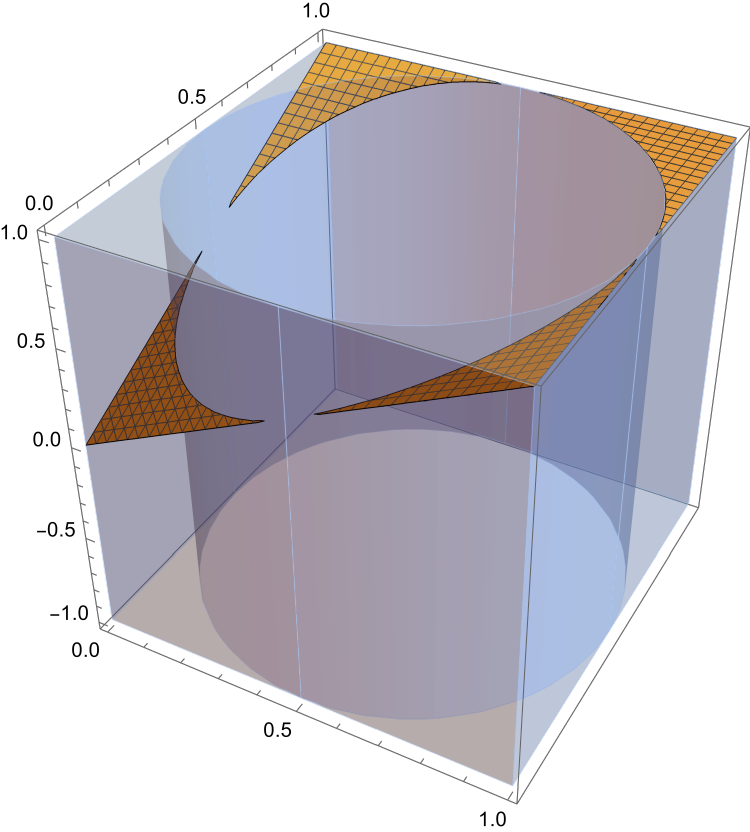}
  \caption{Height above the frozen regions of the Aztec diamond, which is a
  piecewise linear function. The values of $c$ on each arc of the arctic curve
  reflect the continuity of the height along the boundary of the liquid region.}
  \label{fig:height_boundary_aztec}
\end{figure}

We represent visually on Figure~\ref{fig:tangent_planes_boundary_aztec} the
information of the table, and four specific values of $z$ at the transition
between two frozen phases near the arctic curve.

\begin{figure}[htpb]
  \centering
  \def\svgwidth{10cm}
  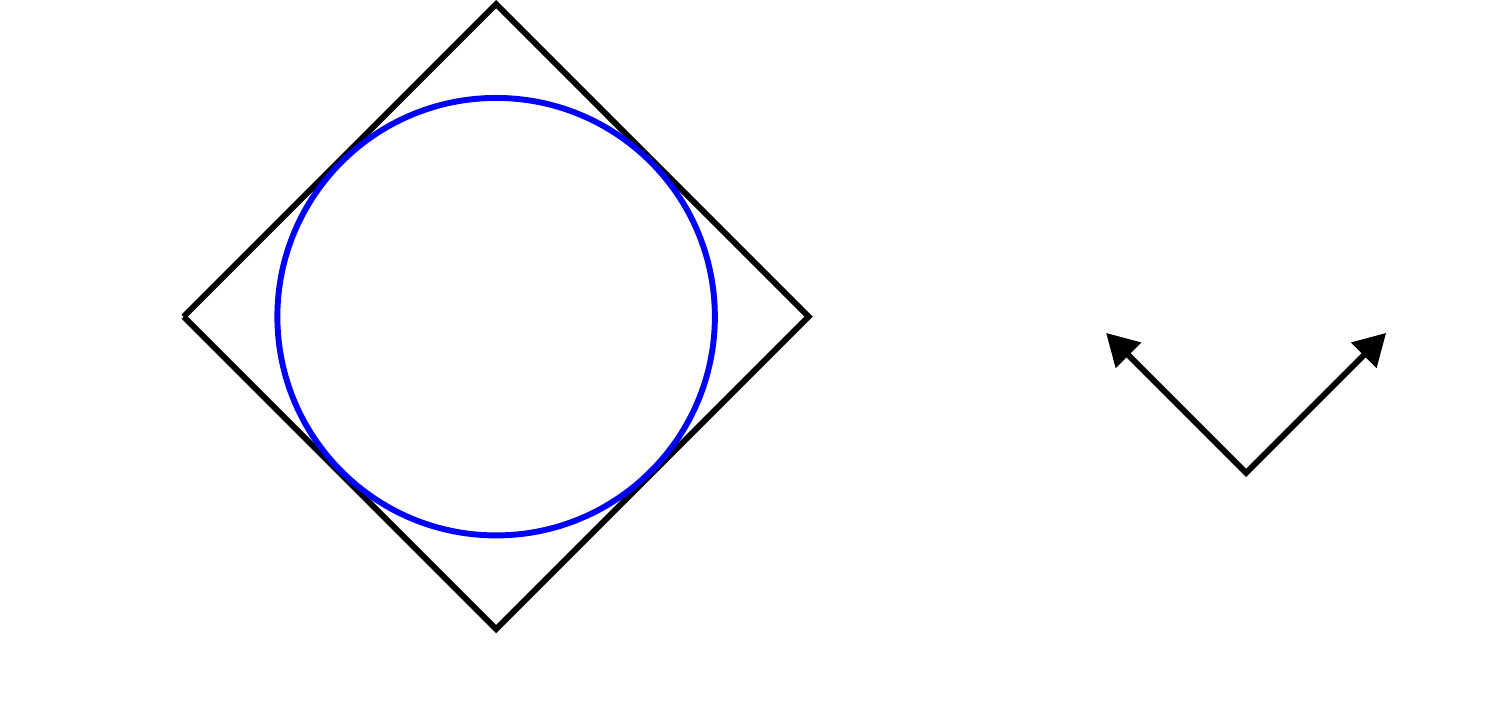
  \caption{The four frozen phases of the Aztec diamond, with colors
    corresponding to branches of the boundary of the amoeba from
    Figure~\ref{fig:amoeba}. For each region, we indicate in black the equation
    for the ordinate as an affine function of $x$ and $y$. For each ``turning
    point'', we indicate in purple the corresponding value of $z$ in the
    parametrization.
  }
  \label{fig:tangent_planes_boundary_aztec}
\end{figure}



\subsubsection{Parametrization of the limit shape of the Aztec diamond}
\label{sect_aztec}

We know that $s$, $t$ and $c$ are harmonic functions of the variable $z$. We
construct explicit harmonic extensions of the boundary values from the table.
For $s$ and $t$, the answer is given directly by Equation~\eqref{pre-beltrami0},
which can be rewritten as
\begin{equation*}
  s(z) = \frac{1}{\pi}\arg (-w(z)) = \frac{1}{\pi}\arg \frac{z+1}{z-1},
  \qquad
  t(z) = -\frac{1}{\pi}\arg (-z) = 1-\frac{1}{\pi}\arg(z).
\end{equation*}
For $c$, we proceed with the same idea, using building blocks of the form
\begin{equation*}
f(z)=\frac{1}{\pi} \arg \frac{z-b}{z-a}
\end{equation*}
which is the harmonic extension of the indicator function of interval $[a,b]$.
The corresponding harmonic extensions is:
\begin{equation*}
  c(z)= \frac{1}{\pi}(-\pi+\arg(z-1)).
\end{equation*}




Another choice of conformal coordinate instead of $z$ which will be useful in
connection with the next example is choosing $\zeta=\frac{2}{\pi}\arctan z$,
which maps the complex plane to an infinite cylinder $\mathbb{C}/2\mathbb{Z}$.
The upper-half plane for $z$ parametrizing the liquid region (respectively the
real axis together with the point at infinity parametrizing the arctic curve)
is mapped to the upper half of the cylinder (respectively to
$\mathbb{R}/2\mathbb{Z}$).



The table of boundary conditions of $s,t$ and $c$ for $\mathcal{AD}$ for the
variable $\zeta$ is the following (where the intervals for $\zeta$ represent a
cyclic order, as on the cylinder, $-1=1$):
\begin{center}
\begin{tabular}{c | c | c | c | c}
&  (1) & (2) & (3) & (4) \\
&$-1<\zeta< -\frac{1}{2}$ & $-\frac{1}{2} < \zeta <0$ & $0< \zeta<\frac{1}{2}$ &
$ \frac{1}{2}<\zeta<1$\\\hline
$s$ & $1$ & $0$ & $0$ & $1$\\
$t$ & $0$ & $0$ & $1$ & $1$\\
$c$ & $0$ & $1$& $0$ & $0$
\label{table_aztec_trig}
\end{tabular}\end{center}

The harmonic extensions in this parametrization are given by the pullback under the map
$z\mapsto \frac{2}{\pi}\arctan(z)$,
that is,
by a composition with the map
$\zeta\mapsto \tan (\frac{\pi\zeta}{2})$.

\begin{align*}
  s(\zeta) & = \frac{1}{\pi}\arg
  \frac{1-\tan(\frac{\pi\zeta}{2})}{1+\tan(\frac{\pi\zeta}{2})}=\frac{1}{\pi}\arg\tan(\frac{\pi}{2}(\frac{1}{2}-\zeta)), \\
  t(\zeta) & = 1-\frac{1}{\pi}\arg(\tan(\frac{\pi\zeta}{2})), \\
  c(\zeta) & = \frac{1}{\pi}\arg\frac{\tan(\frac{\pi}{2}\zeta)}{\tan(\frac{\pi}{2}\zeta)+1}.
\end{align*}




Now, we need to plug them into linear system for $(x,y)$ given by \hyperref[tangent_eq]{Equation (\ref{tangent_eq})}.
For it, we need derivatives of $s,t$ and $c$. Recall that we differentiate the
$\arg z$ by applying the Wirtinger derivative using the identity $\arg (z)=\Im
\log z$, which gives a factor $\frac{i}{2}$ to each $\arg $, and therefore
factorizes. Thus, we present derivatives without this common factor.
In the coordinate $z$ after a common multiplication by $2\pi i$, we have

\begin{align*}
  2 i\pi s_z & = \frac{1}{z-1}-\frac{1}{z+1}=\frac{2}{z^2-1},\\
2 i \pi t_z &  = \frac{1}{z},\\
2 i\pi c_z &   = \frac{1}{z}-\frac{1}{z+1}=\frac{1}{z(z+1)}.
\end{align*}







With the help of expressions, we can build parametric plots using linear system $s_z x+ t_z y +c_z=0$
for finding $(x(z),y(z))$,
which continuously depends on $z$, and then by inversion, compute for every
$(x,y)$ in the liquid region the value of $s(x,y)$, $t(x,y)$, $c(x,y)$, which
would give the equation of the tangent plane to the limit shape above the point
$(x,y)$. This would finally allow us to reconstruct the limiting height function
$\mathfrak{h}$ as the enveloppe of this family of tangent planes.

But before that, there is something simpler we can do: we can look at the image
of $\partial\mathbb{H}$ by the map $z\mapsto (x(z), y(z))$ which corresponds
exactly to the arctic curve, that is the boundary of the liquid region.


Furthermore, from a computational point
of view, it may be hard to give an analytical expression for the solution
$(x(z),y(z))$ (except for the uniform Aztec diamond in coordinate $z$, as we
will see). Rather, we can instead for each $z$
compute the linear system and its solution, $(x(z),y(z))$ and plot it by varying
$z$. 

In fact, the functions defining the coefficients may not well-defined precisely on the
boundary, thus if we try to do it numerically, we add a small positive imaginary
part to $z$. This
reflects the fact that
gradient $\nabla \mathfrak{h}$ is defined only in the interior of the liquid
region, and not on the arctic curve. 
\begin{equation}
    \Re(s_z) x +\Re (t_z) y +\Re(c_z)=0, 
    \label{eq:criteq_real}
\end{equation}
\begin{equation}
    \Im(s_z) x +\Im(t_z) y +\Im(c_z)=0.  
    \label{eq:criteq_imag}
\end{equation}
This is the approach we will use for the next examples.


But it turns out, as it is often the case for the Aztec diamond, that
computations are quite easy, and we can find $x(z), y(z)$ for every
$z\in\mathbb{H}$ by solving the simple linear system,
\begin{equation}
  x(z)= \frac{|z-1|^2}{2(|z|^2+1)},\qquad
  y(z) = \frac{1}{1+|z|^2},
  \label{eq:param_liq_aztec_unif}
\end{equation}
or even invert it to
find $z(x,y)$. Indeed, the two real equations~\eqref{eq:criteq_real}
and~\eqref{eq:criteq_imag} are equivalent to the degree-2 complex equation in
$z$:
\begin{equation*}
  2 z x +(z^2-1)y-z+1=0,
\end{equation*}
for which we search the solution with positive (or rather non-negative) root,
which is given by:
\begin{equation*}
  z(x,y) = \frac{1-2x+i\sqrt{1-(2x-1)^2-(2y-1)^2}}{2y},
\end{equation*}



At this point, we already recognize the arctic circle of radius
$\frac{1}{2}$ and center $(\frac{1}{2},\frac{1}{2})$ as the limit of definition
of the square root defining $z$: this corresponds to the boundary of the liquid
region. As mentionned before, another way to recover a parametrization of the
arctic curve would be to take the solutions from
Equation~\eqref{eq:param_liq_aztec_unif} and plot it for
$z\in\mathbb{R}\cup\{\infty\}$. See
Figure~\ref{fig:phase_z_aztec_unif} for a plot of the argument of $z$ across the
liquid region, and for a parametric plot of the arctic curve.

\begin{figure}
  \hfill
  \includegraphics[width=5cm]{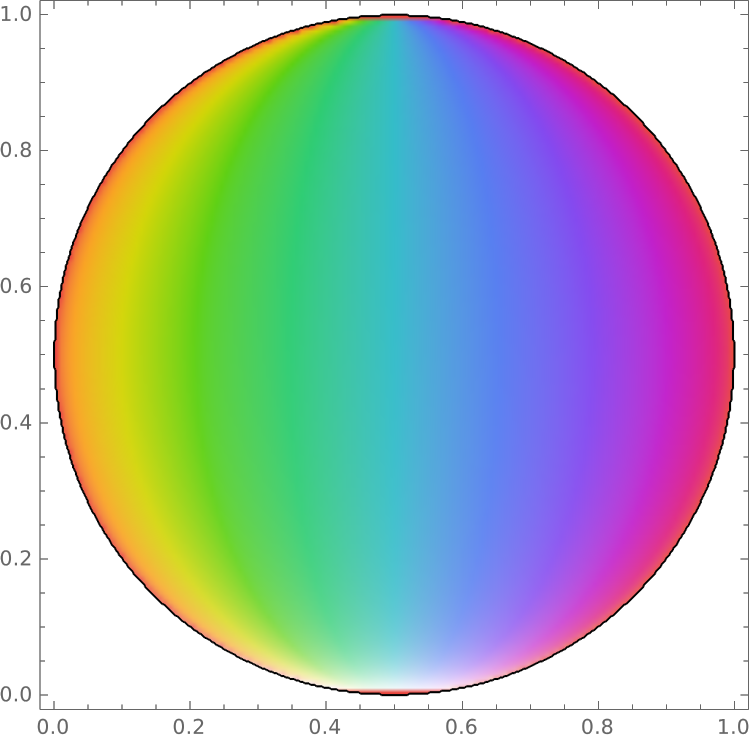}\hfill
  \includegraphics[width=5cm]{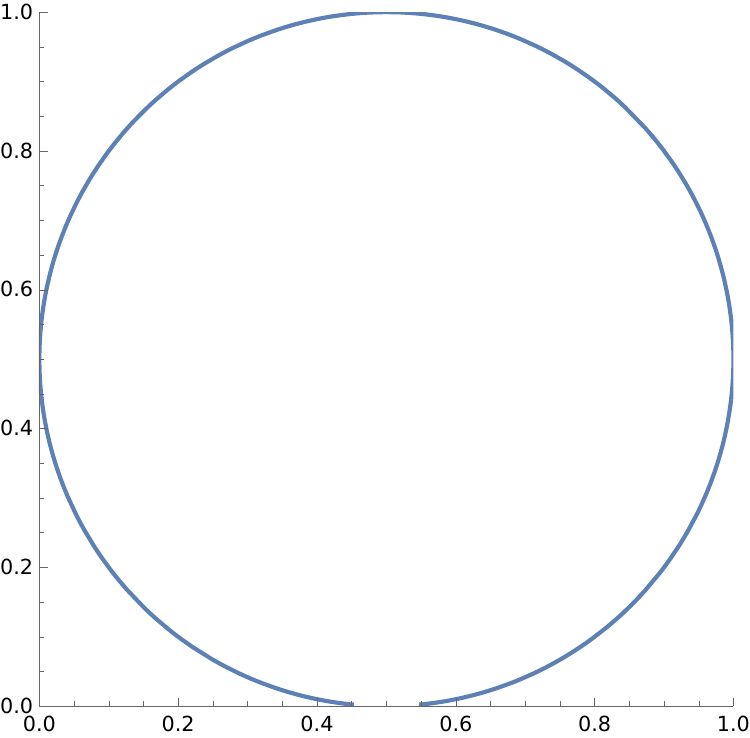}
  \hfill\mbox{}
  \caption{Left: the argument of $z(x,y)$ for $(x,y)$ in the liquid region. Cyan
    means that $z$ is pure imaginary, purple and pink (close to the right
    boundary) correspond to $z$ near $\mathbb{R}^-$, whereas yellow and orange (on
    the left) means that $z$ is close to $\mathbb{R}^+$.
  Right: a large piece of the arctic curve obtained from the parametrization by
Equations~\ref{eq:param_liq_aztec_unif}, for $z$ real between $-20$, $20$. One
can see that the bottom of the circle is cropped, meaning that this part
corresponds to $z$ in a neigbourhood of $\infty$.
}\label{fig:phase_z_aztec_unif}
\end{figure}

\begin{remark}
  Instead of taking the smallest fundamental domain with one white and one black
  vertex, and considering the spectral curve for it, one could have taken a fundamental domain made
  of two white and two black vertices forming a $2\times 2$-square.
  See Figure~\ref{fig:fund_domain_z2_2x2}. 
  The coordinate axes
  $(x,y)$ are now aligned with the horizontal and vertical axes of the
  square lattice, so that 
  the renormalized domain is a rotated square $|x|+|y|\leq 1$.
  If we label with 1 the vertices on the top row, and with 2 those of the second
  row of the fundamental domain, the modified Kasteley matrix $\mathcal{K}(z,w)$
  is given by
  \begin{equation}
    \mathcal{K}(z,w)=
    \begin{pmatrix}
      1-w & z-1 \\
      \frac{1}{z}-1 & \frac{1}{w}-1
    \end{pmatrix}
    \label{eq:kast_2x2_unif}
  \end{equation}
  and the characteristic polynomial can be written as
  $-4+z+w+\frac{1}{z}+\frac{1}{w}$. See Figure~\ref{fig:fund_domain_z2_2x2},
  right for the Newton polygon of this characteristic polynomial, and
  Figure~\ref{fig:amoeba_z2_2x2} for its amoeba, with the indication of the four
  frozen phases.
  A similar table for boundary values of
  $s$, $t$ and $c$.
\begin{center}
\begin{tabular}{c | c | c | c | c}
&  (1) & (2) & (3) & (4) \\\hline
$s$ & $1$ & $0$ & $-1$ & $0$\\
$t$ & $0$ & $-1$ & $0$ & $1$\\
$c$ & $-\frac{1}{2}$ & $\frac{1}{2}$& $-\frac{1}{2}$ & $\frac{1}{2}$
\end{tabular}\end{center}
Note that in that case, $z$ itself does not parametrize the liquid region: for a
given value of $z$, there are two values of $w$ such that $(z,w)$ is on the
spectral curve. However, one can parametrize both $z$ and and $w$ using rational
fractions of a variable $u$ living on the Riemann sphere, which could be chosen
as the conformal coordinate.
\end{remark}

\begin{figure}
  \centering
  \hfill{}
  \def\svgwidth{4cm}
  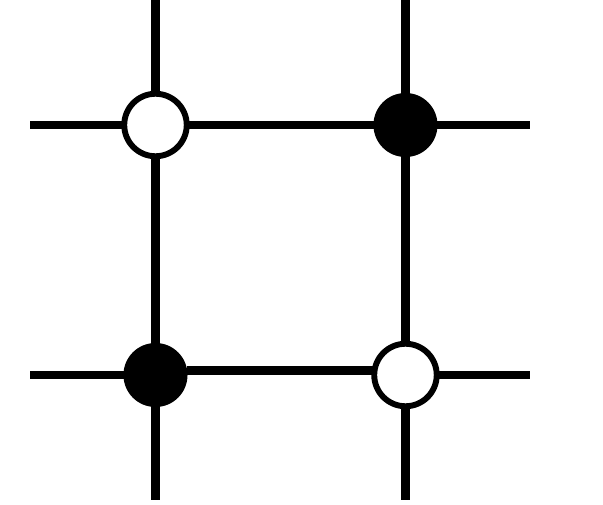
  \hfill
  \def\svgwidth{3cm}
\begingroup%
  \makeatletter%
  \providecommand\color[2][]{%
    \errmessage{(Inkscape) Color is used for the text in Inkscape, but the package 'color.sty' is not loaded}%
    \renewcommand\color[2][]{}%
  }%
  \providecommand\transparent[1]{%
    \errmessage{(Inkscape) Transparency is used (non-zero) for the text in Inkscape, but the package 'transparent.sty' is not loaded}%
    \renewcommand\transparent[1]{}%
  }%
  \providecommand\rotatebox[2]{#2}%
  \newcommand*\fsize{\dimexpr\f@size pt\relax}%
  \newcommand*\lineheight[1]{\fontsize{\fsize}{#1\fsize}\selectfont}%
  \ifx\svgwidth\undefined%
    \setlength{\unitlength}{192.1259565bp}%
    \ifx\svgscale\undefined%
      \relax%
    \else%
      \setlength{\unitlength}{\unitlength * \real{\svgscale}}%
    \fi%
  \else%
    \setlength{\unitlength}{\svgwidth}%
  \fi%
  \global\let\svgwidth\undefined%
  \global\let\svgscale\undefined%
  \makeatother%
  \begin{picture}(1,0.93950849)%
    \lineheight{1}%
    \setlength\tabcolsep{0pt}%
    \put(0,0){\includegraphics[width=\unitlength,page=1]{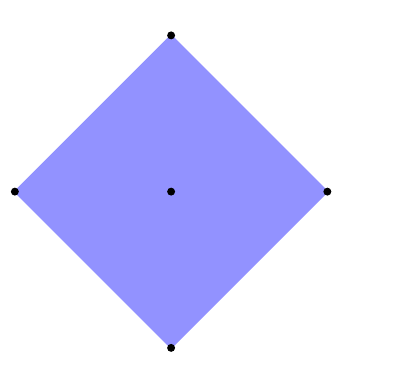}}%
    \put(0.40796674,0.89015029){\color[rgb]{0,0,0}\makebox(0,0)[lt]{\lineheight{1.25}\smash{\begin{tabular}[t]{l}$(4)$\end{tabular}}}}%
    \put(0.40796674,0.01182036){\color[rgb]{0,0,0}\makebox(0,0)[lt]{\lineheight{1.25}\smash{\begin{tabular}[t]{l}$(2)$\end{tabular}}}}%
    \put(0.85689091,0.46074453){\color[rgb]{0,0,0}\makebox(0,0)[lt]{\lineheight{1.25}\smash{\begin{tabular}[t]{l}$(1)$\end{tabular}}}}%
    \put(-0.00192061,0.48026305){\color[rgb]{0,0,0}\makebox(0,0)[lt]{\lineheight{1.25}\smash{\begin{tabular}[t]{l}$(3)$\end{tabular}}}}%
  \end{picture}%
\endgroup%

  \hfill\mbox{}
  \caption{Left: a $2\times 2$-fundamental domain for the square lattice. Kasteleyn
    minus signs are indicated on the edges, as well as the extra factors
    $z^{\pm 1}$ and $w^{\pm 1}$ for edges crossing the fundamental domain.
    Right: the Newton polygon of the corresponding characteristic polynomial.
  }
  \label{fig:fund_domain_z2_2x2}
\end{figure}

\begin{figure}
  \centering
  \def\svgwidth{6cm}
  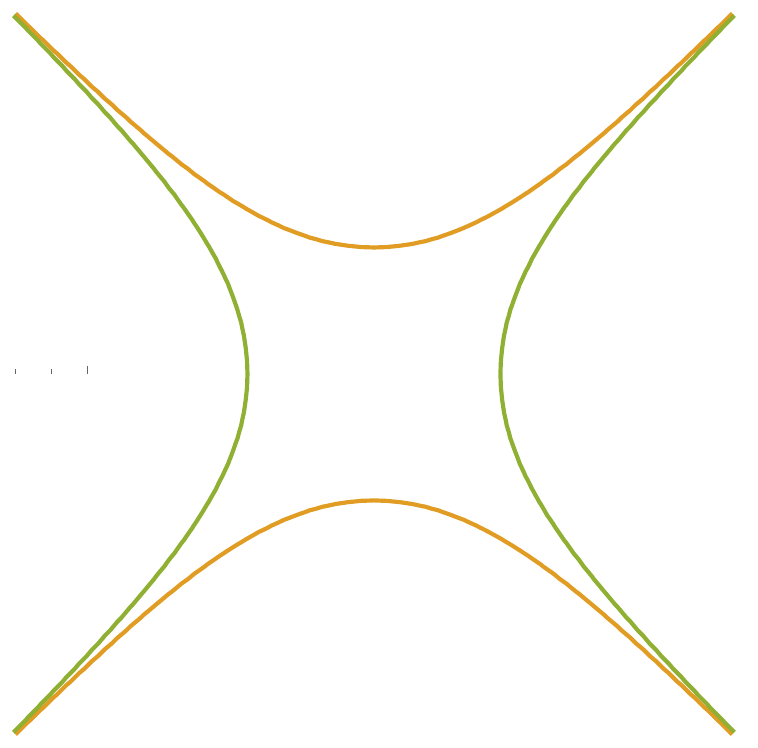
  \caption{Amoeba of the characteristic polynomial
  $-4+z+w+\frac{1}{z}+\frac{1}{w}$. The connected components of the complement
are labeled with numbers corresponding to frozen phases and vertices of the
Newton polygon, together with the corresponding type of edges.}
  \label{fig:amoeba_z2_2x2}
\end{figure}


\subsection{A case with a multiply-connected liquid region, Aztec diamond
with two-periodic weights}

\label{doubly_aztec}
We follow loosely the discussion here~\cite[Section 2.3]{KP:new}.

Before analysis of a multiply-connected region, let us discuss our example with
a multiply-connected liquid region, where the region is the simply-connected,
but the liquid region is non simply-connected due to presence of smooth or gas
phase, so-called bubble. This can be achieved for non-uniform distribution on
the set of dimer configurations, and the most common example of such phenomenon
is the \textit{doubly-periodic Aztec diamond}\cite{Sunil_doubly,KP:new}.

\begin{figure}
\centering
\hfill
\begin{subfigure}{0.3\linewidth}
    \includegraphics[width=\linewidth]{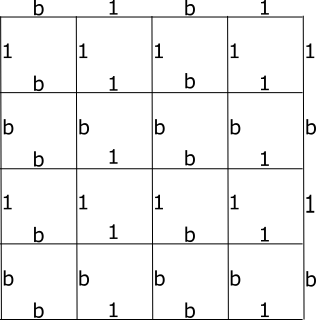}
\end{subfigure}
\hfill
\begin{subfigure}{0.3\linewidth}
    \includegraphics[width=\linewidth]{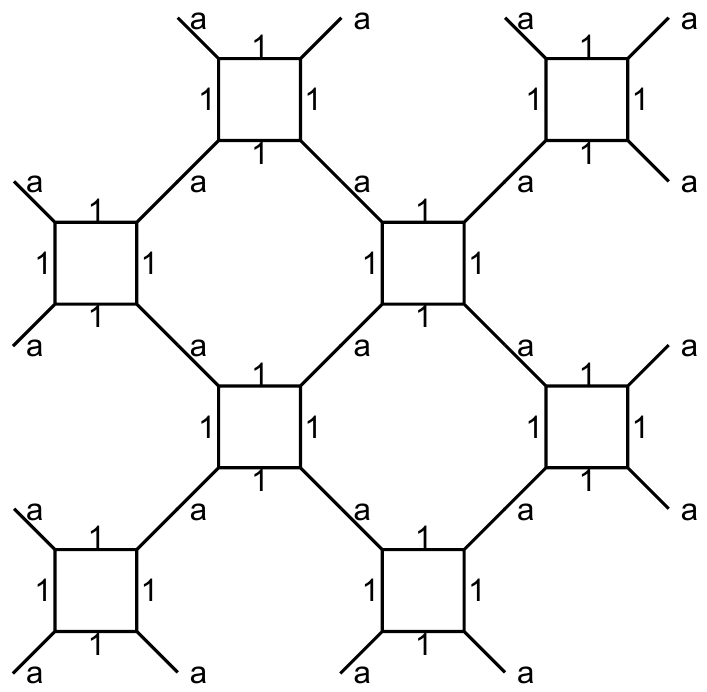}
\end{subfigure}
\hfill\mbox{}
    \caption{Edge weights of the square grid on the left, and the corresponding weights of the octagon lattice on the right, the correspondence between weights is $a^2=2b$.}
    \label{fig:octagon}
\end{figure}

This system of weights for the square lattice is related to the dimer model on
the square-octagon graph: there is a
weight-preserving correspondence between configurations on the two lattices
(obtained by performing the so-called \emph{spider move} or \emph{square move}),
which maps domino tilings of doubly-periodic square grid
to dimer configurations on the square-octagon lattice as long as the
weights $a^2=2b$ from \hyperref[fig:octagon]{Figure \ref{fig:octagon}}.

In that case, the modified Kasteleyn matrix from~\eqref{eq:kast_2x2_unif} becomes
\begin{equation}
  \label{eq:kast_2x2_2perio}
    \mathcal{K}(z,w)=
    \begin{pmatrix}
      b-w & z-b \\
      \frac{1}{z}-b & \frac{1}{w}-b
    \end{pmatrix}
\end{equation}
and the characteristic polynomial is $-2(1+b^2)+b(z+w+\frac{1}{z}+\frac{1}{w})$.
See Figure~\ref{fig:amoeba_2x2_perio} for its amoeba. The spectral curve is a
genus~1 algebraic curve $\mathcal{C}$ if $b\neq 1$, which is a Harnack curve. It can be uniformized
to a rectangular torus $\mathbb{C}/(2\mathbb{Z}+2\tau\mathbb{Z})$, with $\tau\in
i\mathbb{R}^*_+$: there is a birational map
\begin{equation*}
  \psi:\zeta\in\text{torus}\mapsto (z(\zeta), w(\zeta))\in \mathcal{C}.
\end{equation*}
See for example~\cite[Section~5]{BCdT:ellipt}
and~\cite[Proposition~9]{boutillier2024focks}.

The relation between $b$ and
$\tau$ (assuming that $b<1$, there is a symmetry $b\leftrightarrow \frac{1}{b}$)
is given by
\begin{equation*}
  b=\sqrt{k'},\qquad \text{with\ } k'=\frac{\theta_4(0|\tau)^2}{\theta_3(0|\tau)^2}.
\end{equation*}
In particular, for $\tau=1$, $k'=\frac{1}{\sqrt{2}}$, and $b=2^{-1/4}$.

\begin{figure}
  \centering
  \includegraphics[width=5cm]{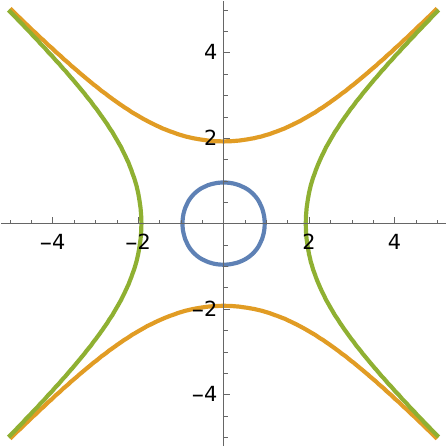}
  \caption{Amoeba of the 2-periodic weight function on the square lattice, for
    $b=0.5$. The hole in the middle corresponds to the \emph{gas} or
    \emph{smooth phase} of the corresponding dimer model, which describes the
    bubble appearing in the middle of the pictures of
    Figure~\ref{fig:simul_doubly}.
  }
  \label{fig:amoeba_2x2_perio}
\end{figure}

\begin{figure}
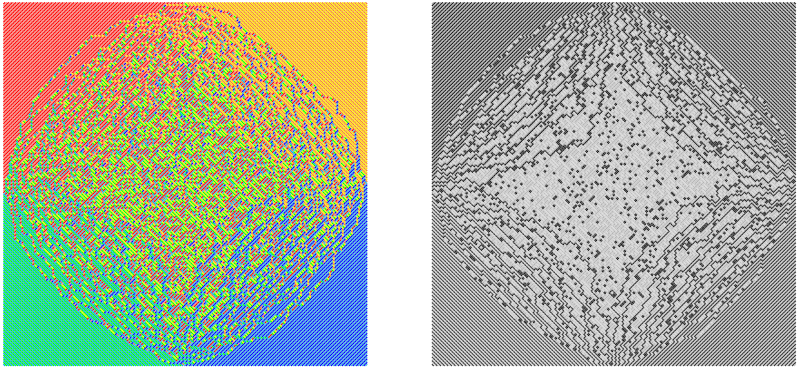

    \centering
    \hfill
    \includegraphics[width=0.4\linewidth]{Figures/ttpn200a05.pdf}
    \hfill
    \includegraphics[width=0.4\linewidth]{Figures/ttpn200a05bw.pdf}
    \hfill\mbox{}
    \caption{Random domino tiling of doubly-periodic Aztec diamond for $b=0.5$
    in two representations, the color representation on the left with eight
  colors to distinguish rough and smooth phases, and with eight different gray
colors on the right.}
    \label{fig:simul_doubly}
  \end{figure}

Based, for example, on computer simulations \hyperref[fig:simul_doubly]{Figure
\ref{fig:simul_doubly}} of domino tilings of the doubly-periodic Aztec diamond
on \hyperref[fig:simul_doubly]{Figure~\ref{fig:simul_doubly}}, we see the presence
of all three phases, frozen, rough and smooth, together with the arctic curve with
two connected components, the outer curve separating frozen and rough regions,
and the inner curve separating rough and smooth phase regions. This is again a
situation where there is a unique point in the liquid region for a given slope
in the Newton polygon.
The restriction of $\psi$ to the annulus~$\mathcal{A}=\mathbb{R}/2\mathbb{Z}\times(0,\Im\tau)$
parametrizes the upper half of the spectral curve $\mathcal{C}^+$, and thus
$\zeta\in\mathcal{A}$ gives a parametrization of the liquid region. The lower boundary of the annulus is mapped to the outer
boundary of the liquid region (the liquid/solid interface): up to a horizontal
translation, we may assume that the turning points correspond to
$\zeta=-\frac{1}{2},0,\frac{1}{2},1$ (mod 2).
The upper
boundary is mapped to the inner boundary of the liquid region (the liquid/gas
interface).

The boundary conditions along the outer boundary are the same as in the uniform
case:
\begin{center}
\begin{tabular}{c | c | c | c | c}
&  (1) & (2) & (3) & (4) \\
& $-\frac{1}{2}<\zeta<0$ & $0<\zeta<\frac{1}{2}$ & $\frac{1}{2}<\zeta <1$ &
$1<\zeta < \frac{3}{2}$\\ \hline
$s$ & $1$ & $0$ & $-1$ & $0$\\
$t$ & $0$ & $-1$ & $0$ & $1$\\
$c$ & $-\frac{1}{2}$ & $\frac{1}{2}$& $-\frac{1}{2}$ & $\frac{1}{2}$
\end{tabular}\end{center}


In order to perform harmonic extensions in a multiply-connected domain, we need
a particular toolbox of special functions. Basically, we are going to apply the
same strategy: first, write down a basic building block, and then construct
harmonic extensions in terms of these building blocks. Since the spectral curve
is a torus, these blocks will be constructed from elliptic functions that we
discuss now.

\subsubsection{Elliptic functions and their classification by Weierstrass functions}

\label{elliptic_appendix}
Let $\Lambda$ be a lattice generated by two vectors $\omega_1$ and $\omega_2$ called periods of $\Lambda$, $\Lambda:=\{n \omega_1+m\omega_2:n,m\in\ZZ \}$. Then, a $\Lambda$-elliptic function is a meromorphic function on $\mathbb{C}$ that satisfies $f(z)=f(z+\omega_1)=f(z+\omega_2)$.

The most famous example of elliptic functions is  Weierstrass's $\wp$-function, which is defined as
\bb
\wp (z, \Lambda):={\frac {1}{z^{2}}}+\sum_{\lambda \in \Lambda \setminus \{0\}}\left({\frac {1}{(z-\lambda )^{2}}}-{\frac {1}{\lambda ^{2}}}\right).
\ee 
The function $\wp$ clearly has poles of order two at each lattice point. It is an elliptic function, which follows from the definition.
For the next subsections, we need two other Weierstrass functions.

The Weierstrass $\zeta$-function is a function fixed by the equation
\bb
 \frac{d\zeta(z,\Lambda)}{dz}=-\wp(z,\Lambda). 
\ee
This defines it up to an additive constant that we fix in a way that $\lim_{z\to 0}\zeta(z,\Lambda)-\frac{1}{z}= 0$.
Using ellipticity of $\wp$ one can derive quasi-periodic properties of $\zeta$ Weierstrass function, $\zeta(z+\omega_i)=\zeta(z)+2\eta_i$, where $\eta_i=\zeta(\omega_i /2)$.

The Weierstrass $\sigma$-function is defined by
\bb
\frac{d \log (\sigma(z,\Lambda)}{dz}=\zeta(z,\Lambda).
\ee
Here, we fix an integration constant so that $\lim_{z\to 0}\sigma(z,\Lambda)-z=0$.
by definition of $\sigma$ and quasi-periodicity of $\zeta$ one derives quasi-periodicity of $\sigma$,
\bb
\sigma(z+\omega_i)=e^{-2\eta_i (z+\omega_i /2)}\sigma(z). \label{elliptic_sigma}
\ee

Function $\sigma$ is a suitable building block of elliptic functions, and every elliptic function can be expressed in terms of Weierstrass $\sigma$ function~\cite{AKH}.
\begin{theorem}
    Suppose $f$ is an elliptic function with periods $\omega_1, \omega_2$ and set of zeroes in a fundamental domain of $\Lambda$ at points $\{\varkappa_i\}$ of the corresponding orders $\{n_i\}$ and set of poles $\{\varkappa_i^{\prime}\}$ of the orders $\{n^{\prime}_i\}$ subject to two conditions:
    \begin{itemize}
      \item $\sum{n_i \varkappa_i}=\sum n_i^{\prime}\varkappa_i^{\prime}$,
        \item $\sum n_i=\sum n^{\prime}_i$.
        \label{elliptic_thm}
    \end{itemize}

And let us define a function $g$ as
\bb
g(z):=\frac{\prod_i \sigma (z-\varkappa_i,\Lambda)^{n_i}}{\prod_i \sigma (z_i-\varkappa^{\prime}_i, \Lambda)^{n_i^{\prime}}}. \label{expression_elliptic}
\ee
Then, the ratio $\frac{f}{g}$ is constant on $\mathbb{C}$.
\end{theorem}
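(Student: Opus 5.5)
The plan is to show that the ratio $h:=f/g$ is an elliptic function with no zeros and no poles, and then conclude by Liouville's theorem. (In the denominator of \eqref{expression_elliptic}, $z_i$ should read $z$.) Since $\sigma$ is entire with simple zeros exactly at the lattice points, $g$ is meromorphic on $\CC$. Its zeros are at $\varkappa_i+\Lambda$ with order $n_i$, and its poles are at $\varkappa_i'+\Lambda$ with order $n_i'$. This assumes, as we may, that the $\varkappa_i$ and $\varkappa_i'$ are distinct representatives in the fundamental domain, so no cancellation occurs. Hence $f/g$ has removable singularities everywhere and extends to an entire, nowhere vanishing function.

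The main step is to check that $g$ is itself $\Lambda$-periodic, and this is where both hypotheses are used. Using the quasi-periodicity \eqref{elliptic_sigma}, shifting $z\mapsto z+\omega_j$ multiplies the numerator by
\[
\prod_i e^{-2\eta_j n_i(z-\varkappa_i+\omega_j/2)}
\]
and the denominator by the analogous product with $n_i',\varkappa_i'$. The quotient of these factors is
\[
\exp\Bigl(-2\eta_j\bigl[(z+\omega_j/2)\bigl(\textstyle\sum n_i-\sum n_i'\bigr)-\bigl(\sum n_i\varkappa_i-\sum n_i'\varkappa_i'\bigr)\bigr]\Bigr).
\]
This equals $1$ by the two conditions $\sum n_i=\sum n_i'$ and $\sum n_i\varkappa_i=\sum n_i'\varkappa_i'$. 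I expect the main obstacle to be the convention issue here. The representatives $\varkappa_i$ must be chosen so that the equality of sums holds exactly, not merely modulo $\Lambda$, and the sign convention for $\eta_j$ in \eqref{elliptic_sigma} must be applied consistently. If the sums only agree modulo $\Lambda$, I would shift one zero $\varkappa_1$ by a lattice vector to make them agree exactly; this does not change the divisor of $g$ modulo $\Lambda$.

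Thus $h=f/g$ is elliptic, entire, and without zeros. Being continuous on the compact closure of a fundamental parallelogram and periodic, $h$ is bounded on $\CC$. By Liouville's theorem $h$ is constant, and the constant is nonzero because $h$ has no zeros.
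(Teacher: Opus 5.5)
Your proof is correct and follows the same route as the paper. Use the quasi-periodicity of $\sigma$ to show that $g$ is $\Lambda$-periodic, observe that $g$ has the same zeros and poles (with multiplicities) as $f$, and conclude that $f/g$ is a zero-free entire elliptic function, hence constant by Liouville. Your version is a bit more careful than the paper's, which displays only the factor $\exp\bigl(-2\eta_i(\sum n_i\varkappa_i-\sum n_i'\varkappa_i')\bigr)$ and leaves implicit that the $z$-dependent part of the exponent cancels because $\sum n_i=\sum n_i'$.
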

\begin{proof}
First, $g$ is elliptic by quasi-periodicity of $\sigma$ recalled in Equation~\eqref{elliptic_sigma},
and by~\eqref{expression_elliptic}. Simply the translation of the function by a period $\omega_i$ does not change its value by our assumptions, $g(z)\mapsto g(z+\omega_i)=g(z)\exp\left(-2\eta_i (\sum{n_i \varkappa_i}-\sum n_i^{\prime}\varkappa_i^{\prime} )\right)=g(z)$.
Moreover, by~\eqref{expression_elliptic} it has the same zeroes and poles with the same multiplicities as $f$.
Thus, the ratio $\frac{f(z)}{g(z)}$ has no zeroes or poles on $\mathbb{C}$, and thus 
is constant by Liouville's theorem.
\end{proof}

In our case, $\Lambda=2\mathbb{Z}+2\tau\mathbb{Z}$, so that $\omega_1=2$ and
$\omega_2=2\tau$.

Theorem \ref{elliptic_thm} allows us to explicitly write the harmonic extensions of piecewise constant boundary conditions, which we encounter in the tangent plane method. Recall the simplest Dirichlet boundary condition on $\mathbb{H}$, 1 on the negative real line, and 0 on $\RR_{\RR\geq 0}$. The harmonic extension is given by $\frac{1}{\pi}\arg z$.
Now, if we look at annulus $\mathbb{A}_{r1,r2}=\{z\in \CC | r1<|z|<r2\}$ with boundary condition of $0$ on the outer circle and $1$ on the arc connecting $\omega_0$ and $\omega_1$ on the inner circle.
The harmonic extension in this situation is $\frac{1}{\pi} \arg \frac{z-\omega_0}{z-\omega_1} \frac{\log \frac{|z|}{r_2}}{\frac{r_1}{r_2}}$.

Now, suppose we have interval $(a,b)$ on the upper side of side of the rectangle $[0,1]\times[0,|\tau|]$. Then, the harmonic extension of boundary condition $1$ on this interval and $0$ elsewhere is given by the following formula

\bb
h(u)=\frac{1}{\pi}\arg \frac{\sigma(u-b-\tau)}{\sigma(u-a-\tau)}+\frac{2\eta_1(1) (b-a)\Im (\tau-u) }{|\tau|}.
\label{eq:elliptic_harm_ext}
\ee

Take a point $u$ on the upper boundary and the complex conjugate $u^{\star}$. On one hand, $f(u^{\star})=f(u)$ as their difference is exactly the period of elliptic function. But on the other hand, since $f(z)$ takes real values on $\RR$, by the Schwarz reflection principle we have $f(\bar{z})=\overline{(f(z))}$. Thus, the values at points $u, u^{\star}$ are real since they do not change under complex conjugation($f(u)=f(u^\star)=\overline{f}(u)$). Therefore, the argument from \eqref{eq:elliptic_harm_ext} vanishes.

We also need to take derivatieves of such extensions. Let us compute the $\partial_u h(u)$ for which we take the branch of argument by $\arg u=\Im \log u$ for the logarithm with branch cut along the real line.
First, for $z=x+i y$ we have use the Wiertinger derivative to deferentiate with respect to $z$ (as $\Im z$ is not a holomorphic function).

\bb
\frac{\partial}{\partial z} \Im z= \frac{1}{2}\left(\frac{\partial}{\partial x} -i\frac{\partial}{\partial y} \right)y=\frac{1}{2 i}.
\ee
Then, by the chain rule we have
\begin{multline}
\partial_u h(u)=\partial_u\!\left(\Im\log\frac{\sigma(u-b-\tau)}{\sigma(u-a-\tau)}+2\eta_2(\tau)(b-a)\frac{\Im(\tau-u)}{|\tau|}\right)=\\ 
\frac{1}{2i}\left(\zeta(u-b-\tau)-\zeta(u-a-\tau)-2\eta_2(\tau)(b-a)\right).
\end{multline}

\subsubsection{Applying the tangent plane method}

The solution from~\cite{KP:new}, adapted to our normalization, is the following,

\begin{align*}
  s(\zeta)&=\frac{1}{\pi}\arg
  \frac{
    \sigma(\zeta)\sigma(\zeta-\frac{1}{2})
  }{
    \sigma(\zeta-1)\sigma(\zeta+\frac{1}{2})
  },\\
  t(\zeta)&=\frac{1}{\pi}\arg
  \frac{
    \sigma(\zeta+\frac{1}{2})\sigma(\zeta)
  }{
    \sigma(\zeta-\frac{1}{2})\sigma(\zeta+1),
  },\\
  c(\zeta)&=\frac{1}{2}+\frac{1}{\pi}\arg\frac{\sigma(\zeta+\frac{1}{2})\sigma(\zeta-\frac{1}{2})}{\sigma(\zeta-1)\sigma(\zeta)}-\frac{1}{4}\Im(\zeta)
\end{align*}

\begin{figure}[htpb]
  \centering
  \includegraphics[width=0.45\linewidth]{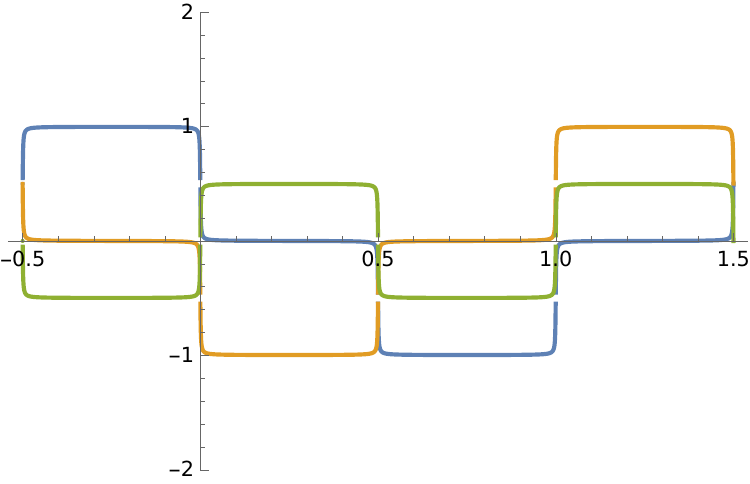}
  \caption{Plot of the functions $s$, $t$, and $c$ for $\zeta$ on (in fact very
    close to) the lower boundary of the cylinder $\mathcal{A}$, matching the
  table, $s$ in blue, $t$ in yellow and $c$ in green.}
  \label{fig:plot_s_t_c_2perio}
\end{figure}
The functions $s$ and $t$ are arguments of genuine elliptic functions in the variable $\zeta$
on the whole torus, by Theorem~\ref{elliptic_thm}. For the function $c$, the
zeros and the poles of the fraction in the argument are fixed by the wanted
behavior along the outer boundary, but this does not define a proper elliptic
function as it is not of the form given by Theorem~\ref{elliptic_thm}. We need
to add the multiple of the imaginary
part of $u$ to guarantee periodicity in the horizontal direction. The
coefficient $\frac{1}{4}$ is specific for the choice of $\tau=1$. For an
arbitrary value of $\tau$, it should be replaced by $\frac{\eta_1}{\pi}$ (when
$\omega_1=2$ and $\omega_2=2\tau$, then $\eta_1=\frac{\pi}{4}$).

Writing the master equation for the harmonically moving planes,
\begin{equation*}
  s_\zeta x + t_\zeta y +c_\zeta = 0,
\end{equation*}
solving for $x=x(\zeta)$ and $y=y(\zeta)$ for each $\zeta$ on the boundary of the
annulus $\mathcal{A}$ gives a parametrization of the arctic curve. This curve is
represented on Figure~\ref{fig:curve_doubly}: the yellow connected component
corresponds to $\Im(\zeta)=0$, whereas the blue component, delimiting the
boundary of the gas bubble, corresponds to $\Im(\zeta)=1$. We can determine the
equation of the tangent plane to the gaz bubble by evaluating $s$, $t$ and $c$
along the blue boundary, and obtain that with our convention, that plane is the
plane $z=0$.

\begin{figure}
    \centering
    \includegraphics[width=0.5\linewidth]{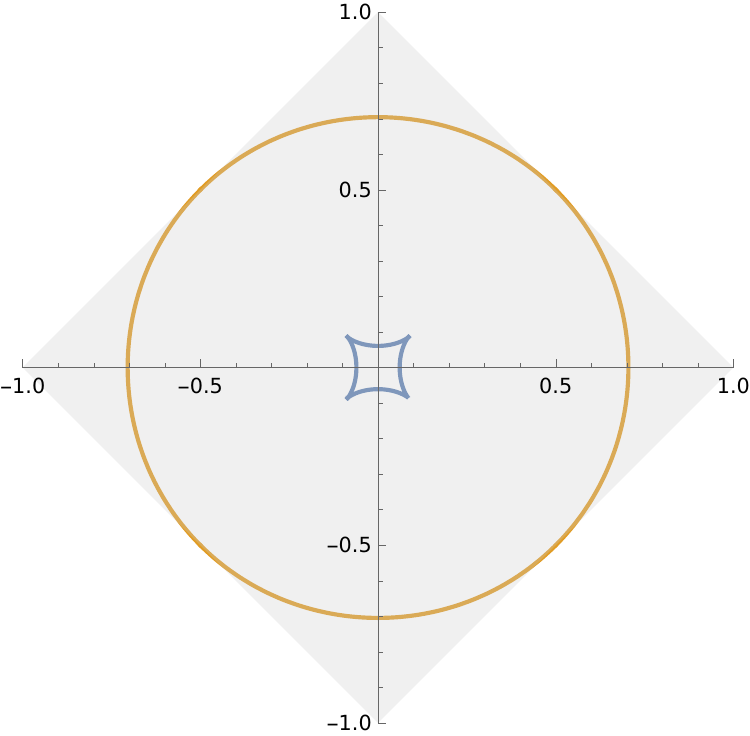}
    \caption{Frozen curve for the doubly-periodic Aztec diamond. The grey square
    is the limiting domain of a renormalized large Aztec diamond. The liquid
    phase, homeomorphic to an annulus, is bounded on the outside from the four
    frozen phases by the yellow component, and bounded in the inside from the
    gas bubble by the blue components. The frozen regions,
    labeled (1), (2), (3), (4) in the table, appear in the
    right, top, left, bottom corner respectively. The equation of the corresponding
    plane can be read from the matching column in the table. The one for the gas
  bubble is $z=0$.}
  \label{fig:curve_doubly}
\end{figure}

In the next section, we will reuse these expressions in a more complicated situation.

\subsection{Formulation of problem for Aztec diamond with a hole}
\label{Our_case}
Recall that Aztec diamond of order $N$, $AD_N$ is the set of unit squares of $\ZZ^2$ whose centers $(x,y)$ satisfy $|x| +|y|\leq N$. And fix $0<\varkappa<1$, then an \textit{Aztec diamond with a hole} of order $N$ is the standard Aztec diamond $AD_N$ with a hole consiting of a smaller centered Aztec diamond of order $\varkappa N$. Another point of view is that we consider a conditional probability by fixing the configuration of the smaller Aztec diamond.

Here, we perform computations for the Aztec diamond with a hole. Furthermore, instead of letter $\zeta$, we denote the intrinsic coordinate by $u$ to distinguish it from $\zeta$-Weierstrass function.
\begin{figure}[h!]
    \centering
    \includegraphics[width=0.4\linewidth]{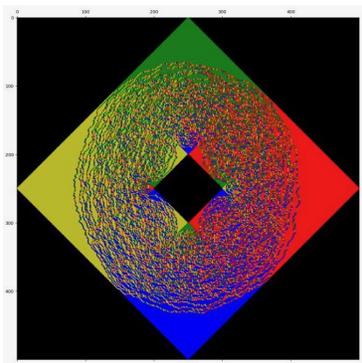}
    \caption{Computer simulations of random domino tiling of Aztec diamond with a hole for $\varkappa=1/4$ and order of the external diamond $N=500.$}
    \label{fig:aztec5}
\end{figure}
First, using computer simulation \hyperref[fig:aztec5]{Figure \ref{fig:aztec5}} we can assume that the number of frozen phases equals $16$, $4$ phases around the external boundary, and $12$ around the internal boundary.
We know that the frozen phases are images of corners of $\mathcal{N}$, that are also labeled by $4$ types of dominoes. Therefore, the degree of $\nabla \mathfrak{h}$ is $4$, so is the degree of $z$. We also see from the computer simulation that the topology of the liquid region is the topology of an annulus. Therefore, we stay in the same frame as in the doubly-periodic Aztec diamond. What changes is the boundary conditions on the upper boundary; now it consists of $12$ non-trivial intervals $(a_i,a_{i+1})$, subject to $a_{i+3}=a_{i}+1/2$, which is a consistency condition to match the upper boundary with the lower one (from the simulation, we see that the frozen phases on both components change simultaneously at four points, so it is an expected condition for the uniform distribution of domino tilings). We still have the freedom to place three points on the first interval out of four. We take them by symmetry $a_1=\tau-1/2-a, a_2=\tau-1/2,a_3=\tau-1/2+a$ for a small parameter $a$ yet to be determined. The lenght of the first and the last sub-interval equal $a$. Therefore, this parameter is resposible for the location of one of the cusps between two frozen regions close to the inner boundary.

The lower boundary has the same boundary conditions as $\mathcal{AD}$, and it consists of $4$ intervals $(a_i,a_{i+1})$, $13 \leq i \leq 16$. Thus, for the lower boundary conditions, we can use the expression for the ring from the section above (it will not change the behavior on the line $\Im u=\Im \tau$ since those extensions are zero there).
\begin{center}
\begin{tabular}{c|c|c|c|c|c|c}
   u  & $(a_1,a_2)$& $(a_2,a_3)$& $(a_3,a_4)$&$(a_4,a_5)$&$(a_5,a_6)$&$(a_6,a_7)$ \\
   s  & 0 & -1& 0&1&0& -1 \\
   t  & 1 &0 & -1&0& 1& 0   \\
   c & +$\varkappa/2$&-$ \varkappa/2$& -$3\varkappa/2$& $-\varkappa/2$& $\varkappa/2$ & $3\varkappa/2$
\end{tabular}
\label{table_aztec}
\end{center}
$\varkappa$ parametrize the size of the hole, $0<\varkappa<1$. The boundary data for $c(u)$ is obtained by analogy with the Aztec diamond \hyperref[table_aztec]{(\ref{table_aztec})}, $c(u)$ changes by $1$ each time we move from one frozen region to the other. Since we removed the Aztec diamond of scale $\varkappa$, we have the same boundary conditions for $c$ up to a global scaling by $\varkappa$.

Functions $s,t$ and $c$ are periodic on the universal cover of $\mathcal{A}$, and can be found as arguments of ratios of $\sigma$ Weierstrass functions. 
The basic block of our harmonic extensions is a periodic step function defined on the infinite strip $\mathcal{S}=\RR\times[0,\tau]$, which is the universal cover of the ring $\mathcal{A}$,
\bb
\frac{1}{\pi}  \arg \left(\frac{\sigma(u-b)}{\sigma(u-a)}\right)
\ee
The plot of this function is as a step function with height $1$ from $a$ to $b$ defined on the real axis,
\begin{figure}[h!]
    \centering
    \includegraphics[width=0.3\linewidth]{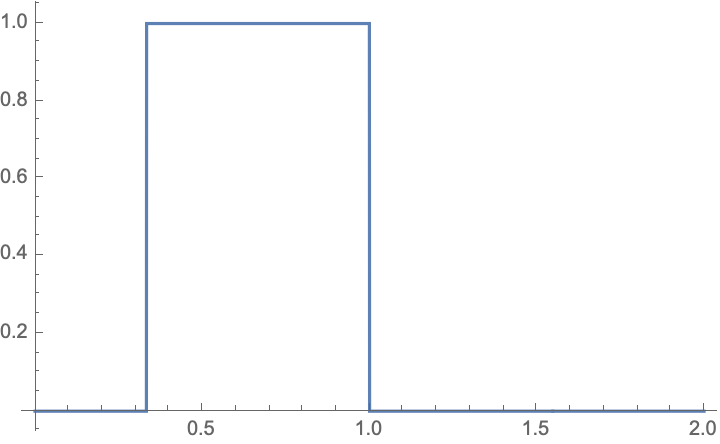}
    \caption{Example of plot of function $S$ for $a=\frac{1}{3}$ and $b=1$.}
\end{figure}


The symmetry of Aztec diamond implies the harmonic extension of the following form,

\begin{multline}
     s(u) = \frac{1}{\pi} \arg  \frac{\sigma(u)\sigma(u-1/2) \sigma(u - (\tau - 1/2)) \sigma(u - (\tau - 1/2 + a)) \sigma(u - (\tau + a)) }{\sigma(u - 1) \sigma(u + 1/2) \sigma(
       u - (\tau - 1/2 - a)) \sigma(u - (\tau - a)) \sigma(
       u - \tau) }+
\\
       +\frac{1}{\pi}\arg\frac{\sigma(u - (\tau + 1 - a)) \sigma(
       u - (\tau + 1))\sigma(u - (\tau + 1/2 - a))}{\sigma(
       u - (\tau + 1/2 + a)) \sigma(u - (\tau + 1 + a))\sigma(u - (\tau + 1/2))}.
       \label{extension_s}
\end{multline}

The function $t(u)$ can be obtained by symmetry from $s(u)$ as $t(u)=s(u+1/2)$, which is, again, expected for the uniform measure. These two functions are arguments of meromorphic functions as the product of $\sigma$ function under the $\arg$ 
satisfy conditions of Theorem \ref{elliptic_thm}. For function $c$, however, the situation slightly different,

\begin{multline}
    c(u)= \frac{1}{\pi}\arg \frac{\sigma(u + 1/2) \sigma(u - 1/2)}{\sigma(u - 1) \sigma(u)} + 
 \frac{\varkappa}{\pi} ( \arg \frac{ \sigma(u - (\tau - a))}{\sigma(u - (\tau - 1/2 - a))} + 
    \arg\frac{\sigma(u - \tau)}{\sigma(u - (\tau - 1/2))} + 
        \\
    \arg\frac{\sigma(u - (\tau + a))}{\sigma(u - (\tau - 1/2 + a))}
    +\arg\frac{\sigma(u - (\tau + 1 - a))}{\sigma(u - (\tau + 1/2 - a))} + 
    \arg\frac{\sigma(u - (\tau + 1))}{\sigma(u - (\tau + 1/2))})
    \\
    +\frac{1}{\pi}\arg \frac{\sigma(u - (\tau + 1 + a))}{\sigma(u - (\tau + 1/2 + a))}+K(a,\varkappa,\tau)-\delta \Im u.
    \label{form_cu}
\end{multline}
Here, the product of $\sigma$ functions does not satisfy Theorem \ref{elliptic_thm} like in the doubly-periodic Aztec diamond. Therefore, the resulting function changes after a shift by a period; we subtract this change with the help of $\zeta$-Weierstrass functions; we call this term $K(a,\varkappa,\tau)$. We have $\frac{1}{2}\Im u$ from the doubly-periodic Aztec diamond term. By the same reason, the function $c$ is not yet elliptic per se. Therefore, we need to subtract a linear factor of $\Im u$. We have $K(a,\varkappa,\tau)=\frac{1}{2}+\frac{\eta_1}{\pi} (3\varkappa+\frac{1}{2})$. We also have a term $-\delta \Im u$, which is responsible for the height change $r$. Parameters $\delta$ and $r$ are not equal to each other, that can be seen on \ref{fig:parameters}. This term, however, violates the ellipticity of the function $c(u)$. That means that the function changes by an additive constant after addition of multiples of $\tau$, yet in the real direction, it is periodic.

\subsubsection{Plot of the height function}

In order to produce a parametric plot of the surface given by $\mathfrak{h}^{\star}$, we need to check that this surface is well-defined. In practice, it means that we need to look at $z$, which is related to conformal coordinate $u$ by a rational transformation of degree $4$(this is because, based on computer simulations, each frozen phase repeats $4$ types going around the boundary).

At the critical points of $z(u)$, derivative $z_u$ vanishes, therefore, the whole expression 
$ s_z z_u  + t_z z_u  + c_u=0$ vanish as well. Thus, we must have $c_u=0$ at those points.
Combining \eqref{extension_s} with \eqref{pre-beltrami} we see that $z(u)$ is given by the expression in the argument of $\arg$ in $s(u)$,

\begin{multline}
z(u) = 
\left( \frac{
\sigma(u + \tfrac{1}{2}) \, \sigma(u)
}{
\sigma(u - \tfrac{1}{2}) \, \sigma(u + 1)
} \right)
\cdot
\frac{
\sigma(u - \tau + 1) \, \sigma(u - \tau - a + 1)
}{
\sigma(u - \tau + a + 1) \, \sigma(u - \tau + a + \tfrac{3}{2})
} \\
\cdot
\frac{
\sigma(u - \tau - a + \tfrac{1}{2}) \,
\sigma(u - \tau + a) \,
\sigma(u - \tau + a + \tfrac{1}{2}) \,
\sigma(u - \tau + \tfrac{3}{2})
}{
\sigma(u - \tau + \tfrac{1}{2}) \,
\sigma(u - \tau + 1) \,
\sigma(u - \tau - a + 1) \,
\sigma(u - \tau - a + \tfrac{3}{2})
}
\end{multline}


Also, there is a similar expression for $w(u)$, which we omit. However, its plot shows that there are two critical point inside the liquid region in the fundamental domain.

\begin{figure}
    \centering
    \includegraphics[width=0.5\linewidth]{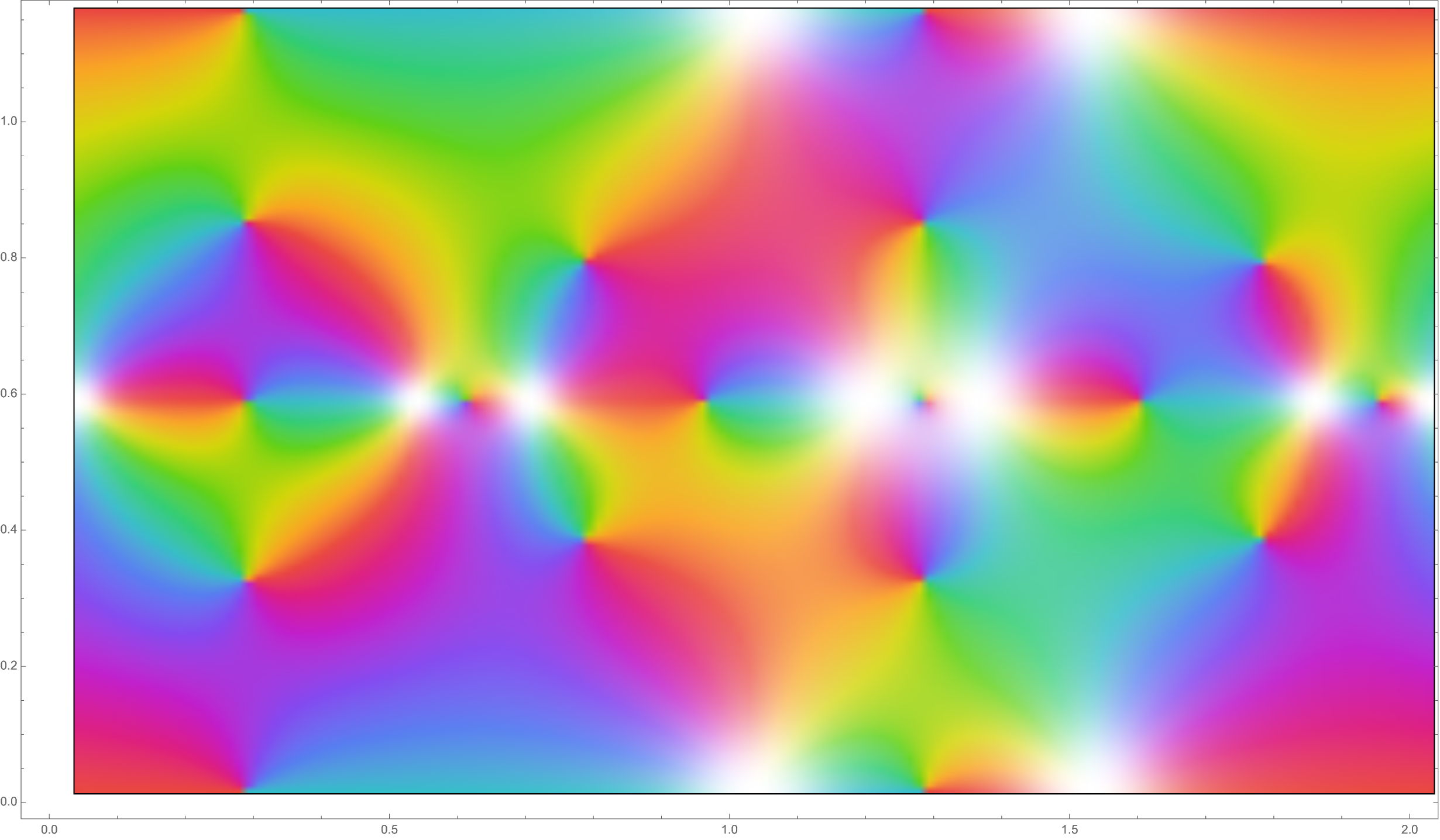}
    \caption{Plot of the derivative of $w(u)$ on the whole torus with two critical points inside the liquid region repeated $4$ times}
    \label{fig:crit_w}
\end{figure} 

On the figure \ref{fig:crit_w} we see analytic properties of $w(u)$ on the whole torus. There are $8$ zeros and $8$ poles on the boundary of $\pp \mathcal{L}$, which are real zeros that we see directly from the expression. Moreover, there are $2$ complex zeroes repeated $4$ times in the interior of $\mathcal{L}$. They are linked by two involutions, the complex conjugation, and symmetry $w(u+1)=w(u^{-1})$.

\subsection{Determine the constants}
\label{scheme}
So far we have constants $a,\varkappa,\tau$ from harmonic extensions, yet there's an extra parameter encoding the height change $r$. That is let $\delta\geq 0$. We substract from $c(u) $ term $\delta\Im(u)$ that does not change the value on the lower boundary, but does on the upper one. We do not know the actual height change, and therefore, a priori can not assume any particular value of $\delta$.

For a given value of $\tau$ and $\delta$, we find $a^{\tau}$ such that the two critical points of $z(u)$ have the same imaginary part (the real parts are $0.25$ and $0.75$). A good way of doing it, it using twice the \textit{FindRoot} function in Mathematica. Then, we use $a^{\tau}$ to find $\varkappa^{\delta,\tau}$ such that the two critical points of $cu(u,a^{\tau},\delta,\varkappa^{\delta,\tau})$ have the same imaginary parts. Then, we end up with critical points of $cu(u,a^{\tau},\delta,\varkappa^{\delta,\tau})$ and $z$ lying on two horizontal lines, and we tune $\tau$ numerically using the bisection method to ensure that the two lines coincide at $\tau_{\delta}$ with parameters $a^{\delta,\tau_{\delta}}, \varkappa^{\delta,\tau_\delta}$ (we recompute each parameter on every step of the scheme). 
We also checked that for $a^{\delta,\tau_{\delta}}$ and $\varkappa^{\delta,\tau_{\delta}}$ we have unique roots, see the plot of the difference between the imaginary parts of the roots of $c_u$ as a function of $\varkappa$ on fig \ref{fig:unique_plot}.
\begin{figure}
    \centering
    \includegraphics[width=0.5\linewidth]{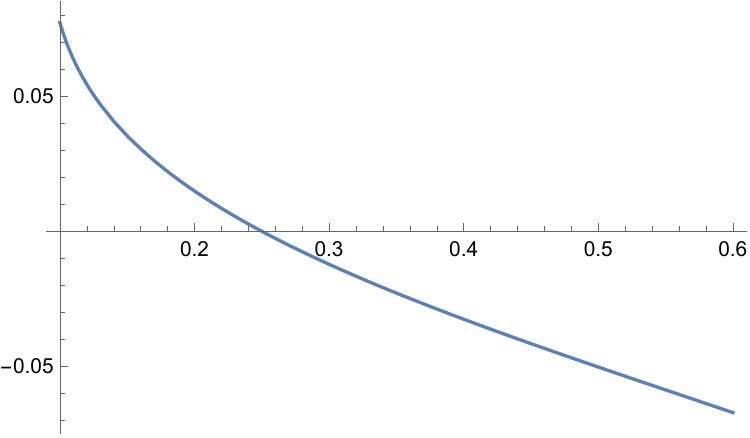}
    \caption{the plot of the difference between the imaginary parts of the roots of $c_u$ as a function of $\varkappa$. Clearly, there is a unique root.}
    \label{fig:unique_plot}
\end{figure}

In principle, we have one parameter $\delta$, which, however, is not a "physical parameter" of the model. Such a parameter is the size of the inner Aztec diamond $\varkappa$, and one could numerically parametrize everything by it. 

Moreover, it turns out that the imaginary part of $cu(u^{\star})$ at the critical points $u^{\star}$ vanishes, which allows us to express $\varkappa$.
Recall the structure of function $cu(u,\varkappa)$ from \eqref{form_cu}
\bb
cu(u^{\star})=cu_1(u^{\star})+\varkappa cu_2(u^{\star})
\ee
Therefore, chosing
\bb
\varkappa(\tau)=-\frac{cu_1(u^{\star})}{cu_2(u^{\star})} \label{elliptic_size}
\ee
we get the desired coalision of the critical points.

See plot with numeric dependence of parameters on $\varkappa$ fig \ref{fig:parameters}.

Now, we are able to produce plots of the height function for different values of $\delta$. By \cite{KO:burgers}, the height change can be identified with a coordinate on the moduli spaces of nodal curves. And by our Theorem 2 from \cite{Kuchumov:dominoes}, if we do not put a condition on the height change, there is a typical value $r(\delta^\star)$ for the uniform distribution on domino tilings.
\begin{figure}[H]
    \centering
    \begin{subfigure}[t]{0.45\textwidth}
        \centering
        \includegraphics[width=\linewidth]{Figures/Curve_delta=0.pdf}
        \caption{Frozen curve for $\delta=0$, parameters $(a,\varkappa,\tau,\delta)$ on the top}
    \end{subfigure}
    \hfill
    \begin{subfigure}[t]{0.45\textwidth}
        \centering
        \includegraphics[width=\linewidth]{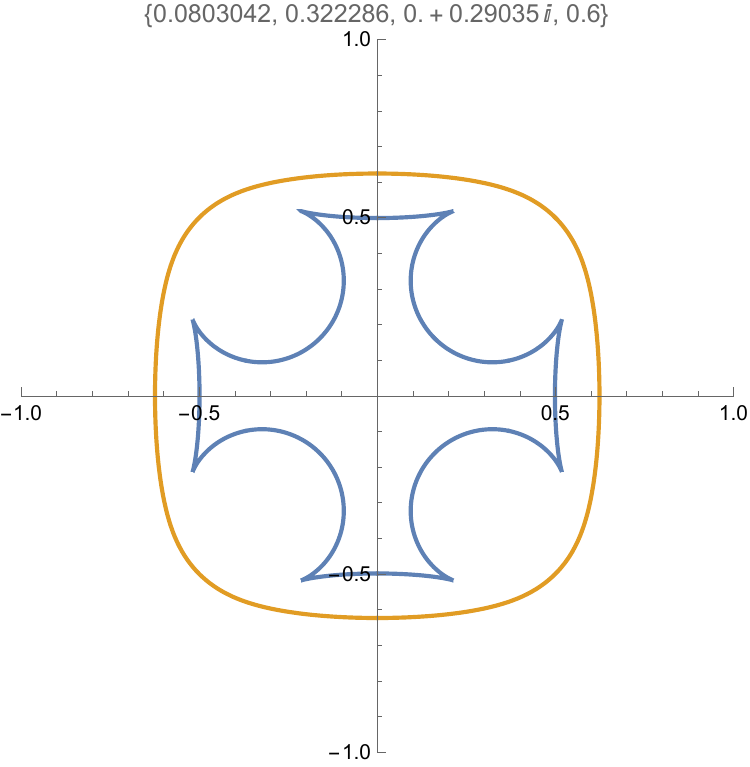}
        \caption{Frozen curve for $\delta=0.60$}
    \end{subfigure}

    \vspace{0.5cm}

    \begin{subfigure}[t]{0.45\textwidth}
        \centering
        \includegraphics[width=\linewidth]{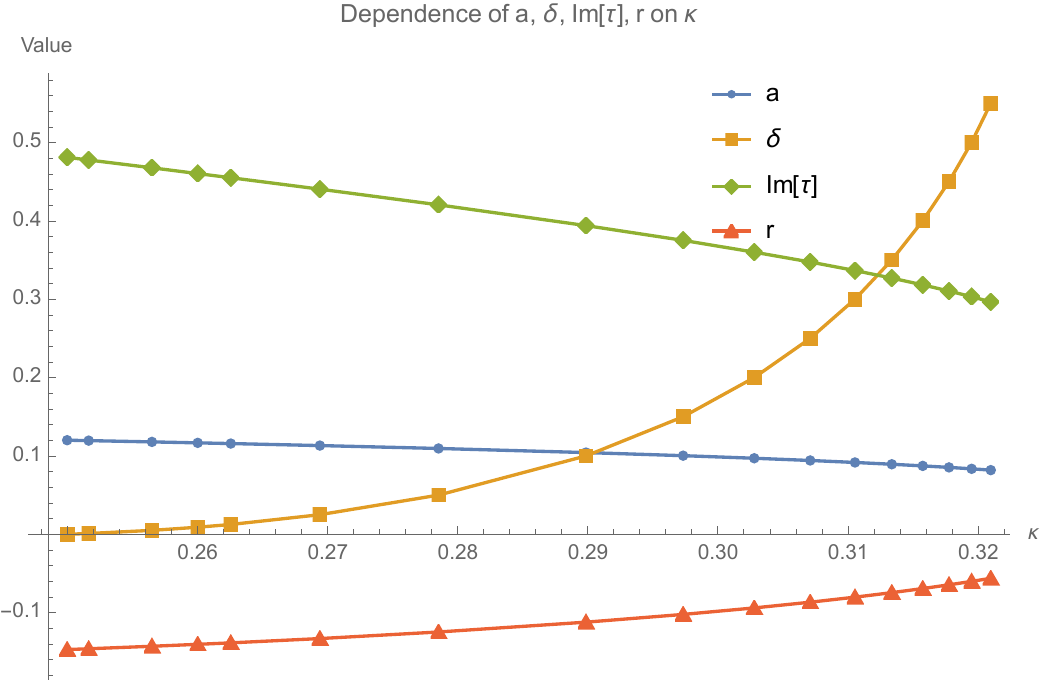}
        \caption{Dependence of parameters on the size $\kappa=\varkappa$ of inner Aztec diamond. Number $r$ is the height change computed explicitply between points $u=0.125+0.001\tau$ and $u=0.125+0.999\tau$}
        \label{fig:parameters}
    \end{subfigure}
    \hfill
    \begin{subfigure}[t]{0.45\textwidth}
        \centering
        \includegraphics[width=\linewidth]{Figures/delta=0.15_3d.pdf}
        \caption{Plot of the limiting height function for $\delta=0.15$}
    \end{subfigure}
\end{figure}

\subsubsection{Arctic curves for the generic parameters}

One can also produce plots of the frozen curve even without fitting the rest set of parameters $(a,\varkappa,\tau,\delta)$. Those computations do not correspond to an actual limit shape as the condition on critical points is not satisfied.
For intermediate values of $\delta$ we obtain a deformed picture as on \hyperref[fig:table_plot]{Figure \ref{fig:table_plot}}.

\begin{figure}
    \centering
    \includegraphics{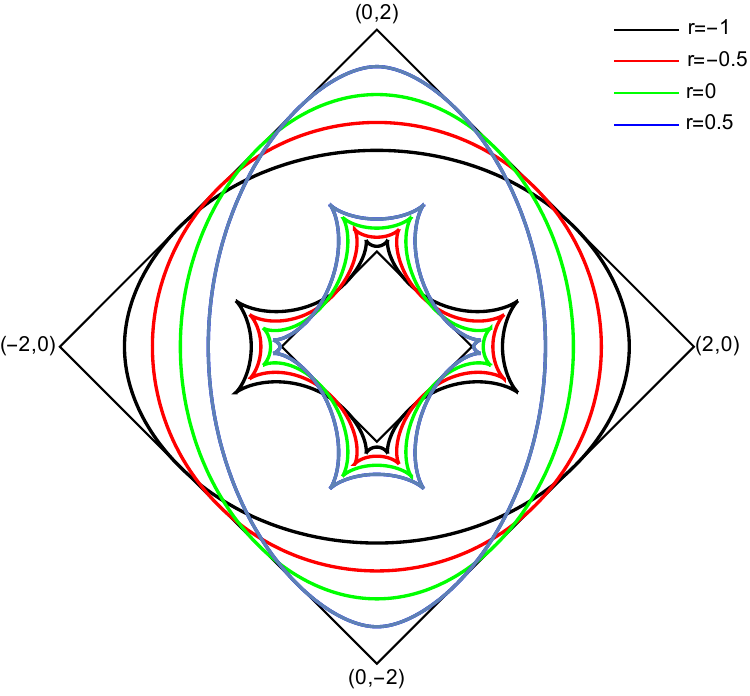}
    \caption{Limit shape for the Aztec diamond with a hole for different values of parameter $\delta$ and $\varkappa=0.3$.}
    \label{fig:table_plot}
\end{figure}

We can also perform our computations for various sizes of the hole $\varkappa$.

\begin{figure}
    \centering
    \begin{subfigure}{0.4\linewidth}
    \includegraphics[width=\linewidth]{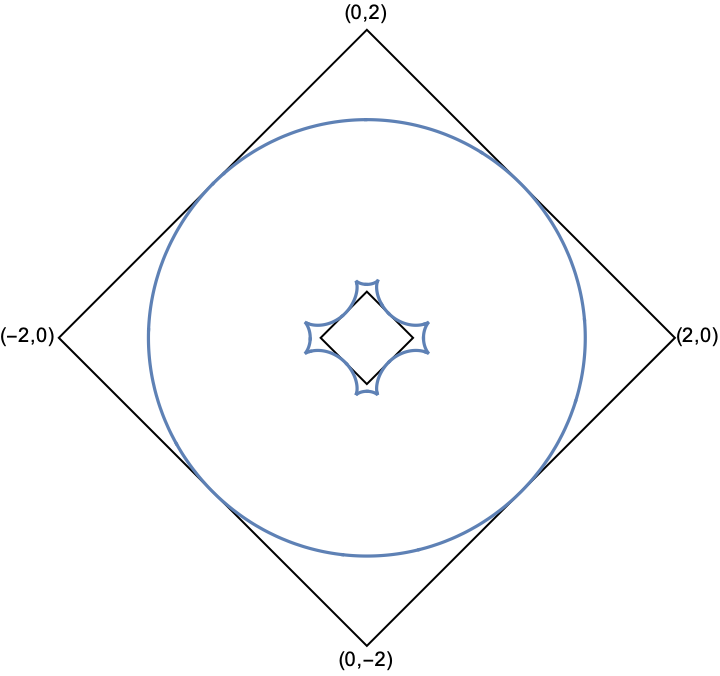}
    \end{subfigure}
~
\begin{subfigure}{0.4\linewidth}
        \includegraphics[width=\linewidth]{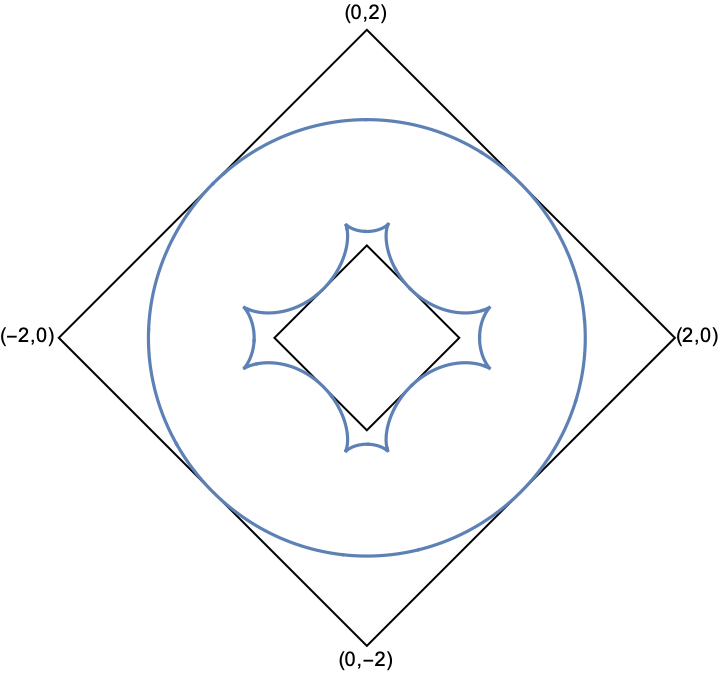}
\end{subfigure}
    \caption{Plot of the frozen curve with $\varkappa=0.15$ on the left, and for $\varkappa=0.6$ on the right.}
    \label{fig:kappa_1_5}
\end{figure}

\begin{figure}
    \centering
    \begin{subfigure}{0.4\linewidth}
    \includegraphics[width=\linewidth]{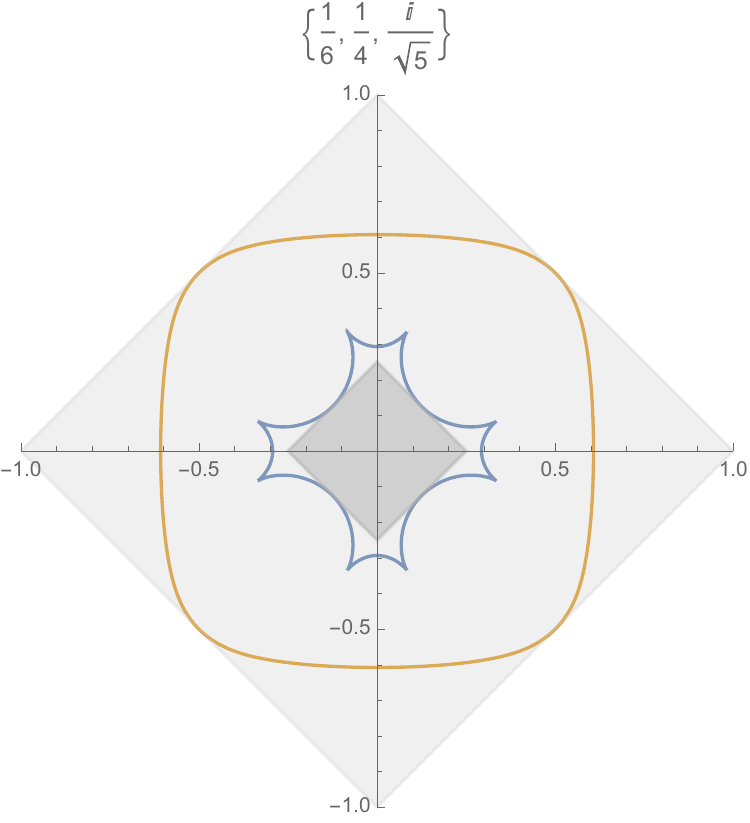}
    \end{subfigure}
~
\begin{subfigure}{0.4\linewidth}
        \includegraphics[width=\linewidth]{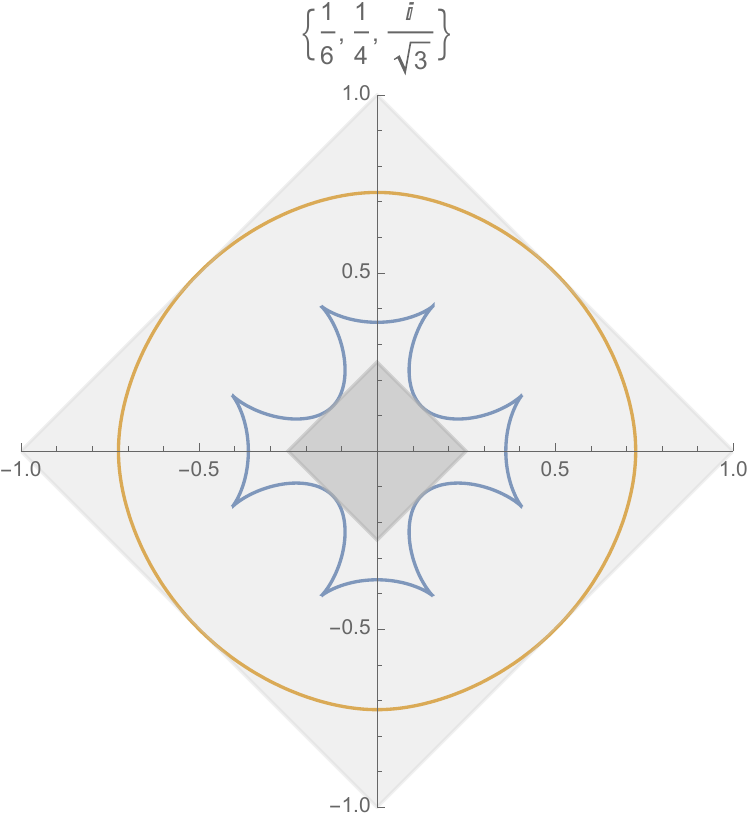}
\end{subfigure}
    \caption{Plot of the frozen curve with $\tau=\frac{\sqrt{-1}}{\sqrt{5}}$ on the left, and for $\tau=\frac{\sqrt{-1}}{\sqrt{3}}$ on the right.}
    \label{fig:tau_deform}
\end{figure}




\appendix
\section{Conformal coordinates}
\label{appendix:beltrami}
Here we discuss conformal coordinates and the Beltrami equation on the Newton polygon $\mathcal{N}$ with a surface tension $\sigma$ defined on it.

Since $\sigma$ is convex, the Hessian  of $\sigma$ $H_\sigma$ defines a non-degenerate metric $g$ on $N\setminus\mathscr{G}$,

\bb 
g=\sigma_{ss} ds^2+2\sigma_{st} dsdt +\sigma_{tt} dt^2.
   \label{metric_sigma}
\ee

Also note that, by Theorem 5.5 in \cite{KOS}, we have \(\det H_\sigma = \pi^2\).
 
In the conformal coordinate $u:=U(s,t)+i V(s,t)$ the metric $g$ writes as
\bb
g=\rho(U,V)^2(dU^2+dV^2)
\label{conformal}
\ee
for a function $\rho(U,V)$.
conformal coordinates exist on two-dimensional manifolds for metric with Hölder coefficients by result from \cite{korn,Lichtenstein}. For example, Theorem 18 i \cite{Spivak} is the following, 
\begin{theorem}
    Let $\Sigma$ be a $C^{\infty}$-smooth two-dimensional surface embedded into $\RR^2$ with metric $g=\langle,\rangle$, and let $p\in\Sigma$ be a point, whose components with respect to the standard coordinate system are real analytic. Then, there exists a real analytic conformal coordinate system in a neighborhood of $p$.
    \label{Thm:Belt}
\end{theorem}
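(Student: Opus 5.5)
The plan is to reduce the statement to the classical existence of isothermal coordinates for a real analytic Riemannian metric in dimension two. Writing $g=E\,ds^2+2F\,ds\,dt+G\,dt^2$ with $E=\sigma_{ss}$, $F=\sigma_{st}$, $G=\sigma_{tt}$, positive definiteness gives $EG-F^2>0$. The metric then factors over $\CC$ as
\[
g=\frac{1}{E}\bigl(E\,ds+(F+i\sqrt{EG-F^2})\,dt\bigr)\bigl(E\,ds+(F-i\sqrt{EG-F^2})\,dt\bigr).
\]
It therefore suffices to find a complex-valued function $u=U+iV$ near $p$ and a nonvanishing integrating factor $\lambda$ with
\[
\lambda\bigl(E\,ds+(F+i\sqrt{EG-F^2})\,dt\bigr)=du,
\]
and with $du\neq 0$ at $p$. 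Then $g=\frac{1}{E|\lambda|^2}|du|^2=\rho^2(dU^2+dV^2)$, which is exactly the form \eqref{conformal}. The nonvanishing of $du\wedge d\bar u$ also gives that $(U,V)$ is a local coordinate system.

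Equivalently, $u$ must satisfy the Beltrami equation $u_{\bar\xi}=\mu\,u_\xi$, where $\xi=s+it$ and
\[
\mu=\frac{E-G+2iF}{E+G+2\sqrt{EG-F^2}},
\]
with $|\mu|<1$ following from positivity of $g$. Writing out the real and imaginary parts gives a first-order elliptic system for $(U,V)$, which is the real Beltrami system already seen in the paper (compare the equations $\pi s_V=\sigma_{st}s_U+\sigma_{tt}t_U$ and so on).

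In the real analytic case, which is the one stated in the theorem, I would use complexification, which avoids any PDE estimates. Real analyticity lets me extend $E,F,G$ holomorphically to a neighbourhood of $p$ in $\CC^2$, with $(s,t)$ replaced by complex variables. The one-form $\omega=E\,ds+(F+i\sqrt{EG-F^2})\,dt$ then becomes a holomorphic one-form in two complex variables. Finding $\lambda$ and $u$ amounts to solving the holomorphic ODE $dt/ds=-E/(F+i\sqrt{EG-F^2})$. By the holomorphic Cauchy existence theorem (Cauchy–Kovalevskaya for ODEs), this ODE has a holomorphic first integral $u(s,t)$, normalized by $u(s_0,t)=t$ on a transversal line through $p$. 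Restricting $u$ back to real $(s,t)$ gives a real analytic complex function whose differential is proportional to $\omega$, with $u_t\neq 0$ at $p$. Finally I would check that $du\wedge d\bar u\neq 0$ at $p$: it is a nonzero multiple of $\omega\wedge\bar\omega=-2i\sqrt{EG-F^2}\,ds\wedge dt\neq0$.

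The step I expect to be the main obstacle is the complexification. One must choose the branch of $\sqrt{EG-F^2}$ holomorphically, which is possible near $p$ since $EG-F^2>0$ there. One must also verify that the first integral restricted to the real slice still satisfies the proportionality $du=\lambda\omega$ rather than $\lambda\bar\omega$; this is a matter of checking that the real-slice restriction commutes with the holomorphic factorization. If one only has Hölder coefficients, as in the Korn–Lichtenstein setting that \cite{korn,Lichtenstein} refers to, I would instead solve the Beltrami equation via the Cauchy transform and a Neumann series in the Beurling operator. That approach needs $\|\mu\|_\infty<1$ on a small disc. It is not required for the analytic statement, however.

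In our application, real analyticity of $\sigma$ in the interior of $\mathcal{N}$ away from $\mathscr{G}$ follows from its explicit expression via the Ronkin function. So the theorem applies, and the coordinate $z$ constructed in Section~\ref{sect:2} is a concrete instance of such a $u$.
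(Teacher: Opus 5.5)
Your argument is correct, but it takes a different route from the paper.

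\textbf{The paper's route.} The proof stays at the level of the PDE. It writes the conditions $\langle\partial_U,\partial_V\rangle_g=0$ and $\langle\partial_U,\partial_U\rangle_g=\langle\partial_V,\partial_V\rangle_g$ in $(s,t)$-coordinates via the inverse Jacobian. From these it extracts the real Beltrami system for $(U,V)$, with a factor $\rho$ fixed by $\det H_\sigma$. It then rewrites this as $w_{\bar z}=\mu w_z$ with $|\mu|<1$ coming from strict convexity. Finally it cites the general existence theory for Beltrami equations (Astala--Iwaniec--Martin). Your fallback for H\"older coefficients, via the Cauchy transform and a Neumann series in the Beurling operator, is essentially this route.

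\textbf{Your route.} You give Gauss's classical complexification argument, which is the proof in the Spivak reference the paper cites for this statement. You factor $g=\frac1E\omega\bar\omega$ and extend $E,F,G$ holomorphically. Holomorphic ODE theory then gives a holomorphic first integral of $\omega$, and you restrict it to the real slice.

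\textbf{Comparison.}
\begin{itemize}
\item Your route uses real analyticity essentially, but needs only the Cauchy existence theorem. It delivers exactly what the statement claims: a real-analytic coordinate with nonvanishing Jacobian.
\item The paper's appeal to Beltrami theory covers much rougher coefficients. As written, however, it only produces a quasiconformal solution. Upgrading to a real-analytic local diffeomorphism would still need an elliptic-regularity step, which the paper does not supply.
\end{itemize}

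\textbf{Two small remarks.}
\begin{itemize}
\item $\omega\wedge\bar\omega=-2iE\sqrt{EG-F^2}\,ds\wedge dt$. You dropped the factor $E$, which is harmless since $E>0$.
\item Your worry about obtaining $\lambda\bar\omega$ instead of $\lambda\omega$ is unfounded. The principal branch of $\sqrt{EG-F^2}$ restricts to the positive root on the real slice, so the holomorphic extension of $\omega$ restricts to $\omega$ itself. Also, the real partial derivatives of $u|_{\RR^2}$ coincide there with the holomorphic ones, so $du=\lambda\omega$ survives the restriction unchanged.
\end{itemize}
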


Metric $g=\begin{pmatrix} a & b \\ b & c \end{pmatrix}$
defines scalar product $\langle \cdot, \cdot \rangle_g$ on vector fields on $\mathcal{N}$, and we can determine the conformal coordinates after comparing $\langle \frac{\partial}{\partial U}, \frac{\partial}{\partial V} \rangle_g$ in coordinates $s,t$ and $U,V$. In latter coordinates, it equals zero by definition, while for coordinates $s,t$ it becomes an equation. This condition together with $\langle \frac{\partial}{\partial V}, \frac{\partial}{\partial V}\rangle_g=
\langle \frac{\partial}{\partial U},\frac{\partial}{\partial U} \rangle_g$ results in the system of two real equations, which we call the Real Beltrami equations.
\begin{align}
  U_s=\rho (b V_s-aV_t),\\
  U_t=-\rho (bV_t-cV_s).  
\end{align}
This system naturally arises in the context of the dimer model as we are going to see in this Chapter.
Further, it is equivalent to one complex equation for a complex coordinate $u$ on $\mathcal{N}$, in terms of coordinates $u=U+iV, \bar{u}=U-iV$, the real Beltrami equations result in the usual Beltrami equation, used for example in \cite{KP:plane}.
In terms of $u_s,u_t$, it is the following equation,
\bb
\frac{\bar{\partial} u}{\partial u}=\frac{ \frac{1}{2} (u_s+i u_t)}{\frac{1}{2}(u_s-i u_t)}=\frac{1}{\overline{\mu_\sigma}}.
\label{Beltrami}
\ee
where the Beltrami coefficient $\mu_\sigma$ is given by
\bb
\mu:=\frac{\sigma_{ss}-\sigma_{tt}+2 i \sigma_{st}}{\sigma_{ss}+\sigma_{tt}-2\sqrt{\sigma_{ss}\sigma_{tt} -\sigma_{st}^2}}.
\ee
Also, $\rho(U,V)=\sqrt{\sigma_{ss}+\sigma_{tt}-2\sqrt{\sigma_{ss}\sigma_{tt}}}$.

Existence of a solution for this equation follows from the strict convexity of $\sigma$ on $\mathring{\mathcal{N}}$, or in terms of Beltrami coefficient $|\mu_\sigma|<1$. 
It is worth noting that the Beltrami equation for surface tension corresponding to the dimer model on hexagonal lattice is equivalent to the complex Burgers equation studied in \cite{KO:burgers}.

\begin{proof}[Proof of Theorem \ref{Thm:Belt}]
    
Let us first find the Beltrami equation for real coordinates $(U,V)$, let us start with $\langle \frac{\partial}{\partial U}, \frac{\partial}{\partial V} \rangle_g=0$, which is just the definition of conformal coordinates.
In the $(s,t)$ coordinates, it is a non-trivial identity. First, the vector fields in $(s,t)$ coordinates are
\begin{equation*}
   \frac{\partial }{\partial U}=s_U\frac{\partial}{\partial s}+t_U\frac{\partial}{\partial t},
\end{equation*}
\begin{equation}
    \frac{\partial }{\partial V}=\frac{\partial s}{\partial V}\frac{\partial}{\partial s}+\frac{\partial t}{\partial V}\frac{\partial}{\partial t}.
\end{equation}

Their scalar product is

\bb
\sigma_{ss} s_U s_V+\sigma_{st}(s_V t_U+s_U t_V)+ \sigma_{tt} t_U t_V=0.
\label{scalar_prod}
\ee
Let us express \ref{scalar_prod} in terms of derivatives of $U,V$ using the Jacobian matrix,
\begin{equation}
\mathcal{J}=
\begin{pmatrix}
\frac{\partial s}{\partial U} & \frac{\partial s}{\partial V} \\
\frac{\partial t}{\partial U} & \frac{\partial t}{\partial V}.
\label{jacobian}
\end{pmatrix}
=
\begin{pmatrix}
    \frac{\partial U}{\partial s} & \frac{\partial U}{\partial t} \\
\frac{\partial V}{\partial s} & \frac{\partial V}{\partial t}
\end{pmatrix} ^{-1}
\end{equation}
Using $\mathcal{J}$, we express the desired derivatives as

\bb
s_u=V_t/\det \mathcal{J},\\
s_v= -U_t /\det \mathcal{J},\\
t_u=-V_s/\det \mathcal{J},\\
t_v=U_s/\det \mathcal{J}.
\ee

After substitution of derivatives to \ref{scalar_prod} and multiplying by $\det \mathcal{J}$ we obtain
\bb
U_t (V_s \sigma_{st} - V_t \sigma_{ss})+U_s(V_t \sigma_{st}-V_s \sigma_{tt})=0.
\ee
From this equation, we see that $U_s$ and $U_t$ are proportional and thus there is such a $\rho$ that

\begin{equation*}
    U_s=\rho (V_s \sigma_{st} -V_t \sigma_{ss}),\\
\end{equation*}
\begin{equation}
    U_t= -\rho (V_t\sigma_{st} - V_s \sigma_{tt}).
    \label{belt}
\end{equation}
We also have the equation for the diagonal scalar products,
\bb
\langle \frac{\partial }{\partial U},  \frac{\partial }{\partial U} \rangle_g =\langle \frac{\partial }{\partial V},  \frac{\partial }{\partial V} \rangle_g.
\ee
Or more precisely,

\begin{multline}
    \langle \frac{\partial }{\partial U},  \frac{\partial }{\partial U} \rangle_g= s_U^2 \sigma_{ss}+t_U^2\sigma_{tt}+2\sigma_{st}s_U t_U,\\
    \langle \frac{\partial }{\partial V},  \frac{\partial }{\partial V} \rangle_g= s_V^2 \sigma_{ss}+t_V^2\sigma_{tt}+2\sigma_{st}(s_V t_V). 
    \label{scalar_uu}
\end{multline}
Let us substitute (\ref{belt}) into (\ref{scalar_uu}) to find $\rho$. We need first to express the derivatives of $s$ and $t$ in terms of derivatives of $U$ and $V$, and then use equations (\ref{belt}).

\begin{multline}
    \rho=\langle \frac{\partial}{\partial U}, \frac{\partial}{\partial U} \rangle=\frac{1}{\det \mathcal{J}^2}(V_t^2a+2b(-V_s V_t)+cV_s^2),\\
\rho=\langle \frac{\partial}{\partial V}, \frac{\partial}{\partial V} \rangle=\frac{1}{\det \mathcal{J}^2} (aU_t^2+c U_s^2+2b (-U_s U_t)) 
\end{multline}

\begin{multline}
    V_t^2 a-2b V_s V_t+cV_s^2,\\   
    a \rho^2 (b V_t-c V_s)^2+c\rho^2(bV_s-aV_t)^2+2b\rho^2(b V_s-aV_t)(bV_t-cV_s).
\end{multline}
We see from the comparison of coefficients in front of $V_t^2$ in both equations
\begin{multline}
a=\rho^2(a b^2+ca^2-2bab)\\
1=\rho^2(ca -b^2).
\end{multline}
Similarly, for the coefficient in front of $V_s^2$,
\begin{multline}
    c=\rho^2(ac^2+cb^2-2bbc)\\
    1=\rho^2(ac-b^2).
\end{multline}
Finally, the coefficient in front of $V_t \cdot V_s$,

\begin{multline}
    -2b=-2a\rho^2bc-2c\rho^2ba+2b\rho^2(b^2+ac)\\
    1=\rho^2ca+\rho^2pca-\rho^2(b^2+ca)\\
    1=\rho^2(ca-b^2).
\end{multline}

Thus, we deduce that $\rho=\sqrt{H_\sigma}:=\sqrt{\sigma_{ss}\sigma_{tt}-\sigma_{st}^2}$.

In order to obtain the complex Beltrami equation, assume that we have a solution $U,V$ of the real Beltrami equation \eqref{belt}.
The let us look at their complex combination $u:=U+iV$ ($\bar{u}:=U-iV$), and the Wirtinger derivatives with respect to $u, \bar{u}$,

Using the Wirtinger derivatives, we have the following rules of computation of derivatives of a function $w$:

\bb
w_u=\frac{1}{2}(w_U-iw_V),
~
w_{\bar{u}}=\frac{1}{2}(w_U+iw_V)
\ee
and
\bb
w_U=w_u+w_{\bar{u}}, ~
w_V=\frac{w_{\bar{u}}-w_u}{i}.
\ee

Now, suppose $U,V$ satisfy the Beltrami equations, then write 
\bb
2w_{\bar{u}}\sqrt{ac-b^2}=(b-ia+i\sqrt{ac-b^2})V_x+(c-ib-\sqrt{ac-b^2})V_y
\ee
and
\bb
2w_z\sqrt{ac-b^2}=(b+ia+i\sqrt{ac-b^2})V_x+(c+ib+\sqrt{ac-b^2})V_y
\ee

Then, dividing the two equations and calculations, which we omit, we get an equivalent Beltrami equation in holomorphic coordinates,
\bb
\frac{w_{\bar{z}}}{w_z}=\frac{c-a-2ib}{c+a+2\sqrt{ac-b^2}}.\label{Beltrami1}
\ee
Or in other words,
\bb
w_{\bar{z}}=\mu w_z,
\ee where $\mu=\frac{c-a-2ib}{c+a+2\sqrt{ac-b^2}}$ is the Beltrami coefficient.

Thus, we derived the Belrami equation for the conformal coordinates in two forms. Then, by the theory of elliptic PDE \cite{kari}, Beltrami equation admits a solution as long as $|\mu|<1$, which is the condition of existence of conformal coordinates. In the dimer model, it follows from strict convexity of $\sigma$\cite{ADPZ}. Therefore, inside the liquid region, where the gradient $\nabla\mathfrak{h}$ is in the interior of the Newton polygon, we have the existence of conformal coordinates $z$.

\end{proof}

\bibliographystyle{alpha}
\bibliography{biblio}

@article {KOS,
    AUTHOR = {Kenyon, Richard and Okounkov, Andrei and Sheffield, Scott},
     TITLE = {Dimers and amoebae},
   JOURNAL = {Ann. of Math. (2)},
  FJOURNAL = {Annals of Mathematics. Second Series},
    VOLUME = {163},
      YEAR = {2006},
    NUMBER = {3},
     PAGES = {1019--1056},
      ISSN = {0003-486X},
   MRCLASS = {60D05 (82B26 82B41)},
  MRNUMBER = {2215138},
MRREVIEWER = {Michael Pr\"{a}hofer},
       DOI = {10.4007/annals.2006.163.1019},
       URL = {https://doi.org/10.4007/annals.2006.163.1019},
}

@article{CKP,
author = {Cohn, Henry and Kenyon, Richard and Propp, James},
year = {2000},
month = {08},
pages = {},
title = {A variational principle for domino tilings},
volume = {14},
journal = {Journal of the American Mathematical Society},
doi = {10.1090/S0894-0347-00-00355-6}
}

@article{T,
author = {Thurston, William},
year = {1990},
month = {10},
pages = {},
title = {Conway's Tiling Groups},
volume = {97},
journal = {American Mathematical Monthly},
doi = {10.2307/2324578}
}

@article{KO:burgers,
author = {Kenyon, Richard and Okounkov, Andrei},
year = {2005},
month = {08},
pages = {},
title = {Limit shapes and the complex Burgers equation},
volume = {199},
journal = {Acta Math.},
doi = {10.1007/s11511-007-0021-0}
}

@article{ADPZ,
author = {Astala, Kari and Duse, Erik and Prause, István and Zhong, Xiao},
title = {Dimer models and conformal structures},
journal = {Communications on Pure and Applied Mathematics},
year = {2026},
volume = {79},
number = {2},
pages = {340-446},
doi = {https://doi.org/10.1002/cpa.70014},
url = {https://onlinelibrary.wiley.com/doi/abs/10.1002/cpa.70014},
eprint = {https://onlinelibrary.wiley.com/doi/pdf/10.1002/cpa.70014},
}

@article{D2,
author = {Francesco, Philippe and Guitter, Emmanuel},
year = {2018},
month = {03},
pages = {},
title = {Arctic curves for paths with arbitrary starting points: A tangent method approach},
volume = {51},
journal = {Journal of Physics A: Mathematical and Theoretical},
doi = {10.1088/1751-8121/aad028}
}

@book {H,
    AUTHOR = {Hatcher, Allen},
     TITLE = {Algebraic topology},
 PUBLISHER = {Cambridge University Press, Cambridge},
      YEAR = {2002},
     PAGES = {xii+544},
      ISBN = {0-521-79160-X; 0-521-79540-0},
   MRCLASS = {55-01 (55-00)},
  MRNUMBER = {1867354},
MRREVIEWER = {Donald W. Kahn},
}

@incollection {F,
    AUTHOR = {Fournier, J. C.},
     TITLE = {Pavage des figures planes sans trous par des dominos:
              fondement graphique de l'algorithme de {T}hurston,
              parall\'{e}lisation, unicit\'{e} et d\'{e}composition},
      NOTE = {Selected papers from the ``GASCOM '94'' (Talence, 1994) and
              the ``Polyominoes and Tilings'' (Toulouse, 1994) Workshops},
   JOURNAL = {Theoret. Comput. Sci.},
  FJOURNAL = {Theoretical Computer Science},
    VOLUME = {159},
      YEAR = {1996},
    NUMBER = {1},
     PAGES = {105--128},
      ISSN = {0304-3975},
   MRCLASS = {52C20 (05B45 68R05)},
  MRNUMBER = {1398693},
MRREVIEWER = {Richard Kenyon},
       DOI = {10.1016/0304-3975(95)00204-9},
       URL = {https://doi.org/10.1016/0304-3975(95)00204-9},
}

@article {BG,
    AUTHOR = {Bufetov, Alexey and Gorin, Vadim},
     TITLE = {Fourier transform on high-dimensional unitary groups with
              applications to random tilings},
   JOURNAL = {Duke Math. J.},
  FJOURNAL = {Duke Mathematical Journal},
    VOLUME = {168},
      YEAR = {2019},
    NUMBER = {13},
     PAGES = {2559--2649},
      ISSN = {0012-7094},
   MRCLASS = {60B15 (22E65 60K35)},
  MRNUMBER = {4007600},
       DOI = {10.1215/00127094-2019-0023},
       URL = {https://doi.org/10.1215/00127094-2019-0023},
}

@article{Kuchumov:dominoes,
author = {Kuchumov, Nikolai},
year = {2021},
month = {10},
pages = {32},
title = {A variational principle for domino tilings of multiply-connected domains},
journal = {preprint arXiv:2110.06896}
}

@article{JPS,
author = {Jockusch, William and Propp, James and Shor, Peter},
year = {1998},
month = {02},
pages = {},
title = {Random Domino Tilings and the Arctic Circle Theorem}
}

@article{BK,
author = {Bufetov, Alexey and Knizel, Alisa},
year = {2016},
month = {04},
pages = {},
title = {Asymptotics of random domino tilings of rectangular Aztec diamonds},
volume = {54},
journal = {Annales de l'institut Henri Poincare (B) Probability and Statistics},
doi = {10.1214/17-AIHP838}
}

@article{Colomo-Sportiello,
author = {Colomo, Filippo and Sportiello, Andrea},
year = {2016},
month = {09},
pages = {},
title = {Arctic Curves of the Six-Vertex Model on Generic Domains: The Tangent Method},
volume = {164},
journal = {Journal of Statistical Physics},
doi = {10.1007/s10955-016-1590-0}
}

@article {BF,
    AUTHOR = {Borodin, Alexei and Ferrari, Patrik L.},
     TITLE = {Large time asymptotics of growth models on space-like paths.
              {I}. {P}ush{ASEP}},
   JOURNAL = {Electron. J. Probab.},
  FJOURNAL = {Electronic Journal of Probability},
    VOLUME = {13},
      YEAR = {2008},
     PAGES = {no. 50, 1380--1418},
      ISSN = {1083-6489},
   MRCLASS = {82C22 (60K35)},
  MRNUMBER = {2438811},
MRREVIEWER = {N.\ N.\ Ganikhodjaev},
       DOI = {10.1214/EJP.v13-541},
       URL = {https://doi.org/10.1214/EJP.v13-541},
}

@article {J,
    AUTHOR = {Johansson, Kurt},
     TITLE = {The arctic circle boundary and the {A}iry process},
   JOURNAL = {Ann. Probab.},
  FJOURNAL = {The Annals of Probability},
    VOLUME = {33},
      YEAR = {2005},
    NUMBER = {1},
     PAGES = {1--30},
      ISSN = {0091-1798,2168-894X},
   MRCLASS = {60K35 (15A52 33C90 52C20 82B20)},
  MRNUMBER = {2118857},
MRREVIEWER = {Thomas\ Polaski},
       DOI = {10.1214/009117904000000937},
       URL = {https://doi.org/10.1214/009117904000000937},
}

@article{Di_francesco_guitter_aztec,
author = {Francesco, Philippe and Lapa, Matthew},
year = {2017},
month = {11},
pages = {},
title = {Arctic Curves in path models from the Tangent Method},
volume = {51},
journal = {Journal of Physics A: Mathematical and Theoretical},
doi = {10.1088/1751-8121/aab3c0}
}

@article{KP:plane,
author = {Kenyon, Richard and  Prause, István},
year = {2022},
month = {10},
pages = {3003-3022},
title = {Gradient variational problems in $\mathbb{R}^2$},
volume = {16},
journal = {Duke Mathematical Journal}
}

@article{KP:new,
author = {Kenyon, Richard and Prause, István},
year = {2024},
month = {01},
pages = {},
title = {Limit shapes from harmonicity: dominos and the ﬁve vertex model},
volume = {57},
journal = {Journal of Physics A: Mathematical and Theoretical},
doi = {10.1088/1751-8121/ad17d7}
}

@article{Okounkov_Reshetikhin,
author = {Okounkov, Andrei and Reshetikhin, Nicolai},
year = {2001},
month = {08},
pages = {},
title = {Correlation function of Schur process with application to local geometry of a random 3-dimensional Young diagram},
volume = {16},
journal = {J. Amer. Math. Soc.},
doi = {10.1090/S0894-0347-03-00425-9}
}

@book{Spivak,
	author = {Spivak, Michael},
	title = {Comprehensive introduction to differential geometry. Vol.4, Chapter 9 addendum 1, Vol. 5 Chapter 19, Section 5},
	publisher={Harvard University Press},
        year = {1979},
}

@book{AKH,
    author = {N.I.Akhiezer},
    title = {Elements of the theory of elliptic functions, Chapter 3, Paragraph 14},
    publisher = {AMS vol 79},
    year = {1990}
}

@article{Propp_aztec,
    author = {James, Propp},
    title = {Generalized domino-shuffling},
    journal = {Theoretical Computer Science
303(2):267–301. Tilings of the Plane},
    year = {2003}
}

@article{korn,
    author = {Korn~A.},
    title = {Zwei Anwendungen der Methode der sukzessiven Anndherungen},
    journal = {Schwarz
 Festschrift,pp.215-229},
    year = {1914}
}

@article{Lichtenstein,
    author = {Lichtenstein~L.},
    title = {Zur Theorie der konformen Abbildung. Konforme Abbildung nichtana- lytischer, singularitdtenfreier Fldchenstficke auf ebene Gebiete},
    journal ={Bull. Int. del'Acad.
 Sci. Cracovie, ser. A pp.~192-217},
    year = {1916}
}

@article{Bobenko_dimers,
    author = {Bobenko~A., Bobenko~N., Suris~Yuri.},
    title ={Dimers and M-Curves},
    journal = {preprint, arxiv-2402.08798},
    year = {2024} 
}

@article{Sunil_doubly,
author = {Chhita, Sunil and Johansson, Kurt},
year = {2014},
month = {10},
pages = {},
title = {Domino statistics of the two-periodic Aztec diamond},
volume = {294},
journal = {Advances in Mathematics},
doi = {10.1016/j.aim.2016.02.025}
}

@book{kari,
author = {Astala, Kari and Iwaniec, Tadeusz and Martin, Gaven},
year = {2009},
month = {01},
publisher={Princeton University Press},
title = {Elliptic Partial Differential Equations and Quasiconformal Mappings in the Plane},
journal = {Princeton University Press: Princeton Mathematical Series}
}

@article{Kasteleyn,
    author = {Kasteleyn, P.M.},
    title = {The Statistics of Dimers on a Lattice},
    journal ={Physica, 27, 1209-1225.},
    year ={1961} 
}

@article{BCdT:ellipt,
  author = {Boutillier, C\'edric and Cimasoni, David and de Tilière, Béatrice},
  title = {Elliptic dimer models on minimal graphs and genus 1 Harnack curves},
  journal = {Communications in Mathematical Physics},
  doi = {10.1007/s00220-022-04612-6},
  url = {https://doi.org/10.1007/s00220-022-04612-6},
  year = 2023,
  month = feb,
  volume = 400,
  number = 2,
  pages = {1071--1136},
  archiveprefix = {arXiv},
  eprint = {2007.14699},
  primaryclass = {math.PR},
  hal = {hal-02908609},
  month_numeric = 2,
}

@misc{boutillier2024focks,
      title={Fock's dimer model on the Aztec diamond}, 
      author={Cédric Boutillier and Béatrice de Tilière},
      year={2024},
      eprint={2405.20284},
      archivePrefix={arXiv},
    }

@article{TF,
    author = {Harold N.~V. Temperley and Michael~E. Fisher},
    title = {Dimer problem in statistical mechanics-an exact result},
    journal = {Philosophical magazine},
    year = {6(68):1061--1063, 1961}
}

@article{Borodin-Berggren,
    author ={ Tomas~Berggren and Alexei~Borodin},
    title ={Geometry of doubly-periodic Aztec dimer model},
    journal ={COMMUNICATIONS OF THE
AMERICAN MATHEMATICAL SOCIETY
Volume 5, Pages 475–570 (September 5, 2025)
https://doi.org/10.1090/cams/52},
    year ={2025(preprint version 2023)} 
}

@book{akhiezer,
    author = {Naum Ilich Akhiezer},
    title = {Elements of the theory of elliptic functions},
    publisher = {American mathematical society},
    year = {1990} 
}

\end{document}